\newtheorem{Theorem}{Theorem}[section]
\newtheorem{Proposition}{Proposition}[section]
\newtheorem{Lemma}{Lemma}[section]
\newtheorem{Remark}{Remark}[section]
\newcommand{\ceil}[1]{\left\lceil #1 \right\rceil}
\newcommand{\notcheckmark}{{$\surd$}\textsuperscript{\textcolor{black}{\kern-0.35em{\bf--}}}}
\newcommand{\selfcheckmark}{{$\surd$}\textsuperscript{\textcolor{black}{}}}
\begin{document}

\title{Calibration Prediction Interval for Non-parametric Regression and Neural Networks}

\author{\name Kejin Wu \email kwu8@luc.edu\\
       \addr Department of Mathematics and Statistics\\
       Loyola University Chicago\\
       Chicago, IL 60660, USA
       \AND
       \name Dimitris N. Politis \email dpolitis@ucsd.edu\\
       \addr Department of Mathematics and Halicio\u{g}lu Data Science Institute\\
       University of California San Diego\\
       La Jolla, CA 92093, USA}

\editor{My editor}

\maketitle

\begin{abstract}
Accurate conditional prediction in the regression setting plays an important role in many real-world problems. Typically, a point prediction often falls short since no attempt is made to quantify the prediction accuracy. Classically, under the normality and linearity assumptions, the Prediction Interval (PI) for the response variable can be determined routinely based on the $t$ distribution. Unfortunately, these two assumptions are rarely met in practice. To fully avoid these two conditions, we develop a so-called calibration PI (cPI) which leverages estimations by Deep Neural Networks (DNN) or kernel methods. Moreover, the cPI can be easily adjusted to capture the estimation variability within the prediction procedure, which is a crucial error source often ignored in practice. Under regular assumptions, we verify that our cPI has an asymptotically valid coverage rate. We also demonstrate that cPI based on the kernel method ensures a coverage rate with a high probability when the sample size is large. Besides, with several conditions, the cPI based on DNN works even with finite samples. A comprehensive simulation study supports the usefulness of cPI, and the convincing performance of cPI with a short sample is confirmed with two empirical datasets. 
\end{abstract}

\begin{keywords}
  prediction interval, deep neural networks, kernel estimation, uncertainty quantification, coverage guarantee
\end{keywords}

\section{Introduction}\label{Sec:Intro}
Accurate prediction conditional on a regressor plays an important role in many real-world applications. Throughout this work, the conditional prediction stands for a complete inference in the regression context, i.e., the point prediction and the corresponding Prediction Interval (PI) for a response variable $Y\in \mathbb{R}$ given some explanatory variable $X_f\in \mathbb{R}^d$. Ideally, if the Cumulative Distribution Function (CDF) $F_{Y|X_f}$ is known, we have many candidates for the point prediction, e.g., mean and median which are optimal w.r.t. $L^2$ and $L^1$ loss criteria, respectively. For PI, we also have various choices, e.g., PI with $1-\alpha$ confidence level symmetric around the mean or one having a minimal length. Unfortunately, $F_{Y|X_f}$ is usually unknown in practice. Therefore, we need to estimate the oracle PI and ensure that the resulting PI is at least asymptotically valid. The asymptotic validity means the coverage rate of PI converges to the nominal level as the sample size approaches infinity. Typically, under the normality and linearity assumptions, the PI for $Y$ possesses an analytic form that involves the $t$ distribution and an appropriate standard error term. However, these two assumptions eliminate the possibility of data being non-linear and non-normal. 

We first propose a non-parametric framework to build PI so that the linearity condition can be eluded. For many real-world problems, it is reasonable to assume that the response variable $Y$ has a normal distribution given corresponding $X_f$ human heights, test scores, etc. In this special case, a $1-\alpha$ oracle symmetric PI can be approximated by
\begin{equation}\label{Eq:PInormal}
    \left[\hat{\mu}_{Y|X_f} - z_{\alpha/2}\sqrt{\hat{\kappa}_{Y|X_f} -  \hat{\mu}^2_{Y|X_f}}~,~ \hat{\mu}_{Y|X_f} + z_{\alpha/2}\sqrt{\hat{\kappa}_{Y|X_f} -  \hat{\mu}^2_{Y|X_f}}  \right];
\end{equation}
$z_{\alpha/2}$ is the right-tail $\alpha/2$ quantile value of a standard normal random variable; $\hat{\kappa}_{Y|X_f}$ is an estimator of $\mathbb{E}(Y^2|X_f)$; $\hat{\mu}_{Y|X_f}$ is an estimator of $\mathbb{E}(Y|X_f)$. We should notice that there is no parametric form being assumed about the underlying data-generating model, so PI (\ref{Eq:PInormal}) is compatible with linear/non-linear data. Moreover, it is asymptotically valid if $\hat{\kappa}_{Y|X_f}$ and $\hat{\mu}_{Y|X_f}$ are consistent estimators under the normality assumption. Throughout this paper, we take PI (\ref{Eq:PInormal}) to be the benchmark method.

Another non-parametric way to construct PI relies on $\widehat{F}_{Y|X_f}(\cdot)$ which is an estimation of the function $F_{Y|X_f}(\cdot)$. Then, a PI can be constructed as follows:
\begin{equation}\label{Eq:QPI}
    \left[ \widehat{F}^{-1}_{Y|X_f}(\alpha/2) ; \widehat{F}^{-1}_{Y|X_f}(1-\alpha/2)\right];
\end{equation}
The PI (\ref{Eq:QPI}) can be derived from the oracle equal-tailed PI with the true CDF $F_{Y|X_f}(\cdot)$ being replaced by the estimated one $\widehat{F}_{Y|X_f}(\cdot)$. In case that $\widehat{F}_{Y|X_f}(\cdot)$ is not continuous and strictly increasing w.r.t. $y$ so that the quantiles are not well-defined, we define $\widehat{F}^{-1}_{Y|X_f}(\alpha) = \inf\{y: \widehat{F}_{Y|X_f}(y) \geq \alpha \}$. Under the assumption that $\widehat{F}_{Y|X_f}(\cdot)$ converges to $F_{Y|X_f}(\cdot)$ consistently at $\alpha$ and $1-\alpha$ for all $X_f = x_f$ and $y$ , it is easy to verify PI (\ref{Eq:QPI}) is asymptotically valid with a $1-\alpha$ confidence level when $F_{Y|X_f}$ is continuous at $F^{-1}_{Y|X_f}(\alpha/2)$ and $F^{-1}_{Y|X_f}(1-\alpha/2)$. A classical candidate for $\widehat{F}_{Y|X_f}(\cdot)$ is based on kernel estimation. Then, linearity and normality assumptions are not necessary anymore. In addition, the quantile regression approach can be an alternative that suffers less from the curse of dimensionality compared to kernel methods when $d$ is large. 

Unfortunately, PI (\ref{Eq:QPI}) usually undercovers $Y$ given $X_f$ inevitably with finite samples since the estimation variability inherent from $\widehat{F}_{Y|X_f}(\cdot)$ is not captured. By our simulation studies, it turns out that the PI (\ref{Eq:PInormal}) undercovers $Y$ even with the normality assumption and a large sample size. This finding highlights the necessity of capturing estimation variability in the PI; otherwise, undercoverage issues will prevail. To correct the undercoverage of PI due to uncaptured estimation uncertainties, \cite{politis2015model} defined a so-called \textit{pertinent} PI (PPI) which can capture the estimation variability in finite sample cases, but did not provide the coverage guarantee. Later, \cite{wang2021model} developed the PPI with standard CDF estimators in the context of regression. It was revealed that the PPI is more accurate and stable in contrast to PI (\ref{Eq:QPI}) and conformal prediction of \cite{lei2018distribution}; \cite{wu2024deep} further constructed PPI for regression with Deep Neural Networks (DNN) and showed it is superior than other two deep generative methods \cite{zhou2023deep} and \cite{liu2021wasserstein}. 

According to folk wisdom and empirical/theoretical evidence, DNN estimators suffer less from the curse of dimensionality than kernel estimators. Thus, it is appealing to utilize the DNN in the prediction procedure.  Nevertheless, the above-mentioned DNN-based PI depends on a tuning parameter severely, which renders unavoidable complexity in practice. In this paper, we propose a calibration approach to make prediction intervals with either DNN or kernel-based estimators. We should mention that the word ``calibration'' in our paper is different from the process of adjusting the predicted probabilities of a model so that they better reflect the true likelihood of an event; see \cite{guo2017calibration} for a related discussion. More precisely, our calibration method stands for a specifically designed manipulation to extend the PI to capture the estimation variability so that the undercoverage issue can be compensated for.

For the DNN-based approach, we consider the standard fully connected forward DNN with ReLU activation functions (abbreviated to DNN hereafter) whose structure is determined by the dimension of input, the depth of the hidden layer and the width for each layer. Our calibration method can be incorporated with DNN without a crucial parameter-tuning step. Although the number of grid points is a required hyperparameter in the setup procedure of calibration PI (cPI), the choice of this number does not affect the performance of PI too much compared to the tuning parameter required in deep generative methods; this fact is revealed by real-data studies later. In addition, this calibration approach is easily adaptive to parallel computation so that the computational time is not necessarily longer than building a single PI (\ref{Eq:PInormal}) or (\ref{Eq:QPI}). In short, the theoretical foundation of the DNN-based cPI hinges on the universal approximation ability of DNN. 

Routinely, people treat DNN as a ``black box'' that takes some inputs and returns desired outputs. The uncertainty quantification of outputs is generally ignored. Instead of leaning on the generative idea applied in the existing work, our cPI provides a simple way to quantify the prediction uncertainties and this procedure can be extended to scenarios with other types of neural networks. Moreover, the cPI can compensate for the estimation variability, which is usually not captured in practice. We show that the cPI with DNN estimators is asymptotically valid. In addition, it can guarantee the coverage rate even for a finite-sample case under some conditions. 

We can also combine the calibration approach with kernel-based estimators. The main idea is that we first estimate $f_{Y|X}$ which is the conditional density of $Y$ given $X$ by $\widehat{f}_{Y|X}$ and then the cPI is built based on some specific calibration approach and the CDF estimation which is obtained by taking the integration on $\widehat{f}_{Y|X}$. Theoretically, the asymptotic validity of cPI is trivial since $\widehat{f}_{Y|X}$ is typically consistent with the true conditional density function pointwise. Furthermore, we show that the cPI with kernel estimator can guarantee the coverage rate when the sample size is larger than a constant $N$ with a high probability. This non-asymptotic result is more delicate than the asymptotic validity property.

The paper is organized as follows. We introduce the methodology and algorithm for our cPI with DNN in \cref{Sec:CaliPIMethod}. The variant with kernel estimators is discussed in \cref{Sec:CaliPIMethodKernel}. Different calibration approaches are discussed in \cref{Sec:DiffCali}. The theoretical validation of our methods is presented in \cref{Sec:Theorey}. Simulation and empirical data studies are conducted in \cref{Sec:Simulation} and \cref{Sec:RealData}, respectively. \cref{Sec:Conclusion} concludes this paper.

\subsection{Our Contributions}
Our main contributions are as follows:
\begin{itemize}
    \item[1.] We propose a calibration idea to capture the estimation variability, which is usually ignored by users when a Prediction Interval (PI) is built. It is well known that the PI has a prevailing undercoverage issue in practice. Our calibration method can compensate for the empirical undercoverage and it can be incorporated with PI based on various estimators, such as DNN and kernel estimators. As different from other approaches to make PI in a deep generative way, our method can be performed simply without sophisticated tuning efforts. Also, our method can be deployed in parallel, so the calibration process does not bring a significant increase in the computation time.
    \item[2.] We develop the theory to show that our calibration Prediction Intervals (cPI) with DNN and kernel estimators are both asymptotically valid. Moreover, we show that the cPI based on kernel estimators can guarantee the coverage rate with a high probability when the sample size is larger. This property is more delicate than the asymptotic validity property. Due to the universal approximation property of DNN, under several conditions, we further show that the cPI based on DNN can return a coverage rate larger than or equal to the nominal confidence level even for a finite sample size. As far as we know, this is the first attempt to discuss the finite sample coverage property in this context. 
    \item[3.] We propose a new calibration method to correct the monotonicity of CDF estimators. From simulation and real data analyses, this new correction methodology works better than other variants according to the subsequent performance of the resulting PI. This correction method stands for its own interests in the estimation of CDF and its correction.
    \item[4.] We verify the importance of capturing the estimation variability in the procedure to build PI by simulation studies. Otherwise, the standard PI (\ref{Eq:PInormal}) is spoiled by undercoverage even when the data is normal. Also, we find the cPI with DNN tends to work better than the cPI with kernel estimators in simulations and real data analysis; this can be seen as new evidence to support the folk wisdom that DNN estimators suffer less from the curse of dimensionality than kernel estimators.

\end{itemize}

\subsection{Related Works}
From the perspective of building a prediction interval with neural networks estimators, there are various approaches. For example,  (1) \cite{liu2021wasserstein, zhou2023deep} considered deep generative approaches to estimate the whole distribution of future response variable conditional on the new predictor variables. Their methods can return asymptotically valid prediction intervals since all methods can estimate the conditional distribution of future response variable consistently; (2) If the training sample size is ample, \cite{wu2024scalable} proposed a subagging estimator of DNN and then built the PI under the assumption of additive structure of the regression model; (3) Within a residual-based framework, \cite{padilla2024confidence} considered the conditional variance estimation of future response variable by DNN. Then, the PI can be built in a similar format of PI (\ref{Eq:PInormal}); (4) Instead of relying on subagging or bootstrapping, \cite{mancini2020prediction} considered determining the PI in an ensemble way motivated by a so-called Extra-trees algorithm; \cite{simhayev2020piven} also took the ensemble idea with their specific DNN structure; (5) In a model-free prediction principle, \cite{politis2015model} proposed a so-called pertinent PI (PPI) and then \cite{wu2024deep} achieved this PPI with DNN; (6) There are many works in which the PI based on DNN are taken as some kind of output of DNN and then returned directly after training DNN with specific loss function; see \cite{khosravi2010lower,pearce2018high,tagasovska2019single} for references. Neither of these approaches discusses the coverage property of PI for the finite sample case. Some of them also require a complicated procedure or rely crucially on tuning parameters. On the other hand, our calibration method is simple to deploy with any estimators. Once these estimators are consistent, our calibration PI is asymptotically valid. Under some conditions, our calibration PI can even guarantee the coverage for finite sample cases. 

Building a PI based on non-parametric estimators is a standard task in Statistics. Participants can always determine the lower and upper bounds of a PI by approximating the conditional quantile of the future response variable; see \cite{chaudhuri1991nonparametric,takeuchi2006nonparametric} for some references; see \cite{koenker2017quantile} for a comprehensive review of quantile regression. In this work at hand, our calibration method can be understood as a reverse approach of building a quantile-regression-based PI. In short, we fix some values with unknown quantile levels and then calibrate the PI to cover $Y$ given $X_f$. Due to such a reverse approach, our cPI can relieve the adverse effects of estimation variability on the coverage rate of the prediction interval.

For other general methods to build PI, such as Bayesian technique and Conformal prediction strategy, we refer readers to the review paper of \cite{tian2022methods}. In this work at hand, we intend to provide a simple and unified procedure to capture the estimation variability in PI, especially for DNN and kernel estimators. Compared to most relevant PIs, our cPI has an overall better performance regarding to the coverage rate and length.

\section{Calibration PI with DNN}\label{Sec:CaliPIMethod}
Throughout this paper, we focus on the prediction inference of the regression model $Y = G(X,\epsilon)$; $\epsilon$ is the error term with distribution $P_{\epsilon}$ and density $f_{\epsilon}$. We assume the underlying probability space is $(\Omega,\mathcal{F},\mathbb{P})$. Furthermore, we assume:
\begin{itemize}
    \item[A1] The regression model $G(X,\epsilon)$ is a uniformly continuous function w.r.t. $X\in \mathbb{R}^d$ and $\epsilon$ and $Y\in \mathbb{R}$ belongs to $L^2(\mathbb{P})$; 
    \item[A2] $X$ and $\epsilon$ are independent. The error density $f_{\epsilon}$ is bounded;
    \item[A3] The conditional cumulative distribution $F_{Y|X}(y)$ is a continuous function w.r.t. $X$ for all points $y$ in $\mathcal{Y}$ which is the domain of $Y$.
\end{itemize}
\begin{Remark}
    In the context of estimation by DNN, the smoothness of the true functions affects the error rate of estimation. Thus, the above assumptions give the weakest condition regarding the smoothness of $G(\cdot,\cdot)$, $F_{Y|X}(y)$, etc. In practice, these functions may be smoother than what we assume here.
\end{Remark}

\subsection{Preliminary Analysis}
Before explaining the logic of our cPI with DNN, we consider the idea to determine the optimal $L^2$ point prediction of $Y$ given ``future'' $X_f$ as an example. Under A1, it is well known that $\mathbb{E}(Y-\mathbb{E}[Y|X])^2 \leq \mathbb{E}(Y-h(X))^2$ for any function $h:\mathcal{X}\to\mathbb{R}$ and the equality holds if and only if $h(X)=\mathbb{E}(Y|X)$. We call $\mathbb{E}(Y|X)$ the conditional expectation of $Y$ given $X$. More specifically, it is conditional on $\sigma(X)$, namely the $\sigma$-algebra generated by $X$. Thus, $\mathbb{E}(Y|X_f)$ is the desired optimal $L^2$ prediction given $X_f$. Revealed by Radon–Nikodym theorem, the conditional expectation always exists and is unique almost everywhere, i.e., $\mathbb{E}(Y|X)$ is actually an equivalence class of random variables that are $\sigma(X)$-measurable and satisfy $\int_A \mathbb{E}(Y|X) d \mathbb{P}=\int_A Y d \mathbb{P}$, for all $A \in \sigma(X)$. For simplification, we think of $\mathbb{E}(Y|X)$ as a single function rather than a class of functions. To find the conditional expectation of $G(X,\epsilon)$ in our setting, we can solve the following optimization problem:

\begin{equation}\label{Eq:MSEtrue}
    H_0 = \arg\min_{H}\mathbb{E}(G(X,\epsilon) - H(X))^2;
\end{equation}
where the optimizer $H_0$ is searched from all measurable functions $H: = \mathcal{X} \to \mathbb{R}$. We can prove that $H_0(X) = \mathbb{E}(G(X,\epsilon)|X)$. Going one step further, we can conclude that $H_0(x) = \mathbb{E}(G(x,\epsilon))$ in which the expectation on the r.h.s. is only w.r.t. $\epsilon$. Moreover, $ H_0(x) = \arg\min_{H}\mathbb{E}(G(x,\epsilon) - H(x))^2$. In addition, the continuity of $H_0(x)$ can be verified under A2. We summarize these discussions into \cref{Lemma:CDE}.
\begin{Lemma}\label{Lemma:CDE}
    Under A1 and A2, the optimizer in \cref{Eq:MSEtrue} is $H_0(X) = \mathbb{E}(G(X,\epsilon))$. Moreover, $H_0(x) = \mathbb{E}(G(x,\epsilon))$ is continuous w.r.t. the argument $x$. 
\end{Lemma}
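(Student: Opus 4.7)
The plan is to split the lemma into two parts and dispatch each via standard tools from $L^2$-projection theory and measure-theoretic probability. The strategy is: (i) recognize that the minimizer of the squared-error loss over all measurable functions is a version of the conditional expectation $\mathbb{E}(G(X,\epsilon)|X)$, and (ii) use the independence $X\perp\epsilon$ from A2 to ``freeze'' $X$ inside the expectation, producing the particular representative $x\mapsto \mathbb{E}(G(x,\epsilon))$. More concretely, I would first invoke the well-known $L^2$-projection fact (already cited in the excerpt just above \cref{Eq:MSEtrue}): for $Y\in L^2(\mathbb{P})$, any version of $\mathbb{E}(Y|X)$ uniquely minimizes $\mathbb{E}(Y-H(X))^2$ among measurable $H$. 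Applied to $Y = G(X,\epsilon)$, which lies in $L^2(\mathbb{P})$ by A1, this yields $H_0(X) = \mathbb{E}(G(X,\epsilon)|X)$ almost surely. Then I would invoke the substitution (``freezing'') lemma: when $X$ and $\epsilon$ are independent and $G$ is jointly measurable with $G(X,\epsilon)\in L^1(\mathbb{P})$, one has $\mathbb{E}(G(X,\epsilon)|X=x) = \mathbb{E}(G(x,\epsilon))$ for $P_X$-almost every $x$, where the right-hand expectation is only over $\epsilon\sim P_\epsilon$. Taking this as the canonical representative of the equivalence class delivers the first claim.

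For continuity, I would rely on uniform continuity of $G$ from A1 directly. Fix $\eta>0$; uniform continuity supplies a $\delta>0$ such that $|x-x'|<\delta$ implies $|G(x,\epsilon)-G(x',\epsilon)|<\eta$ for \emph{every} $\epsilon$. Then
\begin{equation*}
    |H_0(x)-H_0(x')| \;=\; \bigl|\mathbb{E}\bigl(G(x,\epsilon)-G(x',\epsilon)\bigr)\bigr| \;\leq\; \mathbb{E}\bigl|G(x,\epsilon)-G(x',\epsilon)\bigr| \;<\; \eta,
\end{equation*}
so $H_0$ is in fact uniformly continuous on $\mathbb{R}^d$. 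Note that the uniformity of the continuity (rather than mere pointwise continuity in $\epsilon$) is precisely what allows the bound inside the expectation to be pulled out independently of $\epsilon$, avoiding any appeal to dominated convergence or an ad hoc integrability condition.

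The main obstacle is conceptual rather than technical: the minimizer of \cref{Eq:MSEtrue} is only determined up to a $P_X$-null set, so one must justify that the \textbf{particular} function $x\mapsto\mathbb{E}(G(x,\epsilon))$ is both a valid representative of the minimizing class and continuous on all of $\mathbb{R}^d$. The first point rests on the substitution lemma together with A2; the second rests on the strong form of continuity assumed in A1. Neither step demands more than what is already hypothesized, so beyond correctly invoking these classical facts the proof requires no delicate estimation.
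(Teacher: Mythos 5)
Your proof is correct and takes essentially the same route as the paper: the "substitution lemma" you invoke for $\mathbb{E}(G(X,\epsilon)\mid X=x)=\mathbb{E}(G(x,\epsilon))$ is exactly what the paper establishes inline via a change of integration space and Fubini, and the continuity part is the same uniform-continuity argument. Your second step is actually cleaner than the paper's write-up — since $\mathbb{E}$ is taken against the probability measure $P_\epsilon$ (total mass 1), the bound $|H_0(x)-H_0(x')|\le\mathbb{E}|G(x,\epsilon)-G(x',\epsilon)|<\eta$ follows immediately without needing any bound on $f_\epsilon$.
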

In the context of estimation by DNN, we define $H_{\text{DNN}}: \mathcal{X}\to\mathbb{R}$ by the equation below: 
\begin{equation}\label{Eq:MSEDNN}
    H_{\text{DNN}} = \arg\min_{H_\theta\in\mathcal{F}_{\text{DNN}}}\mathbb{E}(G(X,\epsilon) - H_\theta(X))^2;
\end{equation}
$\theta$ represents all parameters involved in a DNN; $\mathcal{F}_{\text{DNN}}$ is a user-chosen class of DNNs whose complexity increases as the sample size becomes larger and larger. In practice, the population-level objective $\mathbb{E}(G(X,\epsilon) - H(X))^2$ is unattainable. Thus, we strive to find a consistent estimator to approximate $\mathbb{E}(Y|X_f)$ based on the empirical version of $H_{\text{DNN}}$, namely $\widehat{H}$ defined below:
$$
\widehat{H} = \arg\min_{H_\theta\in\mathcal{F}_{\text{DNN}}}\frac{1}{n}\sum_{i=1}^n(Y_i - H_\theta(X_i))^2;
$$
assuming we have data pairs $\{(Y_i,X_i)\}_{i=1}^n$ at hand. The convergence behavior of $\widehat{H}$ to $H_0$ has been widely studied in different scenarios; see \cite{jiao2023deep, farrell2021deep, nakada2020adaptive} for some references. In short, a non-asymptotic error bound can be built to guarantee $\|H_0 - \widehat{H}\|_{L^2}$ converges to 0 as the sample size $n\to \infty$ with a high probability; see \cref{Sec:Theorey} for more discussions. In other words, $\widehat{H}(X_f)$ can approximate the optimal $L^2$ point prediction of $Y$ given $X_f$ empirically.

\subsection{The Main Idea of Calibration PI}
A point prediction often falls short of being satisfactory. A prediction interval (PI) is necessary to quantify the prediction uncertainty similarly to the usage of CI in the estimation procedure. As exuded in \cref{Sec:Intro}, naive PIs (\ref{Eq:PInormal}) and (\ref{Eq:QPI}) are not desirable due to the strict normality assumption or the disability to capture the estimation uncertainties. Unlike the existing DNN-based generative methods in which some tuning parameters are crucial, we consider a self-contained calibration method to build the PI.

Our cPI is motivated by the universal estimation property of DNN and kernel methods on approximating smooth functions. In addition, \cref{Lemma:CDE} manifests that the conditional expectation can be a continuous function of the given independent variable $X$. Therefore, it is straightforward to use a DNN or a kernel estimator to approximate conditional expectations. In this spirit, we first define a set of random variables $Z_j := \mathbbm{1}(Y\leq q_j)$ for $j = 1,\ldots, g$; $\mathbbm{1}(\cdot)$ is the indicator function; $q_j$ is the $j$-th fixed point in the range of $Y$ assuming this range is a compact set. The collection of all $q_j$ is a set of grid points spanned in $\mathcal{Y}$ with equal distance $D$. The effects of the distance $D$ on the coverage rate will be discussed in \cref{Sec:Theorey} theoretically. In practice, $Z_{j,i}: = \mathbbm{1}(Y_i\leq q_j)$ are available for $j = 1,\ldots,g$ and $i = 1,\ldots, n$ once we observe samples $\{(Y_i,X_i)\}_{i=1}^n$. Correspondingly, we initialize $g$ number of DNNs, namely $\{H_{\theta,j}\}_{j=1}^g$, that belong to a pre-specified $\mathcal{F}_{\text{DNN}}$. For each $j$, we train $H_{\theta,j}$ based on below objective function:
\begin{equation}
    \widehat{H}_j = \arg\min_{H_{\theta,j}\in\mathcal{F}_{\text{DNN}}}\frac{1}{n}\sum_{i=1}^n(Z_{j,i} - H_{\theta,j}(X_i))^2;
\end{equation} 
obviously, $\widehat{H}_j$ is the empirical version of $H_{\text{DNN,j}}$ which minimizes $\mathbb{E}(Z_j - H_{\theta,j}(X))^2$. Additionally, the following lemma depicts the property of $H_j : =  \arg\min_{H}\mathbb{E}(Z_{j} - H(X))^2$:
\begin{Lemma}\label{Lemma:EstHk}
Under A1 to A3, $H_j(X) = \mathbb{E}(\mathbbm{1}(Y\leq q_j)|X) = \mathbb{E}(\mathbbm{1}(G(X,\epsilon)\leq q_j))$. Besides, $H_j(x)$ is a continuous function w.r.t. $x$. 
\end{Lemma}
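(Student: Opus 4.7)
The proof should closely mirror the structure of \cref{Lemma:CDE}, since the definition of $H_j$ differs from that of $H_0$ only in that the response $Y = G(X,\epsilon)$ is replaced by the bounded random variable $Z_j = \mathbbm{1}(Y \leq q_j)$. My plan is to establish the two equalities and the continuity claim in three separate steps, relying on assumptions A1--A3 in that order.

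First, I would show that $H_j(X) = \mathbb{E}(\mathbbm{1}(Y\leq q_j)\mid X)$. Since $Z_j$ is a bounded indicator, $Z_j \in L^2(\mathbb{P})$, so the optimization problem $\arg\min_{H} \mathbb{E}(Z_j - H(X))^2$ is the standard $L^2$ projection onto the closed subspace of $\sigma(X)$-measurable square-integrable functions. By the Hilbert space projection theorem, the unique minimizer (up to a $\mathbb{P}$-null set) is the conditional expectation $\mathbb{E}(Z_j\mid X)$. This is exactly the step that was carried out for $Y$ in the proof of \cref{Lemma:CDE}; here the argument only requires that $Z_j \in L^2(\mathbb{P})$, which is automatic.

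Second, to pass from $\mathbb{E}(\mathbbm{1}(G(X,\epsilon)\leq q_j)\mid X)$ to the unconditional expectation $\mathbb{E}(\mathbbm{1}(G(x,\epsilon)\leq q_j))$, I would invoke A2. Since $X$ and $\epsilon$ are independent, the standard disintegration / independence lemma yields, for any measurable $\varphi \colon \mathbb{R}^d \times \mathbb{R} \to \mathbb{R}$ such that $\varphi(X,\epsilon) \in L^1$, $\mathbb{E}(\varphi(X,\epsilon)\mid X = x) = \mathbb{E}(\varphi(x,\epsilon))$. Applying this with $\varphi(x,e) = \mathbbm{1}(G(x,e) \leq q_j)$ (which is bounded and measurable, hence integrable) gives $H_j(x) = \mathbb{E}(\mathbbm{1}(G(x,\epsilon)\leq q_j))$.

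For the third step, continuity of $H_j(x)$ in $x$ follows directly by observing that $H_j(x) = \mathbb{P}(G(x,\epsilon) \leq q_j) = F_{Y\mid X}(q_j) \big|_{X = x}$, so A3 immediately delivers continuity in $x$ for each fixed $q_j \in \mathcal{Y}$. I expect this step to be the main subtlety: without A3, one would have to argue continuity from scratch using A1 (uniform continuity of $G$) and A2 (bounded error density) to control the discontinuity of the indicator at the threshold $q_j$, typically through a dominated-convergence argument after showing $\mathbb{P}(G(x,\epsilon) = q_j) = 0$. Since A3 is exactly tailored to avoid this digression, the cleanest route is simply to quote it. Assembling the three steps yields both equalities and the continuity claim, completing the proof.
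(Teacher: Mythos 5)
Your proposal is correct and takes essentially the same route as the paper: identify $H_j$ with $\mathbb{E}(Z_j\mid X)$ via the $L^2$-projection characterization of conditional expectation, use independence of $X$ and $\epsilon$ to pass to the unconditional expectation $\mathbb{E}(\mathbbm{1}(G(x,\epsilon)\leq q_j))$, and then read continuity directly off A3. The only cosmetic difference is that you quote the standard independence/disintegration lemma for the second equality, whereas the paper explicitly unrolls the Fubini argument (mirroring its proof of \cref{Lemma:CDE}), and both treat the continuity step as immediate from A3.
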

 
Intuitively, $\widehat{H}_j$ is pretty ``close'' to $H_j$ in the mean square error sense when the sample size is large. The convergence result will be developed in \cref{Sec:Theorey}. Particularly, $\widehat{H}_j(X_f)$ is an estimation of $F_{Y|X_f}(q_j)$. Collecting all trained $\{\widehat{H}_j\}_{j=1}^g$, we get a comprehensive evaluation of the CDF $F_{Y|X_f}(\cdot)$ at grid points $\{q_j\}_{j=1}^g$. Depending on the accessible computational source, $g$ can be a large number to refine the approximation of $F_{Y|X_f}(\cdot)$ by $\{\widehat{H}_j(X_f)\}_{j=1}^g$. However, these approximations are immature unless some specific correction procedure is manipulated to guarantee that $\widehat{H}_1(X_f)\leq \widehat{H}_2(X_f)\leq \ldots \leq \widehat{H}_g(X_f)$ under the condition $q_1\leq q_2\leq \ldots \leq q_g$ for some specific $X_f$; see \cref{Subsec:CorrectionforMono} about correction for monotonicity. Consequently, a PI can be built in a calibration way relying on the corrected $\{\widehat{H}_j(X_f)\}_{j=1}^g$. For example, a PI with $1-\alpha$ confidence level can be made by searching two indices $l$ and $r$ among $1,\ldots,g$ to make sure $\widehat{H}_r(X_f) - \widehat{H}_l(X_f)\geq 1-\alpha$. Correspondingly, two endpoints of our cPI can be determined by $q_l$ and $q_r$; see \cref{Sec:DiffCali} for more details about different calibration approaches. In theory, the cPI is asymptotically valid as $n\to\infty$ and $g\to\infty$. In practice with finite or large samples, the cPI offers a systematic way to alleviate the undercoverage issue, i.e., the calibration step is an alternative way to compensate for the usually omitted estimation variability. The good performance of cPI is verified by simulated and real-world data. 

\begin{Remark}
    In practice, we may want to build a PI conditional on a specific $X_f= x_f$, i.e., we treat the future regressor as a fixed scalar or vector. From \cref{Lemma:CDE} and \cref{Lemma:EstHk}, we know that $H_0(x)$ and $H_j(x)$ for $j =1, \ldots, g$ are all continuous w.r.t. to $x$. Later, under additional mild conditions, we will show that consistent estimators of $H_0(x_f)$ and $H_j(x_f)$ for all possible values of $x_f$ can be acquired no matter with DNN or kernel estimation techniques. As a result, our cPI is asymptotically valid conditional on $X_f= x_f$. With a slight abuse of notation, $X_f$ hereafter also stands for a specific $x_f$. 
\end{Remark}

\begin{Remark}[Comparison of our cPI and PI by quantile regression]
At first glance, it seems that our cPI is similar to PI (\ref{Eq:QPI}) based on quantile estimation. However, these two approaches are essentially different. For cPI, instead of estimating the quantile of $Y$ given a new $X_f$, we take a reverse approach and try to estimate the $F_{Y|X_f}(q_j)$ for all $\{q_j\}_{j=1}^g$. In particular, we fix some values with unknown quantile levels and then calibrate the PI to cover $Y$ given $X_f$. By taking such a reverse approach, the first advantage is that the risk is based on mean squared error rather than quantile loss which is more appropriate for finding estimators based on the gradient descent technique; the second advantage is that our cPI can relieve the adverse effects of estimation variability on the coverage rate of the prediction interval.  
\end{Remark}

All in all, the construction of our cPI consists of three main steps: (1) estimation of $F_{Y|X_f}$ at all grid points by DNNs or kernel estimators; (2) correction of estimators for monotonicity; (3) determination of PI with an appropriate calibration to guarantee a lower bound of coverage rate. 

\subsection{Algorithm of Calibration PI}
We summarize the procedure to build cPI with DNN in \cref{Algo:CaliPIwithDNN}; where the learning rate $lr$ is a hyperparameter that controls the step size to update the parameters of a DNN; the batch size $B$ is the number of training samples that are processed together during a single iteration of training; the number of epochs $E$ is the number of times that the entire training dataset is passed through the model during the training process; the clipping parameter $m$ is a positive constant which restricts the value region of the weights and bias in a DNN to be $[-m,m]$, which means any parameters with updated values larger than $m$ or smaller than $-m$ will be truncated to $m$ and $-m$, respectively; the calibration approach used to build cPI will be introduced in \cref{Sec:DiffCali}; the CDF correction method will be discussed in \cref{Subsec:CorrectionforMono}. The choice of these hyperparameters will be specified in \cref{Sec:Simulation,Sec:RealData}. The steps to constructing cPI with kernel methods are described in \cref{Algo:CaliPIwithKernel}.

\begin{Remark}[Comments of \cref{Algo:CaliPIwithDNN}]
The training of $\widehat{H}_{0}, \widehat{H}_{1},\ldots,\widehat{H}_{g}$ can be performed parallelly to save running time. To build the naive PI (\ref{Eq:PInormal}), we need to get $\widehat{H}_{0}$ and one more DNN to estimate $\mathbb{E}(Y^2|X_f)$. Thus, with parallel training, the computation time of building cPI could be the same as fitting a single DNN. 
\end{Remark}

\begin{algorithm}[htbp]
\caption{The outline to build calibration PI with DNN}\label{Algo:CaliPIwithDNN}
\setstretch{1.25}
\begin{algorithmic}[1]
\Require (a) the training samples $\{X_i,Y_i\}_{i=1}^n$; (b) the new observed $X_f$; (c) the learning rate $lr$; (d) the batch size $B$; (e) the number of epochs $E$; (f) the clipping parameter $m$; (g) the number of grid points $g$; (h) the specific optimizer, e.g., SGD, Adam or RMSProp; (i) the calibration approach; (j) the CDF correction method; (k) the significance level $\alpha$.
\Statex

\State Initialize $g+1$ number of DNN $H_{\theta,j}(X)\in \mathcal{F}_{\text{DNN}}$ for $j = 0,\cdots, g$; $\mathcal{F}_{\text{DNN}}$ is a DNN class with some structure constraint. Without loss of generality, we train $H_{\theta,0}$ to estimate $\mathbb{E}(Y|X)$.
\For{number of epochs} 
\State Update $H_{\theta,0}$ by descending its stochastic gradient with the chosen optimizer on batch data:
$$
\nabla_{\theta} \left\{ \frac{1}{n}\sum_{i=1}^n\left( Y_i - H_{\theta,0}(X_i)  \right)^2   \right\}.
$$
\State Clip the parameter of $H_{\theta,0}$ to $[-m,m]$.
\EndFor
\State \textbf{Get} trained $\widehat{H}_{0}$.
\State Take $g$ number of points to split the region $[\min(Y_1,
\ldots, Y_n), \max(Y_1,
\ldots, Y_n)]$ with equal gap. Denote these points by $q_1,\ldots,q_g$. 
\State Define $g$ number of new random variables $Z_j := \mathbbm{1}(Y\leq q_j)$. 
\For{j in (1,\ldots, $g$)}
\State Based on training samples $\{X_i,Y_i\}_{i=1}^n$, compute observed value $Z_{j,1},\ldots, Z_{j,n}$. 
\For{number of epochs} 
\State Update $H_{\theta,j}$ by descending its stochastic gradient with the chosen optimizer on batch data:
$$
\nabla_{\theta} \left\{ \frac{1}{n}\sum_{i=1}^n\left( Z_{j,i} - H_{\theta,j}(X_i)  \right)^2 \right\}. 
$$
\State Clip the parameter of $H_{\theta,j}$ to $[-m,m]$.
\EndFor
\EndFor
\State \textbf{Get} trained $\widehat{H}_{1},\ldots,\widehat{H}_{g}$.
\State Compute $\widehat{H}_{0}(X_f), \widehat{H}_{1}(X_f),\ldots,\widehat{H}_{g}(X_f)$ for a specific future $X_f$ of interests; $\widehat{H}_{0}(X_f)$ is necessary to build a symmetric PI. 
\State If needed, apply one method to correct values of $\{\widehat{H}_{1}(X_f),\ldots,\widehat{H}_{g}(X_f)\}$ for monotonicity; then interpolate the corrected $\{\widehat{H}_{1}(X_f),\ldots,\widehat{H}_{g}(X_f)\}$ by linear segments to estimate the conditional CDF of $Y$ given $X_f$.  
\State Take one calibration approach to find a PI, namely $\widehat{I}_{C}$, that is designed to cover $Y$ given $X_f$ with at least $(1-\alpha)$ confidence level. 
\State \textbf{Return} $\widehat{I}_{C}$. 
\end{algorithmic}
\end{algorithm}


\section{Calibration PI with Kernel Estimators}\label{Sec:CaliPIMethodKernel}
Besides the cPI with DNN estimators, we can consider a cPI relying on the estimated value of CDF evaluated at $q_j$ given $X_f$ based on a standard kernel estimator. Let's assume that we have built the appropriate kernel estimator $\widehat{F}_{Y|X_f}$, and we can evaluate it at the same grid points we used to determine the cPI with DNN. Then, we can build a cPI according to the same procedure that was applied before. We should remark that the kernel estimator has been used to compare the performance of conditional point prediction with deep generative methods in \cite{wu2024deep, zhou2023deep, liu2021wasserstein}. However, they did not consider building PI based on kernel estimators. We fulfill this empirical gap with simulations and real-data studies. Also, we develop the theory to show the asymptotic validity of the corresponding cPI. Moreover, we show that the lower-bound coverage rate can be guaranteed with a high probability; see the theoretical analysis presented in \cref{Sec:Theorey}. Notably, the real-data studies with short training size disclose that the cPI with DNN generally outperforms cPI based on kernel estimators; this may be another empirical evidence to support the common thought that the DNN estimator suffers less from the curse of dimensionality than kernel estimators; see more details from \cref{Sec:RealData}.

\subsection{The Algorithm to Build Calibration PI with Kernel Estimators}
In this paper, we consider the standard conditional density estimator $\hat{f}_{Y|X}(y|x): = \hat{f}_{X,Y}(x,y)/
\hat{f}_X(x)$, where $\hat{f}_{X,Y}(x,y)$ and $\hat{f}(x)$ are joint density estimator of $(X,Y)$ and the marginal density estimator of $X$, respectively, i.e.,
\begin{equation}\label{Eq:Kernelestimators}
    \hat{f}_{X,Y}(x,y)= \frac{1}{n|\Lambda_x|h_0} \sum_{i=1}^n K\left(\Lambda_x^{-1}(x-X_i)\right) k_0\left(\frac{y-Y_i}{h_0}\right)~;~ \hat{f}_X(x)=\frac{1}{n|\Lambda_x|} \sum_{i=1}^n K\left(\Lambda_x^{-1}(x-X_i)\right);
\end{equation}
where $K(\cdot)$ is a multivariate kernel function; $k_0(\cdot)$ is one univariate kernel functions; $\Lambda_x$ represents the bandwidth matrix; $|\Lambda_x|$ is the determinant of $\Lambda_x$; $x$ and $X_i$ are two $d$-dimension vectors. To simplify the analysis, we assume all independent variables are continuous and apply the product kernel with $\Lambda_x$ being a diagonal matrix with a common bandwidth, i.e., assuming $\Lambda_x = h\cdot I$ and $K\left(\Lambda_x^{-1}(x-X_i)\right)  =\prod_{s=1}^d k\left(\frac{x_s-X_{i,s}}{h}\right)$; $I$ is the identity matrix; $h$ is the common bandwidth;
$X_{i,s}$ and $x_{s}$ denote the $s$ th component of $X_i$ and $x$ for $s=1, \ldots, d$; $k(\cdot)$ is a univariate kernel which satisfies below four conditions:
\begin{equation}\label{Eq:fourconditionsonKernel}
k(v)\geq 0, \forall v;~\int k(v) d v=1;~k(v)=k(-v);~\int v^2 k(v) d v=\kappa_2>0. 
\end{equation}
To ensure the asymptotically valid cPI based on kernel estimators, any univariate kernel function satisfying the above conditions can be used. To simplify the analysis, we also assume that the dependent variable is continuous. Thus, we can apply the same univariate kernel function for $k(\cdot)$ and $k_0(\cdot)$. In other words, we can express $\hat{f}(x,y)$ as:
$$
\hat{f}_{X,Y}(x,y) = \frac{1}{nh^{d+1}} \sum_{i=1}^n \widetilde{K}\left(\Lambda^{-1}(z-Z_i)\right);
$$
where $z$ and $Z_i$ represent the vector $(x^T,y)^T$ and random vector $(X_i^T, Y_i)^T$, respectively; $\Lambda$ represents the diagonal matrix with a common bandwidth; $\widetilde{K}(\Lambda^{-1}(z-Z_i)) : = \prod_{s=1}^{d+1}  k\left(\frac{z_s-Z_{i,s}}{h}\right)$; $Z_{i,s}$ and $z_{s}$ denote the $s$-th component of $Z_i$ and $z$ for $s=1, \ldots, d+1$. If there are some discrete independent or dependent variables involved in the regression, we can apply specifically designed kernels; see Section 4 of \cite{li2007nonparametric} for more details. To select the bandwidth $h$, we can consider the cross-validation approach. The algorithm of building cPI with kernel estimators is similar to \cref{Algo:CaliPIwithDNN} and is summarized in \cref{Algo:CaliPIwithKernel}. After determining one calibration method from the different choices presented in \cref{Sec:DiffCali}, we can get various cPI with kernel estimators. To simplify the notations, we take $\hat{f}(y|x)$, $\hat{f}(x,y)$ and $\hat{f}(x)$ to represent $\hat{f}_{Y|X}(y|x)$, $\hat{f}_{X,Y}(x,y)$ and $\hat{f}_X(x)$, respectively. 

\begin{Remark}[Comments of \cref{Algo:CaliPIwithKernel}]
In Step 3, the correction for the monotonicity is not necessary since $\hat{f}(y|x)$ is always positive due to conditions (\ref{Eq:fourconditionsonKernel}) imposed on univariate kernel functions $k(\cdot)$ and $k_0(\cdot)$.
\end{Remark}

\begin{algorithm}[h]
\caption{The outline to build calibration PI with kernel estimators}\label{Algo:CaliPIwithKernel}
\setstretch{1.25}
\begin{algorithmic}[1]
\Require (a) the training samples $\{X_i,Y_i\}_{i=1}^n$; (b) the new observed $X_f$; (c) the number of grid points $g$; (d) the specific bandwidth-choice strategy, e.g., maximum likelihood or least squares cross-validation; (e) the calibration approach; (f) the significance level $\alpha$.
\Statex

\State Apply the bandwidth-choice strategy to determine the kernel estimators $\hat{f}(x,y)$ and $\hat{f}(x)$. Then, the conditional density estimator $\hat{f}(y|x)$ is obtained. 
\State For a future $X_f$ of interest, compute $\widehat{H}^k_{0}(X_f)$ as
$$
\int y\hat{f}(y|X_f) dy;
$$
compute $\widehat{H}^k_{1}(X_f),\ldots,\widehat{H}^k_{g}(X_f)$ as
$$
\int_{-\infty}^{q_j}\hat{f}(y|X_f) dy~\text{for}~j = 1,\ldots,g,~\text{respectively};
$$
$\widehat{H}^k_{0}(X_f)$ is necessary to estimate the conditional mean so that a PI which is symmetric around the mean can be acquired.
\State Interpolate $\{\widehat{H}^k_{1}(X_f),\ldots,\widehat{H}^k_{g}(X_f)\}$ by linear segments to estimate the conditional CDF of $Y$ given $X_f$. 
\State Take the calibration approach to find a PI, namely $\widehat{I}^k_{C}$, that is designed to cover $Y$ given $X$ with at least $(1-\alpha)$ confidence level. 
\State \textbf{Return} $\widehat{I}^k_{C}$. 
\end{algorithmic}
\end{algorithm}


\section{Calibration Approaches}\label{Sec:DiffCali}
Once we have corrected the CDF estimation with DNN for monotonicity or obtained the appropriate CDF estimation from computing the integral based on the kernel density estimator, we can consider different calibration approaches to construct PI with a designed (at least) $1-\alpha$ confidence level. Depending on the correction methods we applied in the previous step and the structure of PI we hope to adopt, there are several PI candidates we can rely on. We mainly present the algorithms of different types of cPIs with DNN. The corresponding cPIs with kernel estimators can be built similarly; see \cref{Algo:CaliPIaak} for a specific illustration.

From the discussion of correction methods in \cref{Subsec:CorrectionforMono}, we have three methods (C1 to C3) to correct for the monotonicity of our CDF estimations of $\{\widehat{H}_j(X_f)\}_{j=1}^g$. We denote three resulting corrected CDF estimators $\widetilde{F}_{\text{Avg}}(\cdot)$, $\widetilde{F}_{\text{RtoL}}(\cdot)$ and $\widetilde{F}_{\text{LtoR}}(\cdot)$; see their definitions in \cref{Subsec:CorrectionforMono}. 

Subsequently, we have several algorithms to build the cPIs. The first one is the cPI with a minimal length based on $\widetilde{F}_{\text{Avg}}$, namely $PI_m$. The main idea is to search two indices $l$ and $r$ from $1,\ldots,g$ such that $\widetilde{F}_{\text{Avg}}(q_r) - \widetilde{F}_{\text{Avg}}(q_l) \geq 1 - \alpha$ and minimize $q_r - q_l$ meanwhile. We summarize the corresponding algorithm in \cref{Algo:CaliPIm}.

\begin{algorithm}
\caption{Calibration PI with DNN in a minimal length: $PI_m$}\label{Algo:CaliPIm}
\setstretch{1.25}
\begin{algorithmic}[1]
\Require All required arguments in \cref{Algo:CaliPIwithDNN}.
\State Initialize $g$ number of DNN $H_{\theta,j}(X)\in \mathcal{F}_{\text{DNN}}$ for $j = 1,\cdots, g$; $\mathcal{F}_{\text{DNN}}$ is a DNN class with some structure constraint. 

\State Perform Steps 7 - 16 in \cref{Algo:CaliPIwithDNN} to get trained $\widehat{H}_{1},\ldots,\widehat{H}_{g}$. Then, compute $\widehat{H}_{1}(X_f),\ldots,\widehat{H}_{g}(X_f)$ for a future $X_f$ of interest and take the correction approach C3 in \cref{Subsec:CorrectionforMono} to get $\widetilde{F}_{\text{Avg}}$.
\State Solve the below optimization problem to determine two indices $l$ and $r$:
    \begin{equation*}
        \begin{split}
        &l, r = \arg\min_{l,r\in\{1,\ldots,g\}}(q_r - q_l);\\
        &\text{such that},~\widetilde{F}_{\text{Avg}}(q_r) - \widetilde{F}_{\text{Avg}}(q_l) \geq 1-\alpha.
        \end{split}
    \end{equation*}

\State \textbf{Return} $PI_m:= [q_l , q_r]$. 
\end{algorithmic}
\end{algorithm}

Secondly, we can restrict the PI to be centered around some meaningful point. The symmetric PI centered around the estimated conditional mean $\widehat{H}_{0}(X_f)$ based on $\widehat{F}_{\text{Avg}}$ is developed in \cref{Algo:CaliPIsa}. We call such a prediction interval $PI_{sa}$ since it is symmetric and based on the average correction approach C3. Analogously, we consider a symmetric cPI but two endpoints are determined by $\widehat{F}_{\text{LtoR}}$ and $\widehat{F}_{\text{RtoL}}$, respectively; see \cref{Algo:CaliPIst} for details. Here, $\widehat{F}_{\text{LtoR}}$ and $\widehat{F}_{\text{RtoL}}$ are corrected CDF estimators based on approach C1 and C2 in \cref{Subsec:CorrectionforMono}, respectively. This special manipulation is intended to enlarge the length of the cPI, namely $PI_{st}$. 

In addition to this symmetric structure, we propose an asymmetric PI with $\widehat{F}_{\text{Avg}}$ in \cref{Algo:CaliPIaa}. We call such a prediction interval $P_{aa}$ since it is asymmetric and based on the average correction approach C3. Similarly to the extension from $PI_{sa}$ to $PI_{st}$, we can deploy $\widehat{F}_{\text{RtoL}}$ and $\widehat{F}_{\text{LtoR}}$ to enlarge $PI_{aa}$ to obtain a variant, namely $PI_{at}$. This procedure is described in \cref{Algo:CaliPIat}.  

\begin{Remark}
In these algorithms, we set up optimization problems to find indices $l$ and $r$ so that the endpoints of PI can be determined. For example, in \cref{Algo:CaliPIaa} and \cref{Algo:CaliPIat}, two indices should satisfy the inequalities $\widetilde{F}_{\text{Avg}}(q_{l}) \leq \alpha/2$,  $\widetilde{F}_{\text{Avg}}(q_{r}) \geq 1 - \alpha/2$ or $\widetilde{F}_{\text{LtoR}}(q_{l}) \leq \alpha/2$, $\widetilde{F}_{\text{RtoL}}(q_{r}) \geq 1- \alpha/2$. The less than $\leq$ and greater than $\geq$ used in two inequalities imply that the cPI is expected to have a coverage rate larger than $1-\alpha$ asymptotically. In practice with finite samples, this alleviates the undercoverage issue. Additionally, we did simulations with mean-centered and equal-tail cPI. However, the resulting PI length is so large compared to other cPIs, so we did not present the algorithm for such a type of PI here. 
\end{Remark}

\begin{algorithm}[htbp]
\caption{Symmetric calibration PI centered around mean with $\widetilde{F}_{\text{Avg}}$ and DNN: $PI_{sa}$}\label{Algo:CaliPIsa}
\setstretch{1.25}
\begin{algorithmic}[1]
\Require All required arguments in \cref{Algo:CaliPIwithDNN}.
\State Initialize $g+1$ number of DNN $H_{\theta,j}(X)\in \mathcal{F}_{\text{DNN}}$ for $j = 1,\cdots, g$; $\mathcal{F}_{\text{DNN}}$ is a DNN class with some structure constraint. 

\State Perform Steps 2 - 16 in \cref{Algo:CaliPIwithDNN} to get trained $\widehat{H}_{0},\widehat{H}_{1},\ldots,\widehat{H}_{g}$. Then, compute $\widehat{H}_{1}(X_f),\ldots,\widehat{H}_{g}(X_f)$ for a future $X_f$ of interest and take correction approach C3 to get $\widetilde{F}_{\text{Avg}}$. Also, compute $\widehat{H}_{0}(X_f)$ as estimation of the conditional mean of $Y$.
\State Fine the index $c\in\{1,\ldots,g\}$ s.t. $c = \arg\min_{c\in\{1,\ldots,g\}}|q_c - \widehat{H}_{0}(X_f)|$. 
\State Solve the below optimization problem to determine two indices $l$ and $r$:
\begin{equation*}
        \begin{split}
            & k = \arg\min_{k\in \mathbb{N}}\widetilde{F}_{\text{Avg}}(q_{r}) - \widetilde{F}_{\text{Avg}}(q_{l}) \geq 1-\alpha;\\
            &\text{where}~ r = \min(c+k,g)~\text{and}~ l = \max(c-k, 1).\\
        \end{split}
    \end{equation*}

\State \textbf{Return} $PI_{sa}:= [q_l , q_r]$. 
\end{algorithmic}
\end{algorithm}


\begin{algorithm}[htbp]
\caption{Symmetric calibration PI with $\widetilde{F}_{\text{RtoL}}$, $\widetilde{F}_{\text{LtoR}}$ and DNN: $PI_{st}$}\label{Algo:CaliPIst}
\setstretch{1.25}
\begin{algorithmic}[1]
\Require All required arguments in \cref{Algo:CaliPIwithDNN}.
\State Perform Steps 1-3 in \cref{Algo:CaliPIsa}.
\State Solve the below optimization problem to determine two indices $l$ and $r$:
\begin{equation*}
        \begin{split}
            & k = \arg\min_{k\in \mathbb{N}}\widetilde{F}_{\text{RtoL}}(q_{r}) - \widetilde{F}_{\text{LtoR}}(q_{l}) \geq 1-\alpha;\\
            &\text{where}~ r = \min(c+k,g)~\text{and}~ l = \max(c-k, 1).\\
        \end{split}
    \end{equation*}

\State \textbf{Return} $PI_{st}:= [q_l , q_r]$. 
\end{algorithmic}
\end{algorithm}


\begin{algorithm}[htbp]
\caption{Asymmetric calibration PI with $\widetilde{F}_{\text{Avg}}$ and DNN: $PI_{aa}$}\label{Algo:CaliPIaa}
\setstretch{1.25}
\begin{algorithmic}[1]
\Require All required arguments in \cref{Algo:CaliPIwithDNN}.
\State Initialize $g$ number of DNN $H_{\theta,j}(X)\in \mathcal{F}_{\text{DNN}}$ for $j = 1,\cdots, g$; $\mathcal{F}_{\text{DNN}}$ is a DNN class with some structure constraint. 

\State Perform Steps 7 - 16 in \cref{Algo:CaliPIwithDNN} to get trained $\widehat{H}_{1},\ldots,\widehat{H}_{g}$. Then, compute $\widehat{H}_{1}(X_f),\ldots,\widehat{H}_{g}(X_f)$ for a future $X_f$ of interest and take correction approach C3 to get $\widetilde{F}_{\text{Avg}}$.
\State Solve the two optimization problems to determine two indices $l$ and $r$:
    \begin{equation*}
        \begin{split}
            & l = \arg\max_{ 1 \leq k \leq g}\widetilde{F}_{\text{Avg}}(q_{k}) \leq \alpha/2;\\
            & r = \arg\min_{ 1 \leq k \leq g}\widetilde{F}_{\text{Avg}}(q_{k}) \geq 1 - \alpha/2.\\ 
        \end{split}
    \end{equation*}

\State \textbf{Return} $PI_{aa}:= [q_l , q_r]$. 
\end{algorithmic}
\end{algorithm}


\begin{algorithm}[htbp]
\caption{Asymmetric calibration PI with $\widetilde{F}_{\text{RtoL}}$, $\widetilde{F}_{\text{LtoR}}$ and DNN: $PI_{at}$}\label{Algo:CaliPIat}
\setstretch{1.25}
\begin{algorithmic}[1]
\Require All required arguments in \cref{Algo:CaliPIwithDNN}.
\State Perform Steps 1 - 3 in \cref{Algo:CaliPIaa}.
\State Solve the two optimization problems to determine two indices $l$ and $r$:
    \begin{equation*}
        \begin{split}
            & l = \arg\max_{ 1 \leq k \leq g}\widetilde{F}_{\text{LtoR}}(q_{k}) \leq \alpha/2;\\
            & r = \arg\min_{ 1 \leq k \leq g}\widetilde{F}_{\text{RtoL}}(q_{k}) \geq 1 - \alpha/2.\\ 
        \end{split}
    \end{equation*}

\State \textbf{Return} $PI_{at}:= [q_l , q_r]$. 
\end{algorithmic}
\end{algorithm}

\FloatBarrier
Here, we provide a specific illustration to build a calibration PI with the kernel estimator based on the corrected CDF estimator $\widetilde{F}_{\text{Avg}}$. We call such cPI $PI_{aak}$ since it is asymmetric and based on the average correction approach with a kernel estimator. Other types of cPI with the kernel estimator can be built similarly to \cref{Algo:CaliPIm,Algo:CaliPIsa,Algo:CaliPIst,Algo:CaliPIat}.

\begin{algorithm}[htbp]
\caption{Asymmetric cPI with $\widetilde{F}_{\text{Avg}}$ and kernel estimators: $PI_{aak}$}\label{Algo:CaliPIaak}
\setstretch{1.25}
\begin{algorithmic}[1]
\Require All required arguments in \cref{Algo:CaliPIwithKernel}.
\State Apply the bandwidth-choice strategy to determine the kernel estimators $\hat{f}(x,y)$ and $\hat{f}(x)$. Then, the conditional density estimator $\hat{f}(y|x)$ is obtained. 

\State Perform Steps 2 - 3  in \cref{Algo:CaliPIwithKernel} to get $\widetilde{F}$, which is an intrinsically appropriate estimation of the conditional CDF of $Y$ given $X_f$.
\State Solve the two optimization problems to determine two indices $l$ and $r$:
    \begin{equation*}
        \begin{split}
            & l = \arg\max_{ 1 \leq k \leq g}\widetilde{F}_{\text{Avg}}(q_{k}) \leq \alpha/2;\\
            & r = \arg\min_{ 1 \leq k \leq g}\widetilde{F}_{\text{Avg}}(q_{k}) \geq 1 - \alpha/2.\\ 
        \end{split}
    \end{equation*}

\State \textbf{Return} $PI_{aak}:= [q_l , q_r]$. 
\end{algorithmic}
\end{algorithm}

\section{Theoretical Analysis}\label{Sec:Theorey}
We justify the asymptotic validity of the cPI with DNN and kernel estimators. Moreover, we explore the effects of the equal distance $D$ between grid points on the coverage rate. We show the possibility that a cPI guarantees at least $1-\alpha$ coverage rate even in the finite sample situations if all involved DNNs can be estimated in an optimal way (oracle estimation) and the grid distance $D$ satisfies mild conditions. We claim this finite sample coverage guarantee is the strongest property, which is hardly achievable in practice. Thus, we aim for a cPI that guarantees at least $1-\alpha$ coverage rate with a high probability when the sample size is large enough. As mentioned in \cref{Sec:Intro}, this high-probability coverage warranty is finer than the asymptotic validity. We show that the cPI with kernel estimators exhibits such a feature. For cPI with DNN, we conjecture this property still holds since it can be thought of as an intermediate stage between the finite sample and asymptotic coverage guarantee. The properties of cPI with DNN- or kernel estimators are summarized in \cref{Tab:perpertiesofCPI}. We first give the theory related to cPI with DNN and then discuss the property of cPI with kernel estimators.

\begin{table}[htbp]
    \centering
     \caption{The comparison of calibration PI based on DNN and kernel estimators.}
    \begin{tabular}{lcc}
    \toprule
     & DNN-based &  kernel-based \\
     \midrule
      Finite-sample coverage (oracle estimation)  &  \selfcheckmark  & ---  \\
      Large-sample coverage (high probability)  &  \notcheckmark &  \selfcheckmark  \\
      Asymptotic coverage    &  \selfcheckmark  &  \selfcheckmark \\
      \bottomrule
      \label{Tab:perpertiesofCPI}
    \end{tabular}
    \vspace{2pt}
    
    \raggedright
    {\footnotesize Note: We use the symbol \notcheckmark~ to remark that we do not give rigorous proof to show the high-probability large-sample coverage property of DNN-based cPI, but we conjecture it is true. The simulation and empirical studies also favor the cPI with DNN estimators.}
\end{table}

\subsection{Theoretical Property of Calibration PI with DNN}

\subsubsection{Asymptotic coverage}
All cPIs considered so far demand the estimation of the CDF as a function of $X$ when it is evaluated on some fixed grid points, and the conditional mean if the cPI is required to be symmetric around the mean. As revealed in \cref{Lemma:CDE} and \cref{Lemma:EstHk}, under minimal assumptions A1 to A3, the oracle conditional mean $H_0(X)$ and CDF functions $H_j(X)$ are all continuous w.r.t. $X$, which underscores the potential to estimate them by DNNs due to their universal approximation property. 

In this paper, we consider the standard fully connected forward DNN with ReLU activation functions. It can be treated as a ``black box'' that takes the input $X$ and returns output in the following way:
$$
    H_{\theta}(X) =  \varpi_{L}(\sigma(\varpi_{L-1}(\cdots \sigma(\varpi_{2}\sigma(\varpi_{1}\sigma(\varpi_0X + b_{0}) +b_1)  + b_2)\cdots) +  b_{L-1}) + b_{L};
$$
where $L$ represents the depth of the hidden layer; $\sigma(\cdot)$ is the activation function; $\{b_0,\ldots,b_L\}$ and $\{\varpi_0,\ldots,\varpi_L\}$ are bias vectors and weight matrices, respectively; they stand for all parameters $\theta$ in a DNN and will be trained with some gradient descent algorithm, e.g., Adam. Recently, the error bound of the least square DNN estimator on estimating regression functions has been studied relying on different settings. To relieve the curse of dimensionality \cite{jiao2023deep, nakada2020adaptive, schmidt2019deep} made distributional assumptions on the independent variables $X$, e.g., $X$ is supported on a low-dimensional manifold; \cite{schmidt2020nonparametric, bauer2019deep} assumed the underlying regression has some special structure.  Without any regression model and distribution requirement, \cite{farrell2021deep} showed the $L^2$ error of a DNN estimator and the true function converges to 0 with a high probability asymptotically, such as $\|\widehat{H}_0 - H_0\|_{L^2(X)} \to 0$ as $n\to\infty$ with probability tends to 1 in our context; here the expectation is only taken w.r.t. $X$. Furthermore, \cite{wu2024scalable} improved the error bound of DNN's estimation ability for functions belonging to an H\"{o}lder space with a so-called scalable subsampling technique proposed by \cite{politis2024scalable}. 

In this paper, we do not make additional assumptions on the distribution of $X$ and the structure of the conditional mean and the conditional CDF. However, to show the consistent result of the DNN and kernel estimators, we make three additional assumptions to shape the joint distribution $P_{X,Y}$ and the marginal density of $X$ for the theoretical proof:
\begin{itemize}
    \item B1 The domain of $Y$ and $X$ are compact sets, respectively, i.e., $\mathcal{Y} := [-M_1,M_1]$ and $\mathcal{X} := [-M_2,M_2]^d$; $M_1$ and $M_2$ are two arbitrarily large constants;
    \item B2 The density function of $X$, namely $f(x)$ is positive and bounded on the whole domain $\mathcal{X}$; the conditional density $f(y|x)$ is also bounded for all $x$;
    \item B3 The marginal density $f(x)$ and joint density $f(x,y)$ are twice differentiable. The matrix norm of the Hessian matrix of $f(x)$ and $f(x,y)$ is finite for all $x$ and $y$, e.g., $\|\nabla^2f(x)\|_1$ and $\|\nabla^2f(x,y)\|_1$ are finite. 
\end{itemize}
Accordingly, the error terms could be truncated to have a compact domain to simplify the proof. In simulation studies, we do not perform this truncation. Subsequently, we develop the asymptotic coverage guarantee for our cPIs.

\begin{Remark}
    We should mention that only assumptions B1 and B2 are required to give proof related to cPI with DNN estimators. To develop the theory about cPI with kernel methods, we need a stronger assumption of smoothness (i.e., B3) than the requirement of cPI with DNN.
\end{Remark}

\begin{Theorem}\label{Theorem:coverageasy}
Under A1 to A3 and B1 to B2, let the DNN class $\mathcal{F}_{\text{DNN}}$ contain all DNN which have a width $W := 3^{d+3} \max \left\{d\left\lfloor N_1^{1 / d}\right\rfloor, N_1+1\right\}$ and depth $L := 12 N_2+14+2d$; $N_1 = \ceil{ \frac{n^{\frac{d}{2(\tau+d)}}}{\log n}}$ and $N_2 = \ceil{ \log(n) }$; $\tau >0$; see \cref{Remakr:explanationsofWL} for the meaning of $W$ and $L$ in our context. Also, we require the DNN in $\mathcal{F}_{\text{DNN}}$ is truncated, i.e., $\{|f_{\text{DNN}}|\leq M_1: f_{\text{DNN}} \in \mathcal{F}_{\text{DNN}}\}$. Then, for large enough $n$, i.e., $n> \max((2eM_1)^2, \text{Pdim}(\mathcal{F}_{\text{DNN}}))$, and samples $S_n:=\{(X_i,Y_i)\}_{i=1}^n $ s.t., $\mathbb{P}(S_n\notin A_n) = o(1)$, where $A_n\subseteq\mathcal{X}\times\mathcal{Y}$ is an expanding set as $n$ increasing, we have
$$
\mathbb{P}(Y\in \widehat{\mathcal{I}} |X_f = x_f) \geq 1-\alpha
$$
for any $x_f$ in its domain $\mathcal{X}$ as $n\to\infty$; where $\widehat{\mathcal{I}}$ represents all cPIs computed according to \cref{Algo:CaliPIm,Algo:CaliPIsa,Algo:CaliPIst,Algo:CaliPIaa,Algo:CaliPIat}; $\text{Pdim}(\mathcal{F}_{\text{DNN}})$ is the pseudo-dimension of DNN class $\mathcal{F}_{\text{DNN}}$; see \cite{bartlett2019nearly} for the formal definition. 
\end{Theorem}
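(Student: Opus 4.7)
The plan is to establish asymptotic coverage by chaining together three pieces: (i) consistency of the DNN estimators $\widehat{H}_j(x_f)$ for $H_j(x_f) = F_{Y|x_f}(q_j)$ on the grid, (ii) preservation of this consistency under each of the three monotonicity corrections C1--C3, and (iii) conversion of uniform CDF consistency into the advertised coverage bound via the selection rules in the five algorithms.

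For (i), I would invoke a non-asymptotic excess-risk bound for truncated ReLU networks with the prescribed width $W$ and depth $L$. By \cref{Lemma:CDE} and \cref{Lemma:EstHk}, under A1--A3 and B1--B2 the targets $H_0$ and $\{H_j\}_{j=1}^g$ are continuous on the compact set $\mathcal{X}$ and uniformly bounded by $M_1$. The choices of $N_1=\lceil n^{d/(2(\tau+d))}/\log n\rceil$ and $N_2=\lceil \log n\rceil$ are calibrated so that the network approximation bias and the stochastic error (controlled via $\text{Pdim}(\mathcal{F}_{\text{DNN}})/n$ and the standard $n>(2eM_1)^2$ regime for truncated least-squares) simultaneously vanish. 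On the high-probability sample event $A_n$, this yields $\|\widehat{H}_j - H_j\|_{L^2(P_X)} \to 0$ uniformly in $j$. Upgrading to pointwise control at a generic $x_f$ is done by combining continuity of $H_j$ near $x_f$, positivity of $f_X$ at $x_f$ under B2, and a covering-number argument to promote $L^2(P_X)$ control to $L^\infty$ control on compact subsets of the interior of $\mathcal{X}$.

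For (ii), each correction in \cref{Subsec:CorrectionforMono} applied to $(\widehat{H}_1(x_f),\ldots,\widehat{H}_g(x_f))$ is a Lipschitz operator that reduces to the identity on monotone input vectors, and the oracle vector $(H_1(x_f),\ldots,H_g(x_f))$ is monotone since $q_1<\cdots<q_g$ and $H_j(x_f)=F_{Y|x_f}(q_j)$. Hence $\max_{1\le j\le g}|\widetilde F_\bullet(q_j) - F_{Y|x_f}(q_j)| \to 0$ in probability for each $\bullet\in\{\text{Avg},\text{LtoR},\text{RtoL}\}$. Interpolating linearly between grid points and using uniform continuity of $F_{Y|x_f}$ on $\mathcal{Y}$ (from A3 and B1) together with a grid spacing $D \to 0$ gives $\sup_{y\in\mathcal{Y}} |\widetilde F_\bullet(y) - F_{Y|x_f}(y)| \le \varepsilon_n$ with $\varepsilon_n = o_P(1)$.

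For (iii), take the asymmetric rule $PI_{aa}$ as representative: by construction $\widetilde F_{\text{Avg}}(q_l) \le \alpha/2$ and $\widetilde F_{\text{Avg}}(q_r) \ge 1-\alpha/2$, so the uniform bound from step (ii) yields $F_{Y|x_f}(q_l) \le \alpha/2 + \varepsilon_n$ and $F_{Y|x_f}(q_r) \ge 1-\alpha/2 - \varepsilon_n$, hence
\begin{equation*}
\mathbb{P}(Y \in [q_l,q_r] \mid X_f=x_f) \;=\; F_{Y|x_f}(q_r) - F_{Y|x_f}(q_l) \;\ge\; 1 - \alpha - 2\varepsilon_n \;\longrightarrow\; 1-\alpha.
\end{equation*}
Algorithm $PI_{at}$ is handled identically with $\widetilde F_{\text{LtoR}},\widetilde F_{\text{RtoL}}$ replacing $\widetilde F_{\text{Avg}}$ at the two endpoints; the rules $PI_m$, $PI_{sa}$, $PI_{st}$ use the direct constraint $\widetilde F_\bullet(q_r)-\widetilde F_\bullet(q_l)\ge 1-\alpha$, which transfers to $F_{Y|x_f}(q_r)-F_{Y|x_f}(q_l)\ge 1-\alpha - 2\varepsilon_n$ by the same uniform bound. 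The principal obstacle I anticipate is step (i): standard DNN bounds deliver $L^2(P_X)$ consistency rather than pointwise consistency at an arbitrary $x_f$, so I must rely on the continuity of $H_j$ together with the smoothness of the truncated DNN class to strengthen $L^2$ to $L^\infty$ on compact subsets, and must choose $g=g(n)\to\infty$ at a rate compatible with the estimation error so that a union bound across the $g$ grid estimators does not overwhelm the $o(1)$ rate.
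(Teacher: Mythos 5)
Your proposal follows essentially the same skeleton as the paper's proof: invoke the consistency of $\widehat H_j$ guaranteed by the error bound on DNN least-squares estimators (the paper's \cref{Theorem:errorboundofC0f}), use the continuity of $H_j$ from \cref{Lemma:CDE} and \cref{Lemma:EstHk} to get pointwise control at $x_f$, and then argue that the index-selection rules inherit the designed $1-\alpha$ guarantee in the limit. You are, however, noticeably more careful than the paper in two places. First, the paper passes from $\int(\widehat H_j - H_j)^2 f\,dx \to 0$ together with ``$f>0$'' directly to $\widehat H_j(x_f)\to H_j(x_f)$ for every $x_f\in\mathcal X$; as you correctly flag, $L^2(P_X)$-convergence alone does not give pointwise convergence at arbitrary $x_f$, and the extra leverage must come from the equicontinuity/covering structure of the truncated ReLU class (or equivalently from an $L^\infty$-type oracle bound such as \cref{Lemma:DNNC0f}), which the paper leaves implicit. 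Second, you spell out why the monotonicity corrections C1--C3 preserve consistency --- they are nonexpansive and act as the identity on the monotone oracle vector $(H_1(x_f),\ldots,H_g(x_f))$ --- whereas the paper folds this into the final sentence (``the indices selection procedure in all algorithms will work''). One minor over-engineering on your part: for the stated one-sided bound $\mathbb P(Y\in\widehat{\mathcal I}\mid X_f=x_f)\geq 1-\alpha$ you do not actually need $g\to\infty$ or $D\to 0$ or a union bound over a growing grid; each algorithm enforces $\widetilde F(q_r)-\widetilde F(q_l)\geq 1-\alpha$ (or the corresponding asymmetric tail constraints) at the \emph{finitely many} fixed grid points, and consistency at those $g$ points alone transfers the inequality to $F_{Y|x_f}$ up to a vanishing $\varepsilon_n$. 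Grid refinement matters only if one wants the coverage to converge \emph{to} $1-\alpha$ rather than merely to exceed it.
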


\begin{Remark}[The width and depth of the DNN in \cref{Theorem:coverageasy}]\label{Remakr:explanationsofWL}
The width $W$ and depth $L$ mean the number of hidden layers of a DNN in $\mathcal{F}_{\text{DNN}}$ is no more than $L$ and the maximum number of neurons among all hidden layers is no more than $W$, respectively. With appropriate approximation results, all theories in this paper can be extended to a general DNN in which all neurons may not be fully and feedforward connected. 

\end{Remark}

 \subsubsection{Finite-sample coverage}\label{SubsubSec:finietcover}
 Notably, in the finite-sample scenarios, the Lemma 1 of \cite{lei2014distribution} showed that it is impossible to make a PI with a finite length that guarantees at least $1-\alpha$ coverage for $Y$ conditional on $X_f = x_f$ for any distribution of $P_{X,Y}$ and all $x_f\in N(P)$, which is the so-called non-atoms\footnote{ Let $P_X$ denote the marginal distribution of $X$ under $P_{X,Y}$. A point $x$ is a non-atom for $P_X$ if $x$ is in the support of $P_X$ and if $P_X\{B(x, \delta)\} \rightarrow 0$ as $\delta \rightarrow 0$, where $B(x, \delta)$ is the Euclidean ball centred at $x$ with radius $\delta$. $N(P)$ is the set of all non-atom points.}. 

Here, we show that our cPI can guarantee the finite-sample coverage property under several conditions: (1) We do not consider the conditional coverage validity for any distribution $P_{X,Y}$; we restrict our interests on joint distribution satisfying A1 to A3 and B1 to B2; (2) We assume the DNN can be trained in an optimal way, i.e., the $\widehat{H}_{j}$ can achieve the optimal performance among all DNN in $\mathcal{F}_{\text{DNN}}$; (3) We make a mild assumption on the equal gap distance $D$ of grid points and further adjust the cPI to warrant the finite-sample coverage.


The second condition is devoted to remove the ``gap'' between $\widehat{H}_j$ which minimizes the empirical risk and the optimal DNN in $\mathcal{F}_{\text{DNN}}$ to approximate desired functions, namely $H^*_j : = \arg\min_{H_\theta\in\mathcal{F}_{\text{DNN}}}\|H_{\theta,j} - H_j \|_{L^\infty}$. We should notice that the performance of $H_j^*$ only depends on the structure of the DNN class $\mathcal{F}_{\text{DNN}}$, i.e., the number of neurons, width, and depth. We assume that the $\widehat{H}_{j}$ can be trained as well as the $H^*_{j}$. This condition is formulated by the assumption and the remark below:
\begin{itemize}
    \item B4: The $L^\infty$ norm of the difference between $\widehat{H}_j$ and $H^*_j$ is bounded by $\delta_n$, i.e., $\| \widehat{H}_j - H^*_j  \|_{L^\infty(X)}\leq \delta_n$, with an appropriate sequence $\delta_n\to0$ as $n\to\infty$. 
\end{itemize}
\begin{Remark}\label{Remakr:2ndcondition}
    As far as we know, there is no existing literature regarding the asymptotic result required in B4. Once such a result is proved in the future, we can get a precise understanding of the grid distance's effects on finite-sample coverage. To illustrate it a little bit further, suppose $\|H_{j} - H^*_j \|_{L^\infty} = \zeta_{W,L}$; $\zeta_{W,L}$ is the error that depends only on the width and depth of a DNN. Then, we have $\| \widehat{H}_j - H_{j}  \|_{L^\infty(X)}\leq \delta_n + \zeta_{W,L}$. If the grid point distance $D$ is large enough so that the true CDF value difference evaluated at any two adjacent grid points is larger than the $L^\infty$ norm error between $\widehat{H}_j$ and $H_{j}$, cPI can be adjusted to guarantee the coverage rate even in the finite-sample cases. For this moment, we treat the error sequence $\delta_n$ to be 0 for all $n$. This is possible but may not be practically accomplishable.   
\end{Remark}

For the last condition, we take the adjustment of $PI_{aa}$ as an example. We can move the determined left index $l$ and right index $r$ of $PI_{aa}$ one step more left and right, respectively, resulting in a so-called $PI_{aaa}$; we call such a PI the adjusted version of cPI; see \cref{Algo:CaliPIaaadjusted} below for details. 

\begin{algorithm}\label{Algo:PIaaa}
\caption{Adjusted Asymmetric calibration PI with $\widetilde{F}_{\text{Avg}}$: $PI_{aaa}$}\label{Algo:CaliPIaaadjusted}
\setstretch{1.25}
\begin{algorithmic}[1]
\Require All required arguments in \cref{Algo:CaliPIwithDNN}.
\State Initialize $g$ number of DNN $H_{\theta,j}(X)\in \mathcal{F}_{\text{DNN}}$ for $j = 1,\cdots, g$; $\mathcal{F}_{\text{DNN}}$ is a DNN class with some structure constraint. 
\State Perform Steps 2-3 in \cref{Algo:CaliPIaa}.
\State Adjust the current left index $l$ and right index $r$ by defining $\tilde{l} = max(1, l-1)$ and $\tilde{r} = min(g,r+1)$.
\State \textbf{Return} $PI_{aaa}:= [q_{\tilde{l}} , q_{\tilde{r}}]$. 
\end{algorithmic}
\end{algorithm}

Lastly, under the assumption about the equal grid distance $D$ between grid points, i.e., 
\begin{itemize}
    \item B5: The equal grid point distance $D$ is large enough s.t., $H_{i} - H_{j}\geq C\cdot  \max_{k\in\{1,\ldots,g\}}$ $\left(\omega_{H_k}^{\mathcal{X}}\left(2M_2 N_1^{-2 / d} N_2^{-2 / d}\right)\right)$ for some constant $C$ and any $i, j \in\{1,\ldots,g\}$ and $i = j + 1$; $\omega_{H_k}^{\mathcal{X}}(r)$ is so-called modulus of continuity of the target function $H_k$ defined on $\mathcal{X}$ for $k\in\{1,\ldots,g\}$; see \cref{Lemma:DNNC0f} for the formal definition.
\end{itemize}
We can give the theoretical reasoning for the finite-sample coverage of $PI_{aaa}$ in \cref{Theorem:finiteCoverage}.
\begin{Theorem}\label{Theorem:finiteCoverage}
    Under A1 to A3, B1 to B2, B4 to B5, and the condition described in \cref{Remakr:2ndcondition}, if the DNN structure is complex enough,  s.t., the $L^\infty$ error between $H_j^*$ and $H_j$ is less than 1 (this is for performing calibration possibly, since $H_i(x_f) - H_j(x_f) \leq 1$ for any $x_f$; we need the maximum $L^\infty$ error between $H_j^*(x_f)$ and $H_j(x_f)$ to be less than the distance between adjacent $H_i$ and $H_{i+1}$ so that the cPI can compensate for the estimation variability), then the $PI_{aaa}$ covers $Y$ conditional on any $x_f \in \mathcal{X}$ with at least probability $1-\alpha$, i.e., 
$$
\mathbb{P}(Y\in PI_{aaa} |X_f = x_f) \geq 1-\alpha.
$$
\end{Theorem}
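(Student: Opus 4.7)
The plan is to convert the probabilistic coverage statement into a deterministic one about the true CDF values $H_j(x_f)=F_{Y\mid x_f}(q_j)$. Because \cref{Remakr:2ndcondition} instructs us to treat $\delta_n=0$, assumption B4 yields $\widehat{H}_j\equiv H_j^*$ identically, so the only residual source of error between $\widehat{H}_j(x_f)$ and $H_j(x_f)$ is the class approximation error $\|H_j^*-H_j\|_{L^\infty(\mathcal{X})}$. I would invoke the ReLU-network approximation bound for continuous functions (\cref{Lemma:DNNC0f}, the same tool used in \cref{Theorem:coverageasy}), applied to the present width $W$ and depth $L$, to obtain
\[
|\widehat{H}_j(x_f)-H_j(x_f)| \;\le\; C_1\,\omega_{H_j}^{\mathcal{X}}(\eta_n), \qquad \eta_n := 2M_2 N_1^{-2/d} N_2^{-2/d},
\]
uniformly in $j\in\{1,\dots,g\}$ and $x_f\in\mathcal{X}$, for some absolute constant $C_1$.

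Next I would propagate this bound through the monotonicity correction C3 of \cref{Subsec:CorrectionforMono}. Since the target sequence $j\mapsto H_j(x_f)$ is genuinely monotone in $j$, projecting an almost-monotone perturbation of it onto the monotone cone (by averaging the left-to-right and right-to-left monotonizations) can inflate the pointwise error by at most a universal constant, so
\[
\varepsilon \;:=\; \sup_{j,\,x_f}\bigl|\widetilde{F}_{\text{Avg}}(q_j)-H_j(x_f)\bigr| \;\le\; C_2\,\max_{k\in\{1,\dots,g\}}\omega_{H_k}^{\mathcal{X}}(\eta_n).
\]
Taking the constant $C$ in B5 to be at least $2C_2$ then delivers the decisive gap inequality $H_{j+1}(x_f)-H_j(x_f)\ge 2\varepsilon$ for every $j\in\{1,\dots,g-1\}$ and every $x_f\in\mathcal{X}$; this is precisely the parenthetical condition in the theorem asking the $L^\infty$ approximation error to be smaller than the distance between adjacent $H_i$ and $H_{i+1}$.

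With these two ingredients the coverage claim reduces to a few lines. For the indices $l,r$ returned by $PI_{aa}$, the definitions give $\widetilde{F}_{\text{Avg}}(q_l)\le\alpha/2$ and $\widetilde{F}_{\text{Avg}}(q_r)\ge 1-\alpha/2$, so the $\varepsilon$-bound yields $H_l(x_f)\le\alpha/2+\varepsilon$ and $H_r(x_f)\ge 1-\alpha/2-\varepsilon$. Combining these with the gap inequality and the definitions $\tilde l=\max(1,l-1)$, $\tilde r=\min(g,r+1)$ gives
\[
H_{\tilde l}(x_f) \;\le\; H_l(x_f)-2\varepsilon \;\le\; \alpha/2, \qquad H_{\tilde r}(x_f) \;\ge\; H_r(x_f)+2\varepsilon \;\ge\; 1-\alpha/2.
\]
Since $H_j(x_f)=\mathbb{P}(Y\le q_j\mid X_f=x_f)$, we conclude $\mathbb{P}(Y\in PI_{aaa}\mid X_f=x_f) \ge H_{\tilde r}(x_f)-H_{\tilde l}(x_f) \ge 1-\alpha$.

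The main obstacle is the second step: turning the qualitative observation that C3 averaging respects the monotone cone into an explicit constant $C_2$ for the $\varepsilon$ bound. This piece cannot be borrowed wholesale from \cref{Theorem:coverageasy} and requires a dedicated analysis of the left-to-right, right-to-left, and averaged correction operators defined in \cref{Subsec:CorrectionforMono}. A smaller boundary issue arises when $l=1$ or $r=g$, where the outward shift in $PI_{aaa}$ is capped; this is absorbed by choosing the grid so that $q_1$ and $q_g$ are close enough to the endpoints of $\mathcal{Y}$ that $H_1(x_f)$ and $1-H_g(x_f)$ are already below $\alpha/2$ under B1 and B2.
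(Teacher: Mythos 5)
Your argument takes the same route as the paper's own proof: use B4 with $\delta_n=0$ (per \cref{Remakr:2ndcondition}) to collapse the estimation error to the pure approximation error from \cref{Lemma:DNNC0f}, then show that under B5 the off-by-one shift built into $PI_{aaa}$ absorbs the resulting index misidentification. The paper presents this almost entirely informally via the diagram in its Figure~1 (tracking a single boundary index and arguing by cases), whereas you carry an explicit uniform $\varepsilon$-bound through the algebra and derive $H_{\tilde l}(x_f)\le\alpha/2$, $H_{\tilde r}(x_f)\ge 1-\alpha/2$ cleanly; your version is the more rigorous rendering of the same idea, not a different one.

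Two remarks. First, the ``main obstacle'' you flag about the monotonization constant $C_2$ is not actually an obstacle. Because the true sequence $j\mapsto H_j(x_f)$ is nondecreasing and $[0,1]$-valued, the left-to-right pass C1 applied to an $\varepsilon_0$-perturbation $b_j$ of it satisfies $a_j-\varepsilon_0\le \max_{k\le j}b_k\le a_j+\varepsilon_0$ (the upper bound uses monotonicity of $a$, the lower bound uses $b_j\ge a_j-\varepsilon_0$), and clipping to $[0,1]$ can only shrink the deviation since $a_j\in[0,1]$; the same holds for C2, and C3 is their average. Hence $C_2=1$, and the $L^\infty$ bound from \cref{Lemma:DNNC0f} transfers to $\widetilde F_{\text{Avg}}$ unchanged, which is implicitly what B5 assumes when it compares the grid gap to $\omega_{H_k}^{\mathcal{X}}$ with no extra factor. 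Second, your observation about the boundary cases $l=1$ or $r=g$ (where $\tilde l$ or $\tilde r$ cannot shift outward) is a legitimate gap that the paper's proof also does not address; your fix — requiring $q_1,q_g$ to be extreme enough that $H_1(x_f)\le\alpha/2$ and $H_g(x_f)\ge 1-\alpha/2$ already hold — is the right patch, though strictly speaking it is an additional assumption beyond B1--B2 and B5 since the paper's grid is taken between the observed sample extremes.
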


\begin{Remark}[Further understanding of assumption B5]
The direct implication of B5 is that we need an equal distance $D$ between grid points to be large and then make the adjustment described in \cref{Algo:PIaaa} to compensate for the uncaptured estimation error of $H_j^*$ for finite sample cases to ensure coverage rate. In other words, with $PI_{aaa}$, the coverage rate shall increase as we decrease the number of grid points $g$ intuitively. However, in practice, this phenomenon may have a more complicated pattern. In the empirical study, we find that the coverage rate of $PI_{aaa}$ may decrease a little bit first and then increase eventually as the number of grid points keeps decreasing. In addition, the $PI_{aaa}$ could be $[\min(y_1,\ldots,y_n), \max(y_1,\ldots,y_n) ]$ when the DNN structure is too simple so that the estimation error is too large; $(y_1,\ldots,y_n)$ are observations of $(Y_1, \ldots, Y_n)$. 

\end{Remark}

\subsection{Theoretical Property of cPI with Kernel Estimators}
With the conditional density estimator $\hat{f}(y|x)$ defined by expressions in \cref{Eq:Kernelestimators}, we first show a non-asymptotic error bound of $\hat{f}(y|x)$ on estimating the true conditional density $f(y|x)$ under standard assumptions in \cref{Lemma:Errorboundofkernel}. The asymptotic validity of cPIs with kernel estimators, e.g., $PI_{aak}$ defined in \cref{Algo:CaliPIaak}, is implied by \cref{Lemma:Errorboundofkernel} directly. Moreover, we can show that the kernel-based cPI guarantees the coverage rate with a high probability when the sample size is larger than some constant. 

\subsubsection{Asymptotic coverage}
To develop the theoretical basis of cPI with kernel estimators, we impose some additional requirements on the kernel function $k(\cdot)$ as follows:
\begin{itemize}
    \item D1: $k(\cdot)$ is positive, differentiable, and bounded on its whole domain, and the derivative $k^{'}(\cdot)$ is also bounded; 
    \item D2: $\int v^2k(v) dv$ is finite. 
\end{itemize}
\begin{Remark}
    A direct result of D1 and the application of product kernel is that $K$ and $\widetilde{K}$ are Lipschitz continuous. This property will be utilized in the proof later. 
\end{Remark}

We verify the asymptotic validity of cPI with kernel estimators by developing a non-asymptotic error bound on density estimation as follows:
\begin{Lemma}\label{Lemma:Errorboundofkernel}
Under B1 to B3, D1 to D2, for the conditional density estimator defined by \cref{Eq:Kernelestimators}, when the sample size is large enough, s.t., $n> \max(N_3,N_4)$, where,
$$
N_3~\text{is a large constant s.t.,}~ 2A_1\frac{\ln(N_3)}{N_3h^{d+1}}^{1/2} \leq \frac{1}{2}~;~N_4~\text{is a large constant s.t.,}~ 2A^{\prime}_1\frac{\ln(N_4)}{N_4h^{d}}^{1/2} \leq \frac{1}{2}; 
$$
where $A_1 = \sup_z\widetilde{K}(z)$ and $A^{\prime}_1 = \sup_xK(x)$. Then, we have 
\begin{equation}\label{Eq:kernelesterror}
    \sup_{x,y}|\hat{f}(y|x) - f(y|x)| = O\left( C\left(\frac{\ln(n)}{nh^{d+1}}\right)^{1/2} + C^{\prime} \left(\frac{\ln(n)}{nh^{d}}\right)^{1/2} + h^2\right),
\end{equation}
with probability $\kappa_n \to 1$ in some rate as $n\to \infty$, $h\to 0$, and $\ln(n)/(nh^{d+1})\to\infty$; $C$ and $C^{\prime}$ are two appropriate large constants required in the proof. We can simplify the term on the r.h.s. of \cref{Eq:kernelesterror} as $O\left( \left(\frac{\ln(n)}{nh^{d+1}}\right)^{1/2}  + h^2\right)$.
\end{Lemma}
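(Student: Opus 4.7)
The plan is to reduce the problem to uniform control of the numerator and denominator of the ratio $\hat{f}(y|x) = \hat{f}(x,y)/\hat{f}(x)$, and then bound each by a bias plus a stochastic-deviation term. Using the algebraic identity
\begin{equation*}
\hat{f}(y|x) - f(y|x) = \frac{\hat{f}(x,y) - f(x,y)}{\hat{f}(x)} - f(y|x)\cdot\frac{\hat{f}(x) - f(x)}{\hat{f}(x)},
\end{equation*}
the task reduces to (i) controlling $\sup_{x,y}|\hat{f}(x,y) - f(x,y)|$, (ii) controlling $\sup_x|\hat{f}(x) - f(x)|$, and (iii) producing a uniform lower bound on $\hat{f}(x)$ that holds with high probability.

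For (i) and (ii) I would write each error as $(\hat{f}-\mathbb{E}\hat{f}) + (\mathbb{E}\hat{f} - f)$. The bias piece $\mathbb{E}\hat{f}(x,y) - f(x,y)$ follows from a second-order Taylor expansion of $f$ around $(x,y)$ after the change of variables $u = (Z_i - z)/h$: the zeroth-order term cancels because $\int k(v)dv = 1$, the first-order term vanishes by the symmetry condition $k(v) = k(-v)$ from (\ref{Eq:fourconditionsonKernel}), and the remainder is controlled by B3 (bounded Hessian $\|\nabla^2 f(x,y)\|_1$) together with D2 ($\int v^2 k(v)dv<\infty$), giving $O(h^2)$. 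The same expansion with $K$ in place of $\widetilde{K}$ handles $\mathbb{E}\hat{f}(x) - f(x)$.

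The stochastic piece is the main obstacle. Since $\mathcal{X}\times\mathcal{Y}$ is compact by B1, I would build an $\varepsilon$-net $\mathcal{N}_\varepsilon$ of cardinality $O(\varepsilon^{-(d+1)})$ and apply Hoeffding's inequality pointwise to the bounded i.i.d. average $\hat{f}(x,y) - \mathbb{E}\hat{f}(x,y) = \frac{1}{nh^{d+1}}\sum_i[\widetilde{K}((z - Z_i)/h) - \mathbb{E}\widetilde{K}(\cdot)]$, using $\|\widetilde{K}\|_\infty \le A_1$. This gives a tail of order $\exp(-c n h^{d+1} t^2 / A_1^2)$ at each net point, and the union bound with $t \asymp A_1\sqrt{\ln n/(nh^{d+1})}$ keeps the exceptional probability at $O(n^{-c})$. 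To pass from the net to all $z$, I would exploit the fact that D1 (bounded $k^\prime$) makes $\widetilde{K}$ Lipschitz, so $|\hat{f}(z) - \hat{f}(z')| \lesssim \|z-z'\|/h^{d+2}$; choosing $\varepsilon \asymp h^{d+2}\sqrt{\ln n/(nh^{d+1})}$ keeps the discretization error of the same order as the stochastic fluctuation. Collecting bias and deviation yields the advertised $O(h^2 + \sqrt{\ln n/(nh^{d+1})})$ bound for (i); the parallel argument with $K$ in place of $\widetilde{K}$ produces $O(h^2 + \sqrt{\ln n/(nh^{d})})$ for (ii).

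Finally, for (iii) the sample-size thresholds $N_3$ and $N_4$ are calibrated precisely so that the deviation term for $\hat{f}(x)$ is at most $1/2$ on the high-probability event from (ii); together with $\inf_x f(x) \ge f_{\min}>0$ from B2, this yields $\hat{f}(x) \ge f_{\min}/2$ uniformly. Plugging (i)--(iii) into the decomposition, using the boundedness of $f(y|x)$ from B2 to control the second summand, and collecting terms gives the claimed rate on an event of probability $\kappa_n \to 1$. The hardest technical step is the covering/Lipschitz argument needed to keep the $\sqrt{\ln n/(nh^{d+1})}$ rate tight: taking $\varepsilon$ too large sacrifices the logarithmic factor, while taking it too small inflates the union bound and destroys the exponent; balancing these via the kernel's Lipschitz constant is what drives the precise constants $A_1, A_1^\prime$ appearing in $N_3$ and $N_4$.
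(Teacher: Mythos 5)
Your overall strategy matches the paper's: decompose the conditional density error via an algebraic identity into a joint-density deviation, a marginal-density deviation, and a denominator control; bound each deviation by a bias piece (second-order Taylor expansion, symmetry killing the first-order term) plus a stochastic piece (finite cover of the compact domain from B1, concentration at net points, Lipschitz extension off the net); and use the thresholds $N_3, N_4$ to keep $\hat f(x)$ bounded below. Your identity is equivalent to the paper's
\begin{equation*}
\hat f(y\mid x)-f(y\mid x)=\frac{1}{\hat f(x)}\bigl(\hat f(x,y)-\hat f(x)f(y\mid x)\bigr),
\end{equation*}
and your choice of mesh $\varepsilon\asymp h^{(d+3)/2}\sqrt{\ln n / n}$ is exactly the paper's $l_n$. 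Your step (iii), where you make explicit that the thresholds guarantee $\hat f(x)\ge f_{\min}/2$ uniformly, is actually spelled out more clearly than in the paper, which absorbs this into a constant $C_1$ with no further comment.

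There is, however, one genuine gap in the concentration step. You invoke Hoeffding's inequality ``using $\|\widetilde K\|_\infty\le A_1$'' and then write down the tail $\exp(-cnh^{d+1}t^2/A_1^2)$. That tail is not what Hoeffding gives. Each summand $\psi_{n,i}$ has range of order $A_1/(nh^{d+1})$, so the classical (range-based) Hoeffding bound yields $\exp\bigl(-c\,t^2 nh^{2(d+1)}/A_1^2\bigr)$ --- an extra factor of $h^{d+1}$ in the exponent compared with what you wrote. With $t\asymp\sqrt{\ln n/(nh^{d+1})}$ the Hoeffding exponent is $h^{d+1}\ln n\to 0$, so it gives essentially no decay and the union bound over the net fails. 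The rate you wrote is the correct one, but it comes from a \emph{variance-aware} (Bernstein/Chernoff-MGF) bound: the variance of the kernel term is $O\bigl(1/(n^2h^{d+1})\bigr)$, a factor $h^{d+1}$ smaller than the square of the range. The paper implements exactly this via the elementary MGF inequality $e^u\le 1+u+u^2$ valid for $|u|\le 1/2$ --- and this is precisely where the thresholds $N_3,N_4$ originate: they guarantee $\lambda_n|\psi_{n,i}|\le 1/2$ so the MGF bound applies. You have attributed the right answer to the wrong inequality; replacing ``Hoeffding'' with ``Bernstein'' (or the explicit MGF calculation) makes the argument sound, and also explains, rather than merely asserts, the role of $N_3$ and $N_4$.
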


A direct implication of \cref{Lemma:Errorboundofkernel} is that $\hat{f}(y|x)$ is consistent with $f(y|x)$ uniformly. Thus, $\widehat{H}^k_{0}(X_f)$ and $\widehat{H}^k_{1}(X_f), \ldots, \widehat{H}^k_{g}(X_f)$ defined in \cref{Algo:CaliPIwithKernel} are consistent with true conditional mean $H_0$ and true conditional CDF evaluated at points $q_1,\ldots,q_g$, i.e., $\mathbb{E}(\mathbbm{1}(Y\leq q_j)|X)$ for $j = 1,\ldots, g$, respectively. As a result, all cPIs with DNN estimators guarantee at least $1-\alpha$ conditional coverage asymptotically. 

\subsubsection{Large-sample coverage}
Beyond the asymptotic coverage property, we can show that there is a high probability that the cPI, which is based on kernel estimators, guarantees the coverage rate when the sample size is large. Compared to the finite-sample coverage property explained in \cref{SubsubSec:finietcover} where the oracle estimation of DNN is required, the large-sample coverage is more achievable. We focus on the adjusted version of the asymmetric cPI described in \cref{Algo:CaliPIaak}, i.e., adjusting the returned left index $l$ and right index $r$ by defining $\tilde{l} = max(1, l-1)$ and $\tilde{r} = min(g,r+1)$ as what we did in \cref{Algo:PIaaa}; we call the adjusted asymmetric cPI based on kernel $PI_{aaak}$. Under the condition about the equal distance of grid points, i.e., 
\begin{itemize}
    \item B$^\prime$5: The equal distance $D$ between grid points $q_j$ and $q_{j+1}$ is large enough s.t., $H_{i} - H_{j}\geq C\cdot \left( \left(\frac{\ln(n)}{nh^{d+1}}\right)^{1/2} + \left(\frac{\ln(n)}{nh^{d}}\right)^{1/2} \right. $ $\left. + h^2\right)$ for some constant $C$ and any $i, j \in\{1,\ldots,g\}$ and $i  = j+1$; this is the condition to perform calibration for compensating the estimation variability in the prediction procedure; review \cref{Remakr:2ndcondition} for more thoughts.
\end{itemize}
In addition, suppose that there is a constant $N_5$ s.t., $\left|\int^{q_j}_{-\infty} \left( \hat{f}(y|x) - f(y|x) \right) dy \right| < 1$ for all $j = 1,\ldots, g$ and all $x\in\mathcal{X}$ when $n>N_5$. Then, we have:
\begin{Theorem}\label{Theorem:largePropertyCPIK}
    Under B1 to B3, B$^{\prime}$5, D1 to D2, with the sample size $n>\max(N3,N4,N5)$, where $N_3$ and $N_4$ are defined in \cref{Lemma:Errorboundofkernel}, for samples $S_n:=\{(X_i,Y_i)\}_{i=1}^n $ s.t., $\mathbb{P}(S_n\notin A_n) = o(1)$, where $A_n\subseteq\mathcal{X}\times\mathcal{Y}$ is an expanding set as $n$ increasing, we can conclude that:
    $$
\mathbb{P}(Y\in PI_{aaak} |X_f = x_f) \geq 1-\alpha
$$
for any $x_f\in\mathcal{X}$ as $n\to \infty$, $h\to 0$, and $\ln(n)/(nh^{d+1})\to\infty$. 
\end{Theorem}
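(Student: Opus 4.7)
The plan is to reduce the conditional coverage guarantee to a uniform CDF approximation statement, and then exploit assumption B$^\prime$5 together with the one-step adjustments $\tilde l = \max(1,l-1)$, $\tilde r = \min(g, r+1)$ in $PI_{aaak}$ to absorb the kernel-based estimation error. The starting point is \cref{Lemma:Errorboundofkernel}, which provides, on an event $A_n$ with $\mathbb{P}(A_n^c)=o(1)$, a uniform bound
$$\sup_{x,y}\bigl|\hat{f}(y|x)-f(y|x)\bigr| = O(\epsilon_n), \qquad \epsilon_n := \sqrt{\ln(n)/(nh^{d+1})}+h^2,$$
valid once $n>\max(N_3,N_4)$.

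First I would transfer this density bound to a uniform CDF bound. Since $Y$ has compact support $[-M_1,M_1]$ by B1, for every grid point $q_j$ and every $x_f\in\mathcal{X}$ on $A_n$,
$$\bigl|\widehat{H}^k_j(x_f)-H_j(x_f)\bigr| = \left|\int_{-M_1}^{q_j}\bigl(\hat{f}(y|x_f)-f(y|x_f)\bigr)\,dy\right| \leq 2M_1\sup_{x,y}|\hat{f}(y|x)-f(y|x)| =: \tilde\epsilon_n.$$
Thus $\sup_{j,x_f}|\widehat{H}^k_j(x_f)-H_j(x_f)|\leq \tilde\epsilon_n$. The non-negativity of $\hat{f}(y|x)$ under D1 makes $\widehat{H}^k_j(x_f)$ automatically non-decreasing in $j$ (as noted after \cref{Algo:CaliPIwithKernel}), so $\widetilde{F}_{\text{Avg}}(q_j)=\widehat{H}^k_j(x_f)$ and no monotonicity correction is needed. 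The auxiliary assumption involving $N_5$ in the theorem's hypotheses is then automatic once $\tilde\epsilon_n<1$, i.e., for all $n$ large.

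Next, on $A_n$, the construction of $l$ in \cref{Algo:CaliPIaak} gives $\widehat{H}^k_l(x_f)\leq \alpha/2$, hence $H_l(x_f)\leq \alpha/2+\tilde\epsilon_n$. Assumption B$^\prime$5 provides $H_l(x_f)-H_{l-1}(x_f)\geq C\epsilon_n$ for the constant $C$ appearing in that hypothesis, taken large enough that $C\epsilon_n\geq \tilde\epsilon_n$ (i.e., $C$ exceeds $2M_1$ times the implicit constant in the density error bound). This yields $H_{\tilde l}(x_f)=H_{l-1}(x_f)\leq \alpha/2+\tilde\epsilon_n - C\epsilon_n\leq \alpha/2$. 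A symmetric argument for the right endpoint gives $H_{\tilde r}(x_f)\geq 1-\alpha/2$. Combining, on $A_n$,
$$\mathbb{P}(Y\in PI_{aaak}\,|\,X_f=x_f) = F_{Y|X_f}(q_{\tilde r})-F_{Y|X_f}(q_{\tilde l}^{-}) \geq H_{\tilde r}(x_f)-H_{\tilde l}(x_f) \geq 1-\alpha,$$
and since $\mathbb{P}(A_n^c)=o(1)$, the claimed inequality holds in the stated asymptotic sense.

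The main obstacle is the calibration of the constant $C$ in B$^\prime$5: it must strictly dominate $2M_1$ times the leading constant of the density error bound from \cref{Lemma:Errorboundofkernel}, which is precisely what forces the one-step adjustment and explains why $PI_{aaak}$ (rather than the unadjusted $PI_{aak}$) is needed for a provable coverage guarantee. Some additional bookkeeping handles the corner cases $l=1$ or $r=g$: because B1 makes the grid span all of $[-M_1,M_1]$, one verifies that on $A_n$ the unadjusted indices satisfy $1\leq l<r\leq g$ for $n$ sufficiently large, so the $\max$ and $\min$ truncations in $\tilde l,\tilde r$ do not disturb the inequalities above; any remaining boundary contribution vanishes because $H_1(x_f)\to 0$ and $H_g(x_f)\to 1$ under B2.
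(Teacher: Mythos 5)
Your proof is correct and takes the same route the paper intends: the paper's own proof of \cref{Theorem:largePropertyCPIK} is merely the one-line remark that the argument is straightforward and analogous to \cref{Theorem:finiteCoverage}, and your adaptation executes exactly that plan (show $H_{\tilde l}(x_f)\le\alpha/2$ and $H_{\tilde r}(x_f)\ge 1-\alpha/2$ by chaining the calibration inequality, the uniform estimation error, and the grid-spacing condition, with the one-step adjustment absorbing the error). The extra detail you supply beyond a direct transcription of the DNN-based argument---converting the uniform density error of \cref{Lemma:Errorboundofkernel} into a uniform CDF error $\tilde\epsilon_n \le 2M_1 \sup_{x,y}|\hat f(y|x)-f(y|x)|$, and observing that the positivity of $\hat f$ removes the need for any monotonicity correction---is precisely the kernel-specific bookkeeping the paper glosses over, and it makes the $N_5$ hypothesis hold automatically once $n$ is large.
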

The proof of \cref{Theorem:largePropertyCPIK} is straightforward and similar to the proof of \cref{Theorem:finiteCoverage}.

\section{Simulations}\label{Sec:Simulation}
We first deploy simulation studies to verify the performance of cPI based on DNN estimators. Due to the issue of limited computational resources, we leave the validation of cPI with kernel estimators to \cref{Sec:RealData} where several PIs based on deep-generative approaches are also compared with our various cPIs for real data analysis. In total, 6 non-linear models in homoscedastic and heteroscedastic formats with different error distributions are considered:
\begin{itemize}
    \item Model-1: $Y_i = X_{i,1}^2 + \sin(X_{i,2} + X_{i,3}) + \epsilon_{i,z}$;
    \item Model-2: $Y_i = X_{i,1}^2 + \sin(X_{i,2} + X_{i,3}) + \epsilon_{i,t}$;
    \item Model-3: $Y_i = X_{i,1}^2 + \sin(X_{i,2} + X_{i,3}) + \epsilon_{i,s}$;
    \item Model-4: $Y_i=X_{i,1}^2+\exp \left(X_{i,2}+X_{i,3} / 3\right)+$ $X_{i,4}-X_{i,5}+\left(0.5+X_{i,2}^2 / 2+X_{i,5}^2 / 2\right) \cdot \epsilon_{i,z}$; 
    \item Model-5: $Y_i=X_{i,1}^2+\exp \left(X_{i,2}+X_{i,3} / 3\right)+$ $X_{i,4}-X_{i,5}+\left(0.5+X_{i,2}^2 / 2+X_{i,5}^2 / 2\right) \cdot \epsilon_{i,t}$;
    \item Model-6: $Y_i=X_{i,1}^2+\exp \left(X_{i,2}+X_{i,3} / 3\right)+$ $X_{i,4}-X_{i,5}+\left(0.5+X_{i,2}^2 / 2+X_{i,5}^2 / 2\right) \cdot \epsilon_{i,s}$;
\end{itemize}
$\epsilon_{i,z} \sim N(0,1)$; $\epsilon_{i,t} \sim t_5$; $\epsilon_{i,s} \sim \text{Skew-normal}(0,1,10)$; $(X_{i,1},X_{i,2},X_{i,3},X_{i,4},X_{i,5})\sim N(0,I_5)$ for all $i$. When we apply the skew-normal error distribution, we normalize the variance of the generated sample to 1. 

To verify the asymptotic property of cPI to some extent, we consider two relatively large sample sizes $ n = 2000$ and $n = 10000$. In our expectation, the performance of cPI improves as the sample size increases. Besides, the naive PI (\ref{Eq:PInormal}), taken as the benchmark, still returns a coverage rate less than the nominal level when the sample size is large. This is not in surprise since the naive PI (\ref{Eq:PInormal}) heavily relies on the normality assumption and does not attempt to capture the estimation variability. We denote PI (\ref{Eq:PInormal}) by $PI_{b}$.

For cPI, we consider five cPIs with different calibration approaches described in \cref{Sec:DiffCali}. To compare different methods, we generate $\{Y_i,X_i\}_{i=1}^{n}$ as the training sample; $\{X_i\}_{i=1}^n$ are vectors of predictors. For all DNNs involved in different PIs, we use the same structure, i.e., $[W,W]$ which stands for a two-layer DNN and each layer has $W$ neurons. For the hyperparameter setting, we take $lr = 0.001$; $B = 200$; $E = 2000$; $g = 200$; $m = 20$; $\alpha = 0.05$ and apply \textit{Adam} optimizer. We also consider five different widths of DNN, i.e., $W = 10, 20, \ldots, 50$. To make our simulation studies reproducible, we do replications $S=500$ times and generate training samples with $seed(i)$ for $i = 1,\ldots, 500$. To test the performance of prediction intervals for all replications, we consider multiple interests of $X^{f}$ and corresponding $Y^f$. We generate test data $\{Y^f_{j},X^f_{j}\}_{j=1}^{T}$ with $seed(1000)$ and evaluate the empirical coverage rate on $\{Y^f_{j}\}_{j=1}^T$ given $X^f_{j}$.

\begin{Remark}
    For the PI (\ref{Eq:PInormal}), it is not well-defined if $\hat{\kappa}_{Y|X^f_{j}} -  \hat{\mu}^2_{Y|X^f_{j}}<0$ for the $j$-th test point. We ignore all these test points in the evaluation process. 
\end{Remark}
To make evaluations, we compute the average coverage rate (CR) and length (AL) from 500 replications over $T = 2000$ test points as follows:
\begin{equation}\label{Eq:CRAL}
    \text{CR}:= \frac{1}{S}\sum_{i = 1}^{S}\frac{1}{\widetilde{T}}\sum_{j=1}^{\widetilde{T}}\mathbb{P}( Y^{f}_{i,j}\in \widehat{I}_C|X^{f}_{i,j})~;~\text{AL}:= \frac{1}{S}\sum_{i = 1}^{S}\frac{1}{\widetilde{T}}\sum_{j=1}^{\widetilde{T}} (R_{i,j} - L_{i,j});
\end{equation}
where $S$ is the number of replications, which is 500 in our simulation studies; $\widetilde{T}$ is the number of test points out of total $T$ test points whose corresponding $PI_{b}$ is well-defined; $Y^{f}_{i,j}$ is the future response value for the $j$-th test point in the $i$-th replication; $\widehat{I}_C$ represents different PIs conditional on $X^{f}_{i,j}$; $R_{i,j}$ and $L_{i,j}$ are the right and left endpoints of different PIs for the $j$-th test data in the $i$-th replication. Since the simulation model is known to us, we can approximate the conditional probability $\mathbb{P}( Y^{f}_{i,j}\in \widehat{I}_C|X^{f}_{i,j})$ by the Monte Carlo simulation technique, i.e., we define 
\begin{equation}
    \widehat{\mathbb{P}}( Y^{f}_{i,j}\in \widehat{I}_C|X^{f}_{i,j}) = \frac{1}{V}\sum_{v = 1}^V \mathbbm{1}( Y^{f}_{i,j,v} \in \widehat{I}_C | X^{f}_{i,j} );
\end{equation}
where $Y^{f}_{i,j,v}$ is the $v$-th pseudo values simulated from each model conditional on $X^{f}_{i,j}$. We take $V = 5000$ in our simulation study. Then, we replace $\mathbb{P}( Y^{f}_{i,j}\in \widehat{I}_C|X^{f}_{i,j})$ in \cref{Eq:CRAL} by $ \widehat{\mathbb{P}}( Y^{f}_{i,j}\in \widehat{I}_C|X^{f}_{i,j})$ to compute the CR. Simulation results for Models 1-3 and Models 4-6 are presented in \cref{Tab:SimresM1-3,Tab:SimresM4-6}, respectively. We can summarize several findings:
\begin{itemize}
    \item The benchmark $PI_{b}$ undercovers the test points with an average coverage rate lower than the nominal level. This phenomenon sustains even when the sample size is large. One exception is for Model-6, where $PI_{b}$ overcovers the test points, but with a pretty large AL as a sacrifice.
    
    \item For our cPIs, $PI_{m}$ is the one with the minimum length, but the coverage rate is the lowest among the other cPIs. Although the CR of $PI_{m}$ is less than the nominal level when the sample size is small, it still has a better CR than the benchmark $PI_b$ and is shorter than $PI_b$. For the two pairs of symmetric and asymmetric PIs, the variants with two endpoints determined by $\widetilde{F}_{\text{RtoL}}$ and $\widetilde{F}_{\text{LtoR}}$ always have a larger AL than the corresponding one with endpoints regulated by $\widetilde{F}_{\text{Avg}}$, so the CR of $PI_{st}$ and $PI_{at}$ are higher than $PI_{sa}$ and $PI_{aa}$, respectively. This reason is that the application of $\widetilde{F}_{\text{RtoL}}$ and $\widetilde{F}_{\text{LtoR}}$ is designed to enlarge the resulting PI. Regarding the overall performance on CR and AL, $PI_{aa}$ works best; it can deliver a CR close to a nominal level even for small sample scenarios; simultaneously, it possesses even a shorter length compared to $PI_b$ for some scenarios.

    \item As the sample size increases, the behavior of all cPIs verifies the asymptotic consistency, i.e., their CR increases or AL decreases or these two changes happen together. In other words, as the sample size increases, the performance of cPIs gets better. On the other hand, the CR of $PI_b$ hardly changes since the undercoverage issue of $PI_b$ hinges on the loss of capturing the estimation variability and the model misspecification when the error distribution is non-normal. Notably, $PI_b$ can not give a higher CR than the nominal level, even for Models 1 and 4 in which the error distribution is normal. This highlights the necessity of capturing the estimation variability in PI. By design, our cPIs allow a systematic way to compensate for the estimation variability, which is not usually directly captured.
    
    \item When the sample size is small, the general performance of different PIs gets worse as the size of the DNN increases. According to the double descent error property of DNN estimation, we postulate that the performance of different PIs will get better and better if we keep increasing the size of the DNN. However, this manipulation will increase the usage of memory and extend the running time unnecessarily. 
\end{itemize}

Based on the simulation results presented in \cref{Tab:SimresM1-3,Tab:SimresM4-6}, we recommend $PI_{aa}$ as the optimal cPI when the training sample size is moderate. The short sample scenario will be considered with real data in the next section.
\begin{table}[htbp]
    \centering
\caption{Average coverage rate (CR) and length (AL) of different PIs for Models 1-3.}
\begin{tabular}{lllllll|llllll}

\toprule
 & $PI_b$ & $PI_{m}$ & $PI_{sa}$ & $PI_{st}$ & $PI_{aa}$ & $PI_{at}$ & $PI_b$ & $PI_{m}$ & $PI_{sa}$ & $PI_{st}$ & $PI_{aa}$ & $PI_{at}$ \\
\midrule
 & \multicolumn{6}{c}{$n = 2000$}   & \multicolumn{6}{c}{$n = 10000$}  \\
Model-1 &    & \\[2pt]
CR-w10 & 0.916 & 0.931 & 0.960 & 0.985 & \textbf{0.951} & 0.978 & 0.918 & 0.937 & 0.958 & 0.977 & \textbf{0.952} & 0.972 \\
AL-w10 & 4.267 & 4.174 & 4.712 & 5.888 & \textbf{4.664} & 5.906 & 4.155 & 3.962 & 4.296 & 5.055 & \textbf{4.277} & 5.120 \\
CR-w20 & 0.906 & 0.930 & 0.969 & 0.994 & \textbf{0.956} & 0.985 & 0.919 & 0.933 & 0.959 & 0.982 & \textbf{0.950} & 0.976 \\
AL-w20 & 4.115 & 4.479 & 5.221 & 6.823 & \textbf{5.155} & 6.911 & 3.985 & 4.023 & 4.452 & 5.562 & \textbf{4.428} & 5.667 \\
CR-w30 & 0.899 & 0.934 & 0.976 & 0.996 & \textbf{0.962} & 0.988 & 0.916 & 0.930 & 0.962 & 0.987 & \textbf{0.951} & 0.980 \\
AL-w30 & 4.117 & 4.771 & 5.648 & 7.357 & \textbf{5.563} & 7.437 & 3.975 & 4.120 & 4.660 & 5.958 & \textbf{4.613} & 6.089 \\
CR-w40 & 0.893 & 0.937 & 0.980 & 0.997 & \textbf{0.966} & 0.989 & 0.912 & 0.927 & 0.964 & 0.989 & \textbf{0.952} & 0.982 \\
AL-w40 & 4.123 & 4.963 & 5.911 & 7.594 & \textbf{5.805} & 7.647 & 3.979 & 4.195 & 4.820 & 6.141 & \textbf{4.754} & 6.261 \\
CR-w50 & 0.887 & 0.939 & 0.982 & 0.997 & \textbf{0.968} & 0.989 & 0.908 & 0.925 & 0.965 & 0.989 & \textbf{0.953} & 0.982 \\
AL-w50 & 4.151 & 5.089 & 6.081 & 7.694 & \textbf{5.952} & 7.699 & 3.986 & 4.238 & 4.916 & 6.183 & \textbf{4.833} & 6.278 \\[5pt]

Model-2 &    &  \\[2pt]
CR-w10 & 0.921 & 0.932 & 0.960 & 0.985 & \textbf{0.951} & 0.978 & 0.929 & 0.937 & 0.955 & 0.974 & \textbf{0.950} & 0.970 \\
AL-w10 & 5.271 & 5.555 & 6.490 & 8.625 & \textbf{6.342} & 8.368 & 5.168 & 5.172 & 5.705 & 7.055 & \textbf{5.668} & 6.974 \\
CR-w20 & 0.911 & 0.934 & 0.970 & 0.991 & \textbf{0.957} & 0.985 & 0.928 & 0.933 & \textbf{0.957} & 0.981 & 0.949 & 0.975 \\
AL-w20 & 5.170 & 6.112 & 7.354 & 9.986 & \textbf{7.113} & 9.696 & 5.092 & 5.312 & \textbf{6.040} & 8.029 & 5.948 & 7.877 \\
CR-w30 & 0.902 & 0.938 & 0.976 & 0.994 & \textbf{0.963} & 0.988 & 0.924 & 0.930 & \textbf{0.960} & 0.985 & 0.949 & 0.978 \\
AL-w30 & 5.177 & 6.549 & 7.943 & 10.68 & \textbf{7.645} & 10.37 & 5.068 & 5.472 & \textbf{6.377} & 8.626 & 6.214 & 8.416 \\
CR-w40 & 0.895 & 0.940 & 0.979 & 0.994 & \textbf{0.966} & 0.989 & 0.919 & 0.928 & 0.963 & 0.987 & \textbf{0.950} & 0.980 \\
AL-w40 & 5.202 & 6.809 & 8.263 & 10.95 & \textbf{7.933} & 10.63 & 5.066 & 5.598 & 6.618 & 8.849 & \textbf{6.412} & 8.621 \\
CR-w50 & 0.890 & 0.942 & 0.981 & 0.995 & \textbf{0.968} & 0.990 & 0.914 & 0.927 & 0.965 & 0.988 & \textbf{0.952} & 0.981 \\
AL-w50 & 5.238 & 6.968 & 8.446 & 11.00 & \textbf{8.093} & 10.69 & 5.063 & 5.678 & 6.767 & 8.876 & \textbf{6.530} & 8.658 \\[5pt]

Model-3 &   &  \\[2pt]

CR-w10 & 0.923 & 0.935 & 0.956 & 0.980 & \textbf{0.954} & 0.978 & 0.919 & 0.945 & 0.957 & 0.972 & \textbf{0.959} & 0.976 \\
AL-w10 & 4.828 & 3.945 & 4.563 & 5.723 & \textbf{4.522} & 5.744 & 4.450 & 3.726 & 4.282 & 5.029 & \textbf{4.139} & 4.993 \\
CR-w20 & 0.909 & 0.933 & \textbf{0.964} & 0.988 & 0.957 & 0.984 & 0.914 & 0.940 & 0.959 & 0.977 & \textbf{0.956} & 0.978 \\
AL-w20 & 4.503 & 4.253 & \textbf{5.022} & 6.477 & 5.023 & 6.671 & 4.107 & 3.800 & 4.442 & 5.458 & \textbf{4.281} & 5.492 \\
CR-w30 & 0.905 & 0.935 & \textbf{0.969} & 0.991 & 0.961 & 0.986 & 0.911 & 0.935 & 0.959 & 0.981 & \textbf{0.954} & 0.980 \\
AL-w30 & 4.547 & 4.545 & \textbf{5.411} & 6.917 & 5.437 & 7.155 & 4.088 & 3.891 & 4.599 & 5.805 & \textbf{4.463} & 5.883 \\
CR-w40 & 0.903 & 0.937 & \textbf{0.973} & 0.993 & 0.964 & 0.987 & 0.908 & 0.930 & 0.959 & 0.983 & \textbf{0.953} & 0.980 \\
AL-w40 & 4.649 & 4.741 & \textbf{5.656} & 7.148 & 5.689 & 7.382 & 4.133 & 3.964 & 4.715 & 6.005 & \textbf{4.608} & 6.062 \\
CR-w50 & 0.902 & 0.939 & \textbf{0.975} & 0.994 & 0.966 & 0.987 & 0.907 & 0.927 & 0.959 & 0.983 & \textbf{0.952} & 0.979 \\
AL-w50 & 4.736 & 4.875 & \textbf{5.822} & 7.291 & 5.854 & 7.475 & 4.212 & 3.994 & 4.778 & 6.084 & \textbf{4.679} & 6.058 \\

\bottomrule
\end{tabular}
\vspace{2pt}

    \raggedright
    {\footnotesize Note: $PI_b$, $PI_{m}$, $PI_{sa}$, $PI_{st}$, $PI_{aa}$ and $PI_{at}$ represent different PIs as explained in the main text; CR-w10 and AL-w10 stand for the average coverage rate and length with DNN possessing a structure [10,10], respectively, and so on. We use the bold font to indicate the optimal cPI among various variants under different training sample sizes, structures of DNN, and simulation models. The optimal one is determined by selecting the variant with the shortest AL and at least $95\%$ CR. If no cPI can reach the $95\%$ CR, we select the variant with the largest CR as the optimal one.}
    \label{Tab:SimresM1-3}
\end{table}

\begin{table}[htbp]
    \centering
\caption{Average coverage rate (CR) and length (AL) of different PIs for Models 4-6.}
\begin{tabular}{lllllll|llllll}

\toprule
 & $PI_b$ & $PI_{m}$ & $PI_{sa}$ & $PI_{st}$ & $PI_{aa}$ & $PI_{at}$ & $PI_b$ & $PI_{m}$ & $PI_{sa}$ & $PI_{st}$ & $PI_{aa}$ & $PI_{at}$ \\
\midrule
 & \multicolumn{6}{c}{$n = 2000$}   & \multicolumn{6}{c}{$n = 10000$}  \\
Model-4 &    & \\[2pt]
CR-w10 & 0.944 & 0.941 & 0.976 & 0.991 & \textbf{0.961} & 0.981 & 0.943 & 0.947 & 0.973 & 0.987 & \textbf{0.962} & 0.978 \\
AL-w10 & 9.052 & 7.363 & 8.950 & 11.58 & \textbf{8.578} & 11.39 & 8.449 & 6.428 & 7.419 & 9.193 & \textbf{7.186} & 9.267 \\
CR-w20 & 0.933 & 0.946 & 0.983 & 0.994 & \textbf{0.967} & 0.983 & 0.935 & 0.939 & 0.973 & 0.990 & \textbf{0.958} & 0.979 \\
AL-w20 & 9.120 & 8.706 & 10.78 & 13.73 & \textbf{10.22} & 12.84 & 8.211 & 7.004 & 8.301 & 10.92 & \textbf{8.028} & 10.87 \\
CR-w30 & 0.927 & \textbf{0.953} & 0.987 & 0.995 & 0.971 & 0.985 & 0.929 & 0.937 & 0.976 & 0.992 & \textbf{0.959} & 0.980 \\
AL-w30 & 9.514 & \textbf{9.724} & 12.06 & 14.90 & 11.22 & 13.70 & 8.383 & 7.539 & 9.149 & 12.10 & \textbf{8.853} & 11.77 \\
CR-w40 & 0.924 & \textbf{0.956} & 0.988 & 0.995 & 0.972 & 0.985 & 0.924 & 0.935 & 0.977 & 0.992 & \textbf{0.960} & 0.978 \\
AL-w40 & 9.912 & \textbf{10.14} & 12.61 & 15.34 & 11.57 & 14.02 & 8.539 & 7.884 & 9.710 & 12.53 & \textbf{9.332} & 11.89 \\
CR-w50 & 0.921 & \textbf{0.954} & 0.987 & 0.994 & 0.970 & 0.983 & 0.921 & 0.933 & 0.977 & 0.991 & \textbf{0.958} & 0.975 \\
AL-w50 & 10.11 & \textbf{10.07} & 12.60 & 15.22 & 11.45 & 13.82 & 8.670 & 7.941 & 9.884 & 12.45 & \textbf{9.385} & 11.50 \\[5pt]
Model-5 &    &  \\[2pt]
CR-w10 & 0.938 & 0.939 & 0.972 & 0.989 & \textbf{0.959} & 0.981 & 0.941 & 0.942 & 0.966 & 0.983 & \textbf{0.957} & 0.976 \\
AL-w10 & 10.58 & 9.859 & 12.33 & 16.20 & \textbf{11.58} & 15.50 & 9.937 & 8.462 & 9.993 & 12.86 & \textbf{9.570} & 12.58 \\
CR-w20 & 0.926 & 0.947 & 0.982 & 0.992 & \textbf{0.968} & 0.984 & 0.931 & 0.938 & 0.969 & 0.988 & \textbf{0.956} & 0.980 \\
AL-w20 & 10.87 & 11.58 & 14.55 & 18.49 & \textbf{13.57} & 17.15 & 9.902 & 9.339 & 11.35 & 15.30 & \textbf{10.80} & 14.76 \\
CR-w30 & 0.922 & \textbf{0.955} & 0.986 & 0.993 & 0.973 & 0.986 & 0.926 & 0.938 & 0.974 & 0.990 & \textbf{0.960} & 0.981 \\
AL-w30 & 11.54 & \textbf{12.74} & 15.85 & 19.69 & 14.68 & 18.21 & 10.14 & 9.958 & 12.32 & 16.28 & \textbf{11.68} & 15.51 \\
CR-w40 & 0.919 & \textbf{0.959} & 0.986 & 0.993 & 0.974 & 0.987 & 0.923 & 0.938 & 0.976 & 0.990 & \textbf{0.961} & 0.980 \\
AL-w40 & 11.93 & \textbf{13.14} & 16.31 & 20.11 & 15.02 & 18.65 & 10.33 & 10.26 & 12.78 & 16.36 & \textbf{12.08} & 15.39 \\
CR-w50 & 0.916 & \textbf{0.958} & 0.986 & 0.993 & 0.973 & 0.986 & 0.919 & 0.936 & 0.976 & 0.989 & \textbf{0.961} & 0.978 \\
AL-w50 & 12.14 & \textbf{12.96} & 16.14 & 19.75 & 14.77 & 18.31 & 10.49 & 10.37 & 12.98 & 16.13 & \textbf{12.15} & 14.89 \\[5pt]
Model-6 &   &  \\[2pt]
CR-w10 & 0.966 & 0.940 & \textbf{0.971} & 0.990 & 0.958 & 0.974 & 0.963 & \textbf{0.951} & 0.970 & 0.983 & 0.967 & 0.979 \\
AL-w10 & 11.86 & 7.064 & \textbf{8.745} & 11.58 & 8.781 & 12.20 & 10.68 & \textbf{6.042} & 7.321 & 9.058 & 7.055 & 9.701 \\
CR-w20 & 0.955 & 0.943 & 0.978 & 0.993 & \textbf{0.960} & 0.975 & 0.950 & 0.941 & 0.967 & 0.986 & \textbf{0.960} & 0.978 \\
AL-w20 & 11.42 & 8.494 & 10.62 & 13.88 & \textbf{10.59} & 13.39 & 9.795 & 6.701 & 8.220 & 10.82 & \textbf{8.042} & 11.51 \\
CR-w30 & 0.952 & \textbf{0.951} & 0.985 & 0.994 & 0.964 & 0.978 & 0.945 & 0.938 & 0.970 & 0.990 & \textbf{0.960} & 0.977 \\
AL-w30 & 11.96 & \textbf{9.612} & 12.01 & 14.90 & 11.60 & 14.09 & 9.876 & 7.240 & 9.033 & 12.25 & \textbf{8.995} & 12.46 \\
CR-w40 & 0.951 & \textbf{0.954} & 0.986 & 0.994 & 0.965 & 0.978 & 0.945 & 0.936 & 0.973 & 0.991 & \textbf{0.959} & 0.974 \\
AL-w40 & 12.54 & \textbf{10.12} & 12.64 & 15.28 & 11.99 & 14.48 & 10.22 & 7.629 & 9.659 & 13.01 & \textbf{9.587} & 12.65 \\
CR-w50 & 0.950 & \textbf{0.953} & 0.986 & 0.994 & 0.965 & 0.977 & 0.945 & 0.934 & 0.975 & 0.991 & \textbf{0.956} & 0.970 \\
AL-w50 & 12.93 & \textbf{10.15} & 12.76 & 15.33 & 12.01 & 14.60 & 10.53 & 7.836 & 10.07 & 13.24 & \textbf{9.773} & 12.29 \\

\bottomrule
\end{tabular}
\vspace{2pt}

    \raggedright
    {\footnotesize Note: $PI_b$, $PI_{m}$, $PI_{sa}$, $PI_{st}$, $PI_{aa}$ and $PI_{at}$ represent different PIs as explained in the main text; CR-w10 and AL-w10 stand for the average coverage rate and length with DNN possessing a structure [10,10], respectively, and so on. We use the bold font to indicate the optimal cPI among various variants under different training sample sizes, structures of DNN, and simulation models. The optimal one is determined by selecting the variant with the shortest AL and at least $95\%$ CR. If no cPI can reach the $95\%$ CR, we select the variant with the largest CR as the optimal one.}
    \label{Tab:SimresM4-6}
\end{table}

\FloatBarrier

\section{Real Data Analysis}\label{Sec:RealData}
In this section, we propose a comprehensive empirical study with the wine quality dataset from \cite{cortez2009modeling}. This dataset contains two sub-datasets (size 1599 and 4898, respectively) about red and white wine, which are available at the UCI machine learning repository (\url{https://archive.ics.uci.edu/dataset/186/wine+quality}). For each sub-dataset, there are eleven quantitative variables as predictors and a quantitative variable, which is a score between 0 and 10 to measure the wine quality. We try to build various PIs to measure the prediction accuracy of wine quality once the values of eleven predictors are given. 

According to the simulation studies, the asymmetric calibration approach with $\widetilde{F}_{Avg}$ works best. Thus, we only consider $PI_{aa}$ in the real data analysis to simplify the presentation. In addition, as discussed in \cref{SubsubSec:finietcover}, we could apply the $PI_{aaa}$, which is an adjusted version of $PI_{aa}$ to further correct the undercoverage issue when the sample is finite. 

For kernel-based cPIs, we take the univariate Gaussian kernel to perform the estimation. Besides, we have several bandwidth-choice strategies. In this empirical study, four approaches can be acquired by considering the combinations of treating $Y$ as a continuous or ordered (discrete) variable and applying the maximum likelihood or least square cross-validation method. The performance of using least-square cross-validation to determine $\hat{f}(y|x)$ is pretty unsatisfactory when the sample size is small, so we do not present cPI results with such kernel estimators here; see \cref{B2:cross-validationmethod} for the introduction of the density estimation with two different cross-validation methods. In short, the cPI, which is based on least-square cross-validation, fails since the conditional density estimation $\hat{f}(y|x)$ can be pretty numerically unstable. So, we take the maximum likelihood cross-validation methods to build kernel estimators. For the calibration strategy, we still focus on the asymmetric calibration way. 

Additionally, we consider several PIs based on deep-generative approaches: (1) Quantile PI and Pertinent PI proposed by \cite{wu2024deep}; (2) PI based on adversarial training w.r.t. KL-divergence proposed by \cite{zhou2023deep}; (3) PI based on adversarial training w.r.t. Wasserstein-1 distance proposed by \cite{liu2021wasserstein}; see corresponding references for more details. We take the same DNN structure used in these existing works, i.e., these deep generative PIs are built with a DNN possessing one hidden layer with 50 neurons.

In total, we consider 8 cPIs based on DNN or kernel estimator, i.e., $PI_{aa}$, $PI_{aaa}$, $PI^s_{aa}$, $PI^s_{aaa}$, $PI^{cm}_{aak}$, $PI^{cm}_{aaak}$, $PI^{om}_{aak}$, $PI^{om}_{aaak}$; here $PI_{aa}$ and $PI_{aaa}$ represent the asymmetric and adjusted asymmetric calibration prediction intervals with DNN estimators, respectively; the structure of DNN is $[10,10]$; for a fair comparison, $PI^s_{aa}$ and $PI^s_{aaa}$ represent the variants of $PI_{aa}$ and $PI_{aaa}$ with all DNN structures being $[50]$ which is the same structure used in deep generative methods described above, respectively; $PI^{*}_{aak}$ is the calibration prediction interval based on kernel estimators; $*$ can be replaced by $cm$, $om$. The first letter of the up script stands for the type of the dependent random variable $Y$, i.e., $c$ and $o$, which means that we treat $Y$ as continuous and ordered-discrete variables, respectively. The second letter of the up script stands for the bandwidth-choice strategy, i.e., $m$ represents the maximum likelihood of cross-validation. To see the effects of the equal distance $D$ on the coverage rate, we consider $g = 200, 100, 50, 25, 12, 5$. 

To inspect the performance of different PIs in practice, we intend to make a small training set. Meanwhile, we hope to make a comparison between cPI and deep generative PI, so we take the same training and test sets used in \cite{wu2024deep}, i.e., 199 training data and 1120 test data for the red wine case, which can be reproduced by applying the function \textit{train$\_$test$\_$split} from \textit{sklearn.model$\_$selection} to do such splitting with \textit{random$\_$state=1}. For real datasets, we can not perform replications to compute the coverage rate and average prediction interval length in a precise way. Thus, we define the coverage rate and average length for real study as $CR_t:= \frac{1}{T_t}\sum_{i=1}^{T_t}\mathbbm{1}( Y^t_{i}\in \widehat{I}_C|X^t_{i})$ and $AL_t := \frac{1}{T_t}\sum_{i=1}^{T_t} (R_{i} - L_{i})$; $T_t$ is the size of test data; $\{Y^t_{i}, X^t_{i}\}_{i=1}^{T_t}$ is the test dataset; $\widehat{I}_C$ represents all PIs in our considerations; $R_i$ and $L_i$ are the right and left endpoints of corresponding PIs.

The results of different cPIs are summarized in \cref{Table:EmpircalredcPI}. We can find $PI_{aa}$ with $g = 12$ is the optimal one among all combinations of types and number of grid points with the potential to achieve at least a 95$\%$ coverage rate and give the shortest $AL_t$ in the meantime. For the cPI with kernel estimators, treating the dependent variable as ordered-discrete is better than treating it as a continuous variable. For the adjusted cPI, no matter whether it is based on the kernel or DNN estimators, its $CR_t$ is always larger than the corresponding variant without adjustment. Also, the $CR_t$ increases eventually as the number of grid points decreases.

Furthermore, we compare each optimal cPI with DNN estimators and deep-generative PIs in \cref{Table:cPIgenPIred}. We should mention that the process to build the $QPI$, $PPI$, $PI_{KL}$ and $PI_{WA}$ involves tuning the dimension of a so-called reference random variable. We select optimal results of these deep-generative PIs from \cite{wu2024deep} with the criterion to determine the optimal PI whose $CR_t$ is as close to 0.95 as possible and the length is the shortest meanwhile length. Also, we should notice that the coverage rate of $PI_{KL}$ and $PI_{WA}$ varies severely when the tuning parameters (the dimension of the reference variable) change. However, the cPI is more robust against the tuning parameter (the number of grid points). Overall, $PI_{aa}$ beats deep-generative PIs for red wine data.

\begin{table}[htbp]
    \centering
    \small
    \caption{The comparison of different asymmetric calibration PIs on predicting red wine data.}
\begin{tabular}{lllllllllllll}
\toprule
& \multicolumn{2}{c}{$g = 200$}   & \multicolumn{2}{c}{$g = 100$} & \multicolumn{2}{c}{$g = 50$} & \multicolumn{2}{c}{$g = 25$} & \multicolumn{2}{c}{$g = 12$} & \multicolumn{2}{c}{$g = 5$} \\[2pt]
& $CR_t$ & $AL_t$  & $CR_t$ & $AL_t$  & $CR_t$ & $AL_t$  & $CR_t$ & $AL_t$  & $CR_t$ & $AL_t$ & $CR_t$ & $AL_t$   \\
\midrule
Red-wine & \\[2pt]
$PI_{aa}$   &  0.968 &  3.223 &  0.960 &  2.891 &  0.955 &  2.771 &  0.968 &  2.954 &  \textbf{0.952} &  \textbf{2.618} & 0.930 & 3.654 \\[1pt]
$PI_{aaa}$  &  0.971 &  3.270 &  0.971 &  2.985 &  \textbf{0.971} &  \textbf{2.957} &  0.971 &  3.313 &  0.964 &  3.424 & 0.980 & 4.781\\[1pt]
$PI^s_{aa}$  & 0.973 &4.249 &  0.976 &  4.047 &  0.961 &  3.504 &  0.962 &  3.637 &  \textbf{0.962} &  \textbf{3.109} & 0.952 & 3.930 \\[1pt]
$PI^s_{aaa}$  & 0.976 & 4.272 &  0.978 &  4.114 &  \textbf{0.971} &  \textbf{3.633} &  0.967 &  3.887 &  0.973 &  3.823 & 0.980 & 4.872   \\[1pt]
$PI^{cm}_{aak}$  &  0.928 &  1.848 &  0.929 &  1.873 &  0.930 &  1.912 &  0.936 &  2.018 &  \textbf{0.937} &  \textbf{2.035} & 0.937 & 3.156\\[1pt]
$PI^{cm}_{aaak}$ &  0.932 &  1.896 &  0.936 &  1.968 &  0.936 &  2.104 &  0.937 &  2.408 &  0.938 &  2.886 & \textbf{0.981} & \textbf{4.896} \\[1pt]
$PI^{om}_{aak}$ &   \textbf{0.950} &  \textbf{3.639} &   0.950 &  3.652 &  0.952 &  3.674 &  0.954 &  3.733 &  0.975 &  3.872 & 0.972 & 4.300 \\[1pt]
$PI^{om}_{aaak}$ &  \textbf{0.954} &  \textbf{3.664} &  0.954 &  3.702 &  0.957 &  3.775 &  0.971 &  3.938 &  0.981 &  4.317 & 0.983 & 4.968 \\[1pt]
\bottomrule
\end{tabular}
    \vspace{2pt}
    
    \raggedright
    { \footnotesize Note: We use the bold font to indicate the optimal performance for each type of cPI among different numbers of grid points. The optimal one is determined by selecting the variant with the shortest $AL_t$ and at least $95\%$ $CR_t$. If one type of cPI can not reach the $95\%$ $CR_t$, we select the variant with the largest $CR_t$ as the optimal one for this type of cPI.}
    \label{Table:EmpircalredcPI}
\end{table}

\begin{table}[htbp]
    \centering
    \caption{The comparison between deep-generative based PIs and each optimal calibration PIs with DNN estimators on red wine data.}
    \begin{tabular}{lccccccccccc}
    \toprule
    &  $QPI$ & $PPI$ & $PI_{KL}$ & $PI_{WA}$ & $PI_{aa}$  & $PI_{aaa}$  &  $PI^s_{aa}$  & $PI^s_{aaa}$  \\
    \midrule
      $CR_t$   &  0.937 & 0.942 & 0.950 & 0.952 & \textbf{0.952} & 0.971 & 0.962 & 0.971 \\
      $AL_t$   &  2.631 & 2.859 & 3.285 & 3.369 & \textbf{2.618} & 2.957 & 3.109 & 3.633\\
      \bottomrule
    \end{tabular}
      \vspace{2pt}
    
    \raggedright
    { \footnotesize Note: The optimal PI is indicated by the bold font. The choice criterion is the same as the method used in \cref{Table:EmpircalredcPI}.}
    \label{Table:cPIgenPIred}
\end{table}
\FloatBarrier

For white wine data, we still take the same training and test sets used in existing work, i.e., training size 195 and test size 3763, which can be reproduced by applying the function \textit{train$\_$test$\_$split} from \textit{sklearn.model$\_$selection} to do such splitting with \textit{random$\_$state=1}. The comparison of different cPIs is presented in \cref{Table:EmpircalwhitecPI}. In this case, the naive asymmetric cPI struggles to return a satisfied $CR_t$. On the other hand, adjusted asymmetric cPIs can guarantee at least a 95$\%$ coverage rate as the number of grid points varies. Considering the overall performance regarding $CR_t$ and $AL_t$, $PI_{aaa}^s$ and $PI_{aak}^{om}$ are indistinguishable and stand for superior performance compared to other cPIs. The comparison of cPIs with DNN estimators and deep generative PIs is tabulated in \cref{Table:cPIgenPIwhite}. $PI_{aaa}^s$ is the optimal one since its length is the shortest among all PIs that can guarantee the coverage rate. The $PPI$ can be an alternative candidate in practice since its length is less than $PI_{aaa}^s$, though its coverage rate is slightly lower than the nominal level.

\begin{table}[htbp]
    \centering
    \small
    \caption{The comparison of different asymmetric calibration PIs on predicting white wine data.}
\begin{tabular}{lllllllllllll}
\toprule
& \multicolumn{2}{c}{$g = 200$}   & \multicolumn{2}{c}{$g = 100$} & \multicolumn{2}{c}{$g = 50$} & \multicolumn{2}{c}{$g = 25$} & \multicolumn{2}{c}{$g = 12$} & \multicolumn{2}{c}{$g = 5$} \\[2pt]
& $CR_t$ & $AL_t$  & $CR_t$ & $AL_t$  & $CR_t$ & $AL_t$  & $CR_t$ & $AL_t$  & $CR_t$ & $AL_t$  & $CR_t$ & $AL_t$\\
\midrule
White-wine & \\[2pt]
$PI_{aa}$  &  0.917 &   3.23 &  0.921 &  2.885 &  0.907 &  3.056 &  0.917 &  2.652 &  \textbf{0.927} &  \textbf{2.907} &  0.893 &  3.304 \\[1pt]
$PI_{aaa}$  &  0.931 &  3.269 &  0.926 &  2.963 &  0.929 &  3.223 &  0.927 &  2.969 &  0.942 &  3.622 &  \textbf{0.958} &  \textbf{4.877} \\ [1pt]
$PI^s_{aa}$  &  \textbf{0.943} &  \textbf{4.079} &  0.934 &  3.426 &  0.921 &  3.367 &  0.927 &    3.400 &  0.912 &  3.573 & 0.909 & 3.881 \\[1pt]
$PI^s_{aaa}$  &  \textbf{0.952} &  \textbf{4.124} &  0.943 &  3.517 &  0.934 &   3.530 &  0.936 &  3.668 &  0.929 &  4.025 & 0.957 & 4.854\\[1pt]
$PI^{cm}_{aak}$  &   0.910 &   2.230 &  0.911 &  2.252 &  0.912 &  2.299 &  0.915 &  2.358 &  \textbf{0.919} &  \textbf{2.427} &  0.919 &  3.505 \\[1pt]
$PI^{cm}_{aaak}$ &  0.912 &  2.275 &  0.915 &  2.342 &  0.919 &  2.478 &  0.919 &  2.721 &   0.920 &   3.220 &  \textbf{0.956} &  \textbf{4.904} \\[1pt]
$PI^{om}_{aak}$  & 0.937 & 3.661    & 0.937 & 3.675 & 0.937 & 3.700  & 0.939 & 3.755 & \textbf{0.950} & \textbf{3.882} & 0.946 & 4.301 \\[1pt]
$PI^{om}_{aaak}$  & 0.937 & 3.686    & 0.939 & 3.724 & 0.941 & 3.801 & 0.946 & 3.960  & \textbf{0.956} & \textbf{4.328} & 0.958 & 4.988 \\
\bottomrule
\end{tabular}
    \vspace{2pt}
    
    \raggedright
    { \footnotesize Note: We use the bold font to indicate the optimal performance for each type of cPI among different numbers of grid points. The optimal one is determined by selecting the variant with the shortest $AL_t$ and at least $95\%$ $CR_t$. If one type of cPI can not reach the $95\%$ $CR_t$, we select the variant with the largest $CR_t$ as the optimal one for this type of cPI.}

    \label{Table:EmpircalwhitecPI}
\end{table}

\begin{table}[htbp]
    \centering
    \caption{The comparison between deep-generative based PIs and each optimal calibration PIs with DNN estimators on white wine data.}
    \begin{tabular}{lccccccccccc}
    \toprule
    &  $QPI$ & $PPI$ & $PI_{KL}$ & $PI_{WA}$ & $PI_{aa}$  & $PI_{aaa}$  &  $PI^s_{aa}$  & $PI^s_{aaa}$  \\
    \midrule
      $CR_t$   &  0.824 & 0.944 & 0.985 & 0.958 & 0.927 & 0.958 & 0.943 & \textbf{0.952} \\
      $AL_t$   &  2.890 & 3.466 & 4.745 & 4.869 & 2.907 & 4.877 & 4.079 & \textbf{4.124}\\
      \bottomrule
    \end{tabular}
          \vspace{2pt}
    
    \raggedright
    { \footnotesize Note: The optimal PI is indicated by the bold font. The choice criterion is the same as the method used in \cref{Table:EmpircalwhitecPI}.}
    \label{Table:cPIgenPIwhite}
\end{table}

Combining the two pieces of real data results, we summarize several conclusions below:
\begin{itemize}
    \item Various cPIs work well with real data, even for a small training size. Notably, adjusted cPIs always ensure the coverage rate if an appropriate number of grid points is applied. On the other hand, it is hard for the naive cPIs to achieve the nominal coverage level for the white wine data. The cPIs with DNN estimators generally perform better than the cPIs with kernel estimators.
    \item For deep generative PIs and cPIs with DNN estimators, where DNN estimators are involved in both, cPI is better overall. Although the determination of cPI requires $g$ number of DNNs, these training processes can be accomplished in parallelly. We remark that the training procedure of PPI is still computationally heavy since multiple times of retrainings of DNN with simulated data are required. 
    \item When the training sample size is small, we recommend $PI_{aaa}$, which shows great potential to achieve the desired coverage rate while avoiding excessively large interval lengths. We should notice that although a tuning procedure is still needed for $PI_{aaa}$, other methods also require an unavoidable tuning step that could be even more complicated and crucial for the final performance. 
\end{itemize}

\FloatBarrier
\section{Conclusion}\label{Sec:Conclusion}
In this paper, a systematic way to build a so-called calibration Prediction interval (cPI) is discussed. The validity of cPI does not require linearity and normality assumptions. The cPI also compensates for the estimation variability which is not usually directly captured. The whole procedure to determine a cPI consists of three main steps: (1) Apply some estimators (DNN or kernel function) to estimate the conditional CDF of $Y$ given $X_f$ at user-chosen grid points; (2) If it is necessary, perform the correction for monotonicity; (3) Take a calibration strategy to construct cPI. We give the theoretical validation of cPIs based on DNN and kernel estimators. More specifically, we carefully discuss three levels of coverage guarantee: (a) The asymptotic coverage guarantee which is satisfied by cPI with DNN and kernel estimators; (b) The large-sample coverage rate guarantee with a high probability which is satisfied by cPI with kernel and we conjecture it also holds for cPI with DNN; (c) The finite-sample coverage guarantee which is possible for cPI with DNN under several conditions. The practicality of cPI is checked by a comprehensive simulation study. Also, two empirical datasets (red and white wine datasets) are taken to verify the performance of cPI with a short sample. In conclusion, the adjusted cPI gives the advantage of overcoming the undercoverage issue for scenarios with short samples. The cPI with DNN generally outperforms the variant with kernel estimators. Moreover, cPI with DNN is competitive with other deep generative PIs. 

\acks{This numerical work was done using computational services provided by the OSG Consortium \cite{osg07,osg09,osg2,osg1}, which is supported by the NSF awards 2030508 and 1836650. This research was partially supported by NSF grant DMS 24-13718.
}
\FloatBarrier

\appendix

\section{\textsc{Proof}}\label{Appendix:A}
\begin{proof}[\textsc{\textbf{Proof of \cref{Lemma:CDE}}}]
We first give the proof of the first part of \cref{Lemma:CDE}. Let $H_0(x) = \mathbb{E}(G(x,\epsilon))$. We show that $H_0(X)$ is indeed the conditional expectation of $Y$ given $X$. By the definition of conditional expectation, we need to show:
\begin{itemize}
    \item[(1)] $H_0(X)$ is $\sigma(X)$-measurable; 
    \item[(2)] $\int_A H_0(X)d\mathbb{P} = \int_A G(X,\epsilon)d\mathbb{P}$ for $\forall A \in \sigma(X)$.
\end{itemize}
For (1), it is trivial since $X$ is $\sigma(X)$-measurable and $H_0(X)$ is a measurable function of $X$. Therefore, $H_0(X)$ is also $\sigma(X)$-measurable. 

For (2), we first simplify the integral $\int_A H_0(X)d\mathbb{P}$ by introducing an indicator random variable $\Xi:=\mathbbm{1}(A)$ which means $\Xi = 1$ if the event from sample space belongs to $A$ and $\Xi = 0$ otherwise. We define the joint probability distribution of $(X,\Xi)$ as $\nu$ and the probability distribution of $\epsilon$ as $\varrho$. Then, we can get 
$$
\int_A H_0(X) d\mathbb{P} = \int H_0(X)\Xi d\mathbb{P} = \int H_0(x)\xi d\nu(x,\xi).
$$
For the last equality, we switch the integration variable space from $\Omega$ to $\mathbb{R}^{d+1}$. We can further get 
$$ 
\int H_0(x)\xi d\nu(x,\xi) = \int \mathbb{E}(G(x,\epsilon)) \xi d\nu(x,\xi) = \int \int G(x,\epsilon) \xi d\varrho(\epsilon) d\nu(x,\xi).
$$
Since we assume $\epsilon$ and $X$ are independent under A2, the integration variable space for the above equalities is $\mathbb{R}^{d+2}$. Also, $G$ is integrable under A1, so we can get the above equalities by Fubini's theorem. Finally, we can conclude that:
$$
\int \int G(x,\epsilon) \xi d\varrho(\epsilon) d\nu(x,\xi) = \mathbb{E}[G(X, \epsilon)\Xi] = \int G(X, \epsilon) \Xi d \mathbb{P}.
$$
For the last equality, we transform the integration variable space back to $\Omega$ again. Consequently, we can get 
$$
\int_A H_0(X) d\mathbb{P} = \int G(X, \epsilon) \Xi d \mathbb{P} = \int_A G(X, \epsilon) d\mathbb{P} = \int_A Y d\mathbb{P};
$$
which shows $H_0(X) = \mathbb{E}(G(X,\epsilon))$ is indeed the conditional expectation of $Y$ given $X$. 

Secondly, we show the function $H_0(x) = \mathbb{E}(G(x,\epsilon))$ is continuous w.r.t. the argument $x$. From the derivation above, we know 
$$
\mathbb{E}(G(x,\epsilon)) = \int G(x,z)f_\epsilon(z) dz. 
$$

Under A1, $G(x,\epsilon)$ is a uniformly continuous function, i.e., for $\forall \varepsilon>0$, $\exists \delta>0$ such that 
$$
|G(u, \epsilon)-G(v, \epsilon)|<\varepsilon \text { for all }(u, \epsilon)\text{ , }(v, \epsilon) \in I \text { with }|u-v|<\delta;
$$
where $I$ is the joint space of $X$ and $\epsilon$, i.e., $I\subseteq \mathbb{R}^{d+1}$. Therefore, for these $u, v \in \mathbb{R}^d$ with $|u-v|<\delta$ we can get

$$
|H_0(u)-H_0(v)| \leq \int|G(u, \epsilon)-G(v, \epsilon)|f_\epsilon d x \leq  \varepsilon \cdot  M \cdot  \infty;
$$
where $M \geq |f_\epsilon|$ since $f_\epsilon$ is bounded under A2. Since $\varepsilon$ can be arbitrarily small, we get that $|H_0(u)-H_0(v)| \leq \varepsilon^{\prime}$ for $\forall \varepsilon^{\prime} > 0$. Thus, we can conclude that $H_0(x)$ is a continuous function w.r.t. $x$. 
\end{proof}

\begin{proof}[\textsc{\textbf{Proof of \cref{Lemma:EstHk}}}]
Similarly to the proof of Lemma 2.1, we show $H_j(X) = \mathbb{E}(\mathbbm{1}(Y\leq q_j)|X) = \mathbb{E}(\mathbbm{1}(G(X,\epsilon)\leq q_j))$ is the conditional expectation of $Z:= \mathbbm{1}(Y\leq q_j)$ given $X$ for any $j$. First, $H_j(X)$ is still $\sigma(X)$-measurable since the $\mathbbm{1}(G(X,\epsilon)$ is measurable function of $X$. Under A1 and A2, if we denote $\mathbbm{1}(G(X,\epsilon)\leq q_j)$ by $\widetilde{G}(X,\epsilon,q_j)$ where $q_j$ is a constant. Again, by defining an indicator random variable $\Xi:=\mathbbm{1}(A)$, we can show:
$$
\int_A H_j(X) d\mathbb{P} = \int H_j(X)\Xi d\mathbb{P} = \int \int \widetilde{G}(x,\epsilon,q_j) \xi d\varrho(\epsilon) d\nu(x,\xi) =  \int_A \widetilde{G}(X, \epsilon,q_j) d\mathbb{P} = \int_A Z d\mathbb{P}.
$$
Besides, it is trivial that $H_j(x)$ is a continuous function w.r.t. $x$ under A3.

\end{proof}

\begin{proof}[\textsc{\textbf{Proof of \cref{Theorem:coverageasy}}}]
The proof outline relies on the MSE error bound of the DNN estimator on uniformly continuous functions. Since we care about the asymptotic coverage in this theorem, the convergence rate of the error bound does not matter for the proof. Under A1 to A3, by \cref{Lemma:CDE} and \cref{Lemma:EstHk}, we know that the conditional mean and conditional CDF are all continuous w.r.t. $x$. Moreover, under B1, it is trivial that these oracle functions are uniformly continuous w.r.t. $x$. By \cref{Theorem:errorboundofC0f}, we have
$$
\mathbb{E}\left( (\widehat{H}_j(X) - H_j(X))^2 \right)\to 0,~\text{with a probability converges to~} 1, ~\text{as}~n\to\infty;
$$
for $j = 0,\ldots,g$; $H_0$ represent the oracle conditional mean function. In other words, we have
\begin{equation}\label{Eq:ConditionalExpecConver}
\int (\widehat{H}_j(x) - H_j(x))^2f(x)dx \to 0,~\text{with a probability converges to~} 1, ~\text{as}~n\to\infty;
\end{equation}
If we further assume B2 ,namely the marginal density $f(x)$ is positive on its domain $\mathcal{X}$, then \cref{Eq:ConditionalExpecConver} implies that $\widehat{H}_j(x_f) \overset{}{\to} H_j(x_f)$ with a probability tends to 1 as $n\to\infty$ for all $x_f \in \mathcal{X}$ and all $j\in\{1,\ldots,g\}$. Moreover, the randomness of $\widehat{H}_j(x_f)$ depends on our samples $S_n: = \{(X_i, Y_i)\}_{i=1}^n$. Equivalently, for the sample path $\mathbb{P}(S_n\notin A_n ) = o(1)$, $\widehat{H}_j(x_f) \to H_j(x_f)$ as $n\to\infty$, where $A_n\subseteq \mathcal{X}\times\mathcal{Y}$ is an expanding set as $n$ increasing. 

For all cPIs, their endpoints will be determined by corresponding indices s.t., the coverage rate is guaranteed to be at least $1-\alpha$. Under the consistency, $\widehat{H}_j$ can be arbitrarily close to the oracle function $H_j$ with a high probability indicated by the above results. Thus, the indices selection procedure in all algorithms will work. 

\end{proof}

\begin{proof}[\textsc{\textbf{Proof of \cref{Theorem:finiteCoverage}}}]
    The proof of \cref{Theorem:finiteCoverage} relies on the approximation result of DNN on a uniformly continuous function specified in \cref{Lemma:DNNC0f}. To verify the finite-sample coverage of $PI_{aaa}$, we focus on showing $\widetilde{F}_{\text{Avg}}(q_{\Tilde{l}}) \leq \alpha/2$ is always true under our assumptions and the condition made in \cref{Remakr:2ndcondition}; where the index $\Tilde{l}$ is determined by \cref{Algo:CaliPIaaadjusted}. The proof of showing $\widetilde{F}_{\text{Avg}}(q_{\Tilde{r}}) \geq 1- \alpha/2$ being true is similar. 
    
    This claim can be shown depending on the diagram illustrated in \cref{fig:diagramTheo2} below, where $q_{i-1}$, $q_i$ and $q_{i+1}$ are three adjacent grid points. We hope $\widehat{H}_{i-1}$, $\widehat{H}_{i}$ and $\widehat{H}_{i+1}$ are as close to $H_{i-1}$, $H_{i}$ and $H_{i+1}$ as possible. As pointed out by \cref{Theorem:coverageasy}, $\widetilde{F}_{\text{Avg}}(q_{\Tilde{l}}) < \alpha/2$ is true once $n\to\infty$. However, when the sample size is finite, there may be a relatively large error between $\widehat{H}_{i}$ and $H_i$. This may cause the chosen index is not sufficient to give a correct coverage rate for cPI. For example, if the estimated $\widehat{H}_{i+1}$ is less than $\alpha/2$, this gives us the information that the left index should be $i+1$ based on the non-adjusted calibration algorithm \cref{Algo:CaliPIaa} in finite-sample scenarios. However, it is easy to see that the ideal index is $i$, i.e., $q_i$ should be the left endpoint of a cPI. Under B5 and the adjusted algorithm \cref{Algo:CaliPIaaadjusted}, the index $\Tilde{l}$ always guarantee $\widetilde{F}_{\text{Avg}}(q_{\Tilde{l}}) \leq \alpha/2$. To see this, assuming that $\widehat{H}_{i+1}$ is unfortunately less than $\alpha/2$, by \cref{Algo:CaliPIaaadjusted}, we will choose the index $i+1-1$ to determine the left endpoints which correct the coverage rate of PI in this case. Also, under B5, there will be no situation such that $\widehat{H}_{i+1}<\alpha/2<H_i$, which means the adjusted index $\Tilde{l}$ is still not sufficient to guarantee the correct coverage.

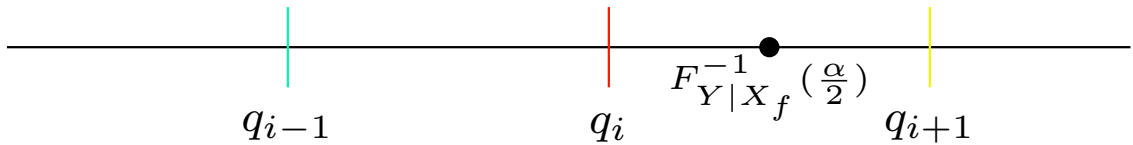
\begin{figure}[!ht]
\centering
\resizebox{1\textwidth}{!}{%
\begin{circuitikz}
\tikzstyle{every node}=[font=\scriptsize]
\draw (6.25,8.25) to[short] (13.25,8.25);
\draw [ color={rgb,255:red,5; green,245; blue,173}, ](8,8.5) to[short] (8,8);
\draw [ color={rgb,255:red,245; green,29; blue,5}, ](10,8.5) to[short] (10,8);
\draw [ color={rgb,255:red,241; green,245; blue,5}, ](12,8.5) to[short] (12,8);
\node [font=\scriptsize] at (8,7.75) {$q_{i-1}$};
\node [font=\scriptsize] at (10,7.75) {$q_{i}$};
\node [font=\scriptsize] at (12,7.75) {$q_{i+1}$};
\node at (11,8.25) [circ] {};
\node [font= \tiny] at (11,8) { $F^{-1}_{Y|X_f}(\frac{\alpha}{2})$ };
\end{circuitikz}
}
\caption{The diagram of one possible location of $\alpha/2$ quantile of the conditional distribution and three adjacent grid points for some index $i\in\{1,\ldots,g\}$.}
\label{fig:diagramTheo2}
\end{figure}

\end{proof}

\begin{proof}[\textsc{\textbf{Proof of \cref{Lemma:Errorboundofkernel}}}]
The proof is partially motivated by Proof of Theorem 1.4 in \cite{li2007nonparametric}
    To show error bound (\ref{Eq:kernelesterror}), we consider:
    \begin{equation}
        \begin{split}
            &\sup_{x,y}|\hat{f}(y|x) - f(y|x)|\\
            &= \sup_{x,y}\left| \frac{1}{\hat{f}(x)} \left( \hat{f}(x,y) - \hat{f}(x)f(y|x)\right)\right|\\
            & = \sup_{x,y}\left| \frac{1}{\hat{f}(x)} \left( \hat{f}(x,y) - \mathbb{E}(\hat{f}(x,y)) + \mathbb{E}(\hat{f}(x,y)) - \mathbb{E}(\hat{f}(x)f(y|x)) + \mathbb{E}(\hat{f}(x)f(y|x)) - \hat{f}(x)f(y|x)\right)\right|\\
            & \leq C_1 \cdot \sup_{x,y}\left|  \hat{f}(x,y) - \mathbb{E}(\hat{f}(x,y)) + \mathbb{E}(\hat{f}(x,y)) - \mathbb{E}(\hat{f}(x)f(y|x)) + \mathbb{E}(\hat{f}(x)f(y|x)) - \hat{f}(x)f(y|x)   \right| \\
            & \leq C_1 \cdot \left [\sup_{x,y}\left|  \hat{f}(x,y) - \mathbb{E}(\hat{f}(x,y)) \right| + \sup_{x,y}\left| \mathbb{E}(\hat{f}(x,y)) - \mathbb{E}(\hat{f}(x)f(y|x))     \right| +  \sup_{x,y} \left|  \mathbb{E}(\hat{f}(x)f(y|x)) - \hat{f}(x)f(y|x)  \right|   \right ] \\
            & = C_1 \cdot ( Q_1 + Q_2 + Q_3);
        \end{split}
    \end{equation}
where $Q_1 : = \sup_{x,y}\left|  \hat{f}(x,y) - \mathbb{E}(\hat{f}(x,y)) \right|$, $Q_2 := \sup_{x,y}\left| \mathbb{E}(\hat{f}(x,y)) - \mathbb{E}(\hat{f}(x)f(y|x))     \right|$; $Q_3 := \sup_{x,y} \left|  \mathbb{E}(\hat{f}(x)f(y|x)) \right.$ $\left. - \hat{f}(x)f(y|x)  \right|$. 
Then, we will analyze the order of $Q_1$, $Q_2$ and $Q_3$ one by one.

First, we consider $Q_1$. Under B1, the domain of $X$ and $Y$ is a compact set $[-M_1,M_1 ] \times [-M_2, M_2]^d$. Without loss of generality, we assume $M_1 > M_2$ and then define the space $\mathcal{S} : = [-M_1,M_1 ]^{d+1}$ which can be covered by a finite number $J_n$ cubes $\Xi_k$ with centers $\xi_{k,n}$ for $k = 1,\ldots, J_n$; $J_n$ is a constant which depends on the dimension $d+1$, the constant $M_1$, and the length of the cube $l_n$. Then, we have:
\begin{equation}\label{Eq:Q1statrt}
    \begin{split}
        \sup _{x \times y \in \mathcal{S}}|\hat{f}(x,y)-\mathbb{E}(\hat{f}(x,y))|= & \max _{1 \leq k \leq J_n} \sup _{x \times  y \in \mathcal{S} \cap \Xi_k}|\hat{f}(x,y)-\mathbb{E}(\hat{f}(x,y))| \\ 
        \leq & \max _{1 \leq k \leq J_n} \sup _{x \times y \in \mathcal{S} \cap \Xi_k}\left|\hat{f}(x,y)-\hat{f}\left(\xi_{k, n}\right)\right| \\ 
        & +\max _{1 \leq k \leq J_n}\left|\hat{f}\left(\xi_{k, n}\right)-\mathbb{E}\left(\hat{f}\left(\xi_{k, n}\right)\right)\right| \\ 
        & +\max _{1 \leq k \leq J_n} \sup _{x \times y \in \mathcal{S} \cap \Xi_k}\left|\mathbb{E}\left(\hat{f}\left(\xi_{k, n}\right)\right)-\mathbb{E}(\hat{f}(x,y))\right| \\ 
         = & V_1+V_2+V_3.
    \end{split}
\end{equation}
For the terms $V1: = \max _{1 \leq k \leq J_n} \sup _{x \times y \in \mathcal{S} \cap \Xi_k}\left|\hat{f}(x,y)-\hat{f}\left(\xi_{k, n}\right)\right|$, we can write it as 
\begin{equation}\label{Eq:V1term}
    \begin{split}
       & \max _{1 \leq k \leq J_n} \sup _{x \times y \in \mathcal{S} \cap \Xi_k}\left|\hat{f}(x,y)-\hat{f}\left(\xi_{k, n}\right)\right|\\
       & =  \max _{1 \leq k \leq J_n} \sup _{x \times y \in \mathcal{S} \cap \Xi_k}\left|   \frac{1}{nh^{d+1}} \sum_{i=1}^n \widetilde{K}\left( \Lambda^{-1} (z-Z_i)\right) -  \frac{1}{nh^{d+1}} \sum_{i=1}^n \widetilde{K}\left(\Lambda^{-1} (\xi_{k,n} - Z_i)\right)   \right| \\
       & =  \max _{1 \leq k \leq J_n} \sup _{x \times y \in \mathcal{S} \cap \Xi_k}\left| \frac{1}{nh^{d+1}} \left( \sum_{i=1}^n \nabla \widetilde{K}(\bar{z}_i)^T \cdot \Lambda^{-1} (\xi_{k,n} - Z_i)   \right)        \right|  \\
       & \leq \frac{C_2}{h^{d+2}}l_n;
    \end{split}
\end{equation}
where the first equality is due to the definition of kernel estimator; the second equality is due to the Lipschitz condition of $\widetilde{K}$ and $\bar{z}_i$ is some point on the line segment which connects $\Lambda^{-1} (z-Z_i)$ and $\Lambda^{-1} (\xi_{k,n} - Z_i)$; the last inequality is due to assumption D1 and the finite cover of the space $\mathcal{S}$; $C_2$ is some constant depends on the gradient of $\widetilde{K}$; we still apply $z$ and $Z_i$ to denote the vector $(x^T,y)^T$ and random vector $(X_i^T, Y_i)^T$, respectively. Correspondingly, $f_Z(z)$ stands for the joint density of $(X^T,Y)^T$.

For the term $V3: = \max _{1 \leq k \leq J_n} \sup _{x \times y \in \mathcal{S} \cap \Xi_k}\left|\mathbb{E}\left(\hat{f}\left(\xi_{k, n}\right)\right)-\mathbb{E}(\hat{f}(x,y))\right|$, we have:
\begin{equation}\label{Eq:V3term}
    \begin{split}
        & \max _{1 \leq k \leq J_n} \sup _{x \times y \in \mathcal{S} \cap \Xi_k}\left|\mathbb{E}\left(\hat{f}\left(\xi_{k, n}\right)\right)-\mathbb{E}(\hat{f}(x,y))\right|\\
        & \leq \max _{1 \leq k \leq J_n} \sup _{x \times y \in \mathcal{S} \cap \Xi_k}\mathbb{E} \left[ \left| \hat{f}\left(\xi_{k, n}\right) - \hat{f}(x,y) \right|  \right] \\
        & \leq \max _{1 \leq k \leq J_n}  \mathbb{E} \left[ \sup _{x \times y \in \mathcal{S} \cap \Xi_k} \left| \hat{f}\left(\xi_{k, n}\right) - \hat{f}(x,y) \right|  \right] \\
        & \leq \frac{C_3}{h^{d+2}}l_n;
    \end{split}
\end{equation}
where the last inequality is due to the inequality (\ref{Eq:V1term}); $C_3$ is another constant. The choice of $l_n$ will clear once we figure out the order of the term $V_2$. 

For $V_2 := \max _{1 \leq k \leq J_n}\left|\hat{f}\left(\xi_{k, n}\right)-\mathbb{E}\left(\hat{f}\left(\xi_{k, n}\right)\right)\right|$, we can define a function $\Psi_n(z) : = \hat{f}(z) - \mathbb{E}(\hat{f}(z)) = \sum_{i=1}^n (nh^{d+1})^{-1}[\widetilde{K}(\Lambda^{-1} (z-Z_i)) - \mathbb{E}( \widetilde{K}(\Lambda^{-1} (z-Z_i)) ) ]  = \sum_{i=1}^n \psi_{n,i}$. Then, for any $\eta > 0$, we can write 
\begin{equation}\label{Eq:V2inequality}
    \begin{split}
        \mathbb{P}( V_2 > \eta ) &= \mathbb{P} \left( \max _{1 \leq k \leq J_n}\left|\hat{f}\left(\xi_{k, n}\right)-\mathbb{E}\left(\hat{f}\left(\xi_{k, n}\right)\right)\right| > \eta \right)  \\
         & = \mathbb{P} \left( \max _{1 \leq k \leq J_n} | \Psi_n(\xi_{k,n}) | >\eta  \right)  \\
         & \leq \mathbb{P}\left( \left| \Psi_n(\xi_{1,n})\right|> \eta \bigcup  \left|\Psi_n(\xi_{2,n})\right|> \eta \bigcup  \cdots \left|\Psi_n(\xi_{J_n,n})\right|> \eta   \right) \\
         & \leq J_n \sup_{z\in\mathcal{S}} \mathbb{P}  \left(  \left| \Psi_n(z)\right|> \eta       \right).
    \end{split}
\end{equation}
Under D1, let $A_1 : = \sup_z |\widetilde{K}(z)|$. Then, we have $|\psi_{n,i}|\leq 2A_1/(nh^{d+1})$ for $i = 1,\ldots, n$. Let $\lambda_n = (nh^{d+1}\ln(n))^{1/2}$; the reason for taking such $\lambda_n$ will be explained later. Then, we have 
$$
\lambda_n|\psi_{n,i}|\leq 2A_1(\ln(n)/(nh^{d+1}))^{1/2} \leq \frac{1}{2};
$$
when $n$ is sufficiently large. Subsequently, for such a large $n$, we have
\begin{equation}\label{Eq:expinequality}
    \begin{split}
        \mathbb{E}(\exp(\pm\lambda_n\psi_{n,i})) &\leq 1 \pm \lambda_n \mathbb{E}(\psi_{n,i}) + \lambda_n^2\mathbb{E}(\psi_{n,i}^2)  \\
        & =  1  + \lambda_n^2\mathbb{E}(\psi_{n,i}^2)\\
        & \leq \exp(\mathbb{E}(\lambda_n^2\psi_{n,i}^2 ));~\text{for}~i = 1,\ldots, n;
    \end{split}
\end{equation}
where the first inequality is due to the fact that $\exp(u) \leq 1 + u + u^2$ for $|u|\leq 1/2$; the second equality is due to the fact that $\mathbb{E}(\psi_{n,i}) = 0$ for any $i$; the last inequality is due to the fact that $1+v \leq \exp(v)$ for $v\geq 0$. Therefore, we can conclude that:
\begin{equation}\label{Eq:absPsi}
    \begin{split}
        \mathbb{P}  \left(  \left| \Psi_n(z)\right|> \eta       \right) &= \mathbb{P}  \left( \left|  \sum_{i=1}^n \psi_{n,i} \right|  > \eta     \right)\\
    & =     \mathbb{P}  \left(  \sum_{i=1}^n \psi_{n,i}   > \eta  \right) +  \mathbb{P}  \left(  \sum_{i=1}^n \psi_{n,i}   < -\eta  \right) \\
    & \leq   \mathbb{P}  \left(  \sum_{i=1}^n \psi_{n,i}   > \eta  \right) +  \mathbb{P}  \left( - \sum_{i=1}^n \psi_{n,i}   >  \eta  \right) \\
    & \leq \frac{ \mathbb{E}[\exp(\lambda_n\sum_{i=1}^n \psi_{n,i} )] + \mathbb{E}[\exp(-\lambda_n\sum_{i=1}^n \psi_{n,i} )]}{\exp(\lambda_n\eta)}\\
    & \leq \frac{2\cdot \exp(\sum_{i=1}^n\mathbb{E}(\lambda_n^2\psi_{n,i}^2 ))}{\exp(\lambda_n\eta)};
    \end{split}
\end{equation}
where the fourth line is due to Markov's inequality, i.e., $\mathbb{P}(X>c)\leq \frac{\mathbb{E}(\exp(Xa))}{\exp(ac)}$ for some $a>0$; the last inequality is due to the result from inequality (\ref{Eq:expinequality}). In addition, for $\mathbb{E}(\psi^2_{n,i})$, we have
\begin{equation}\label{Eq:psinisquare}
\begin{split}
    \mathbb{E}(\psi^2_{n,i}) &= (nh^{d+1})^{-2}\mathbb{E}\left( \widetilde{K}(\Lambda^{-1} (z-Z_i)) - \mathbb{E}( \widetilde{K}(\Lambda^{-1} (z-Z_i)) ) \right)^2 \\
    & \leq (nh^{d+1})^{-2}\mathbb{E}\left( \widetilde{K}^2(\Lambda^{-1} (z-Z_i))  \right)\\
    & \leq A_2(n^2h^{d+1})^{-1};
\end{split}
\end{equation}
$A_2: = \sup_{z} \widetilde{K}^2(z)$ is some constant under D1; the first inequality is due to the fact $\mathbb{E}(U^2) - (\mathbb{E}(U))^2 = Var(U)$, taking $U$ as $\widetilde{K}(\Lambda^{-1} (z-Z_i))$; the second inequality comes from assumption D1. With result (\ref{Eq:psinisquare}), we can further simplify inequality (\ref{Eq:absPsi}) as follows:
\begin{equation}\label{Eq:absPsifinal}
     \mathbb{P}  \left(  \left| \Psi_n(z)\right|> \eta       \right) \leq 2\exp(-\lambda_n\eta + A_2\lambda_n^2(nh^{d+1})^{-1}).
\end{equation}
Since inequality (\ref{Eq:absPsifinal}) does not depend on $z$, we have:
\begin{equation}\label{Eq:absPsisup}
  \sup_{z\in\mathcal{S}} \mathbb{P}  \left(  \left| \Psi_n(z)\right|> \eta       \right) \leq 2\exp(-\lambda_n\eta + A_2\lambda_n^2(nh^{d+1})^{-1}).
\end{equation}
To determine the value of $\eta$ and $\lambda_n$, we hope $\eta$ converges to 0 as fast as possible. Meanwhile, we still need that $\lambda_n\eta$ converges to $\infty$ to make the probability $\sup_{z\in\mathcal{S}} \mathbb{P}  \left(  \left| \Psi_n(z)\right|> \eta       \right)$ converges to 0. Let $\lambda_n\eta = C_4\ln(n)$ for some constant $C_4$. Then, we need $\ln(n) \geq \lambda_n^2/(nh^{d+1})$, which implies we can take $\lambda_n = (nh^{d+1}\ln(n))^{1/2}$, so $\eta = C_4\left(\ln(n)/(nh^{d+1})\right)^{1/2}$. Thus, the inequality (\ref{Eq:absPsisup}) can be further written to
\begin{equation*}
\begin{split}
    \sup_{z\in\mathcal{S}} \mathbb{P}  \left(  \left| \Psi_n(z)\right|> \eta       \right)  &\leq 2\exp(-\lambda_n\eta + A_2\lambda_n^2(nh^{d+1})^{-1}) \\
    & = 2\exp(-C_4\ln(n) + A_2\ln(n))  \\
    & = 2\exp(-(C_4 - A_2)\ln(n)) \\
    & = 2\exp(-\alpha\ln(n))\\
    & = \frac{2}{n^\alpha};
\end{split}
\end{equation*}
where $\alpha = C_4 - A_2$. Finally, we can conclude that inequality (\ref{Eq:V2inequality}) is 
\begin{equation}\label{Eq:V2inequalityfinal}
    \begin{split}
        \mathbb{P}\left( V_2 > C_4\left(\ln(n)/(nh^{d+1})\right)^{1/2} \right) & \leq J_n \sup_{z\in\mathcal{S}} \mathbb{P}  \left(  \left| \Psi_n(z)\right|> C_4\left(\ln(n)/(nh^{d+1})\right)^{1/2}      \right) = \frac{2J_n}{n^\alpha}.
    \end{split}
\end{equation}
Furthermore, we have 
\begin{equation}\label{Eq:V2inequalityfinalleq}
        \mathbb{P}\left( V_2 \leq C_4\left(\ln(n)/(nh^{d+1})\right)^{1/2} \right) \geq 1 -  \frac{2J_n}{n^\alpha}.
\end{equation}
To incorporate all terms $V_1$, $V_2$ and $V_3$ together, we can take $l_n = (\ln(n)h^{d+3}/n)^{1/2}$. Then, from inequalities (\ref{Eq:V1term}) and (\ref{Eq:V3term}) satisfy
\begin{equation}\label{Eq:V1V3final}
\begin{split}
    V1: &= \max _{1 \leq k \leq J_n} \sup _{x \times y \in \mathcal{S} \cap \Xi_k}\left|\hat{f}(x,y)-\hat{f}\left(\xi_{k, n}\right)\right| \leq \frac{C_2}{h^{d+2}}l_n = C_2\left(\frac{\ln(n)}{nh^{d+1}}\right)^{1/2} ;\\
    V3: &= \max _{1 \leq k \leq J_n} \sup _{x \times y \in \mathcal{S} \cap \Xi_k}\left|\mathbb{E}\left(\hat{f}\left(\xi_{k, n}\right)\right)-\mathbb{E}(\hat{f}(x,y))\right| \leq \frac{C_3}{h^{d+2}}l_n = C_3\left(\frac{\ln(n)}{nh^{d+1}}\right)^{1/2}.
\end{split}
\end{equation}
Without loss of generality, we assume that $C_4$ is a large value such that $C_4 > \max(C_2, C_3)$. Combining all pieces in results (\ref{Eq:V1V3final}) and (\ref{Eq:V2inequalityfinalleq}), we continue the inequality (\ref{Eq:Q1statrt}) to get a bound for $Q_1$ term, i.e.,
\begin{equation}\label{Eq:Q1fianl}
    \begin{split}
       Q_1 &:= \sup _{x \times y \in \mathcal{S}}|\hat{f}(x,y)-\mathbb{E}(\hat{f}(x,y))| \\
       & \leq  V_1+V_2+V_3\\
       & \leq 3C_4\left(\frac{\ln(n)}{nh^{d+1}}\right)^{1/2},~\text{with probability at least}~ 1 -  \frac{2J_n}{n^\alpha}.
    \end{split}
\end{equation}
More specifically, $J_n = O(\frac{1}{l_n^{d+1}}) = O\left( \left(\frac{n}{\ln(n)h^{d+3}}\right)^{\frac{d+1}{2}} \right)$. Thus, the probability $1 -  \frac{2J_n}{n^\alpha}$ converges to 1 and $3C_4\left(\frac{\ln(n)}{nh^{d+1}}\right)^{1/2}$ converges to 0 with an appropriately large constant $C_4$. 

Similarly, for $Q_3 := \sup_{x,y} \left|  \mathbb{E}(\hat{f}(x)f(y|x)) \right.$ $\left. - \hat{f}(x)f(y|x)  \right|$, we have
$$
\sup_{x,y} \left|  \mathbb{E}(\hat{f}(x)f(y|x)) - \hat{f}(x)f(y|x)  \right| \leq A_3 \sup_{x,y}| \hat{f}(x) -  \mathbb{E}(\hat{f}(x)) |;
$$
where $A_3 := \sup_{x,y}f(y|x)$ which is a constant under B2. Then, we can get 
\begin{equation}
    Q_3 := \sup_{x,y} \left|  \mathbb{E}(\hat{f}(x)f(y|x))  - \hat{f}(x)f(y|x)  \right| \leq 3A_3\cdot C^{\prime}_4 \left(\frac{\ln(n)}{nh^{d}}\right)^{1/2},~\text{with probability at least}~ 1 -  \frac{2J^{\prime}_n}{n^{\alpha^{\prime}}};
\end{equation}
where $C_4^{\prime}$ is another appropriately large constant; $J^{\prime}_n = O\left( \left(\frac{n}{\ln(n)h^{d+2}}\right)^{\frac{d}{2}} \right)$; $\alpha^{\prime} = C_4^{\prime} - A_2^{\prime}$; $A_2^{\prime}: = \sup_{x} K^2(x)$.

Lastly, we need to consider the order of $Q_2 := \sup_{x,y}\left| \mathbb{E}(\hat{f}(x,y)) - \mathbb{E}(\hat{f}(x)f(y|x))     \right|$ term. For $\mathbb{E}(\hat{f}(x,y))$, we have
\begin{equation}\label{Eq:Q11stterm}
\begin{split}
    \mathbb{E}(\hat{f}(x,y)) &= \frac{1}{h^{d+1}}\int \widetilde{K}( \Lambda^{-1}(u-z))f_Z(u)du \\
    & = \int \widetilde{K}(v)f_Z(\Lambda v + z)dv \\
    & = \int \widetilde{K}(v)f_Z(hv + z)dv \\
    & = \int \widetilde{K}(v)\left(f_Z(z) + hv^{T}\nabla f_Z(z)  + hv^{T} \nabla^2 f_Z(\bar{z})hv \right)dv \\
    & = f_Z(z) + h^2\int \widetilde{K}(v) v^{T} \nabla^2 f_Z(\bar{z})v dv.
\end{split}
\end{equation}
The second and third equality are due to the change of variable and the diagonal format of the matrix $\Lambda$; the fourth equality is due to the Taylor expansion under B3; the last equality is due to the property of the kernel function $\widetilde{K}(\cdot)$. Similarly, for $\mathbb{E}(\hat{f}(x)f(y|x))$, we have
\begin{equation}\label{Eq:Q12ndterm}
    \begin{split}
         \mathbb{E}(\hat{f}(x)f(y|x)) &= f(y|x))\frac{1}{h^d}\int K( \Lambda_x^{-1}(\mu - x))f(\mu)d\mu  \\
         & = f(y|x)\int K(\nu)f_X(\Lambda_x \nu + x)d\nu\\
         & = f(y|x)\int K(\nu)\left( f(x) + h\nu^T\nabla f(x) +  h\nu^T\nabla^2 f(\bar{x})h\nu \right)d\nu\\
         & = f_Z(z) + f(y|x)h^2\int K(\nu) \nu^T\nabla^2 f(\bar{x})\nu d\nu. 
    \end{split}
\end{equation}
Combining \cref{Eq:Q11stterm,Eq:Q12ndterm}, we can get
\begin{equation}\label{Eq:Q2final}
    \begin{split}
        Q_2 &= \sup_{x,y}\left| f_Z(z) + h^2\int \widetilde{K}(v) v^{T} \nabla^2 f_Z(\bar{z})v dv -  f_Z(z) - f(y|x)h^2\int K(\nu) \nu^T\nabla^2 f(\bar{x})\nu d\nu   \right| \\
        & = h^2\sup_{x,y} \left| \int \widetilde{K}(v) v^{T} \nabla^2 f_Z(\bar{z})v dv - f(y|x)h^2\int K(\nu) \nu^T\nabla^2 f(\bar{x})\nu d\nu \right|\\
        & \leq  h^2 \sup_{x,y} \left| \int \widetilde{K}(v) v^{T} \nabla^2 f_Z(\bar{z})v dv  \right|  +  \sup_{x,y} \left| f(y|x)h^2\int K(\nu) \nu^T\nabla^2 f(\bar{x})\nu d\nu \right| \\
        & = O(h^2).
    \end{split}
\end{equation}
The last equality is due to the assumption B2, B3 and D2. Then, we can define $Q_1 < 3C_4\left(\frac{\ln(n)}{nh^{d+1}}\right)^{1/2}$ and $Q_3 < 3A_3\cdot C^{\prime}_4 \left(\frac{\ln(n)}{nh^{d}}\right)^{1/2}$ as the event $E_1$ and $E_2$, respectively. We have $\mathbb{P}(E_1) \geq 1 -  \frac{2J_n}{n^\alpha}$ and $\mathbb{P}(E_2) \geq 1 -  \frac{2J^{\prime}_n}{n^{\alpha^{\prime}}}$. Therefore, with $\mathbb{P}(E_1 \bigcap E_2)$, we have
\begin{equation}
    \begin{split}
        \sup_{x,y}|\hat{f}(y|x) - f(y|x)| &\leq  C_1 \cdot ( Q_1 + Q_2 + Q_3)   \\
        & \leq C_1 \left( 3C_4\left(\frac{\ln(n)}{nh^{d+1}}\right)^{1/2} +  3A_3\cdot C^{\prime}_4 \left(\frac{\ln(n)}{nh^{d}}\right)^{1/2}\right) + O(h^2) \\
        & = O( C_4\left(\frac{\ln(n)}{nh^{d+1}}\right)^{1/2} + C^{\prime}_4 \left(\frac{\ln(n)}{nh^{d}}\right)^{1/2} + h^2).
    \end{split}
\end{equation}
For $\mathbb{P}(E_1 \bigcap E_2)$, by a simple equation that $\mathbb{P}(E_1\bigcup E_2) = \mathbb{P}(E_1) + \mathbb{P}(E_2) - \mathbb{P}(E_1 \bigcap E_2)$, we know that $\mathbb{P}(E_1 \bigcap E_2) \to 1$ in some rate since $\mathbb{P}(E_1)\to 1$, $\mathbb{P}(E_2) \to 1$, and $\mathbb{P}(E_1\bigcup E_2) \leq 1$. By defining $\mathbb{P}(E_1 \bigcap E_2)$ as $\kappa_n$, $\kappa_n$ will converge to 1 in some appropriate rate. Finally, we can verify our claim shown in the inequality (\ref{Eq:Kernelestimators}). 
\end{proof}

\begin{Theorem}[One variant of Theorem 3.1 in \cite{wu2024deep}]\label{Theorem:errorboundofC0f}
Under assumptions A1 to A3 and B1, let $\mathcal{F}_{\text{DNN}}$ be a class of standard fully connected feedforward DNN functions with width $W$ and depth $L$, respectively:
$$W := 3^{d+3} \max \left\{d\left\lfloor N_1^{1 / d}\right\rfloor, N_1+1\right\}~;~L: = 12 N_2+14+2d;$$ 
see \cref{Remakr:explanationsofWL} for the meaning of $W$ and $L$ for a DNN in this context. We take $N_1 = \ceil{ \frac{n^{\frac{d}{2(\tau+d)}}}{\log n}}$ and $N_2 = \ceil{ \log(n) } $; Then, for large enough $n$, i.e., $n> \max((2eM_1)^2, \text{Pdim}(\mathcal{F}_{\text{DNN}}))$, we have that $ \left\|\widehat{H} - H_0\right\|_{L^2(X)}^2\to 0$ with probability at least $1-\exp(-\gamma)$, as long as $\gamma = o(n)$. Under the further restriction of the modulus of continuity on $H_0$, we have that with probability at least $1-\exp(-n^{\frac{d+p-1}{\tau+d+p}})$:
\begin{equation}
   \left\|\widehat{H} - H_0\right\|_{L^2(X)}^2 \leq 
  C\cdot n^{-\frac{2}{\tau+d}} +  o(n^{-\frac{2}{\tau+d}});~ \text{for}~ d \geq 1; \tau>2; (d)\left\lfloor N_1^{1 / (d)}\right\rfloor \geq N_1+1;
\end{equation}
when $n$ is large enough, we have that with probability at least $1-\exp(-n^{\frac{d}{\tau+d}})$:
\begin{equation}
   \left\|\widehat{H} - H_0\right\|_{L^2(X)}^2 \leq C\cdot n^{-\frac{2}{\tau+d}} + o(n^{-\frac{2}{\tau+d}});~\text{for}~d\geq 1; \tau>2;(d)\left\lfloor N_1^{1 / (d)}\right\rfloor < N_1+1;
\end{equation}
where $C$ is a constant whose value may vary from context. In Theorem 3.1 of \cite{wu2024deep}, the dimension $d$ must be larger or equal to 2 since their estimation procedure is accompanied by a reference random variable. For the situation in which the input dimension is 1, we can prove a similar error bound with the requirement that $\tau$ is larger than 2. 
\end{Theorem}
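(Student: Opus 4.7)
The plan is to run the standard approximation/estimation (bias/variance) split for least-squares estimation in a DNN sieve, invoking separately a DNN approximation result for uniformly continuous functions and an empirical-process concentration inequality controlled by the pseudo-dimension of $\mathcal{F}_{\text{DNN}}$. Since $\widehat{H}$ is the empirical risk minimizer over the truncated class $\{|f_{\text{DNN}}|\leq M_1:f_{\text{DNN}}\in\mathcal{F}_{\text{DNN}}\}$ and $H_0(X)=\mathbb{E}(Y|X)$ by \cref{Lemma:CDE}, the defect $Y-\widehat H(X)$ can be rewritten as $(Y-H_0(X))+(H_0(X)-\widehat H(X))$, so a Pythagoras-type identity combined with the optimality of $\widehat H$ on the sample yields, for any $f\in\mathcal{F}_{\text{DNN}}$,
\begin{equation*}
\|\widehat{H}-H_0\|_{L^2(X)}^{2}\;\leq\; 2\,\|f-H_0\|_{L^2(X)}^{2} \;+\;\mathcal{E}_n(\mathcal{F}_{\text{DNN}}),
\end{equation*}
where $\mathcal{E}_n(\mathcal{F}_{\text{DNN}})$ is a uniform-deviation term comparing empirical and population $L^2$ distances over $\mathcal{F}_{\text{DNN}}$.

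Next I would control the two pieces separately. For the approximation piece, \cref{Lemma:CDE} together with B1 gives that $H_0$ is uniformly continuous on the compact box $\mathcal{X}=[-M_2,M_2]^d$; a DNN approximation result of Shen--Yang--Zhang type (the lemma the paper labels \cref{Lemma:DNNC0f}) then produces a network $f^{*}\in\mathcal{F}_{\text{DNN}}$ with width $W$ and depth $L$ as specified such that $\|f^{*}-H_0\|_{L^{\infty}(\mathcal{X})}\lesssim \omega_{H_0}^{\mathcal{X}}\!\left(2M_2 N_1^{-2/d} N_2^{-2/d}\right)$, and the truncation to $[-M_1,M_1]$ is harmless since $|H_0|\leq M_1$. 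In particular, if $H_0$ lies in an appropriate H\"older-type class with exponent related to $\tau$, this bound sharpens to $O(n^{-2/(\tau+d)})$ after plugging in $N_1=\lceil n^{d/(2(\tau+d))}/\log n\rceil$ and $N_2=\lceil\log n\rceil$; if only continuity is assumed, the same expression still tends to $0$, which is enough for the qualitative $L^2$ convergence statement.

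For the estimation piece I would invoke the Bartlett--Harvey--Liaw--Mehrabian bound $\text{Pdim}(\mathcal{F}_{\text{DNN}})=O(WL\log(WL))$ and a standard least-squares deviation inequality (e.g.\ of the Gy\"orfi--Kohler--Krzy\.zak--Walk or Farrell--Liang--Misra flavour), which for bounded responses $|Y|\leq M_1$ gives
\begin{equation*}
\mathcal{E}_n(\mathcal{F}_{\text{DNN}})\;\lesssim\;\frac{M_1^{2}\,\text{Pdim}(\mathcal{F}_{\text{DNN}})\log n+\gamma}{n}
\end{equation*}
with probability at least $1-e^{-\gamma}$, provided $n>\max\{(2eM_1)^{2},\text{Pdim}(\mathcal{F}_{\text{DNN}})\}$ as assumed. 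Substituting the prescribed $W,L$ shows that this term is of the same polynomial order $n^{-2/(\tau+d)}$ as the approximation error (up to logarithmic factors that are absorbed into the constant). Balancing the two pieces and choosing $\gamma$ to scale with $n^{(d+p-1)/(\tau+d+p)}$ or $n^{d/(\tau+d)}$ yields exactly the two probability statements in the theorem, while for the qualitative consistency claim any $\gamma=o(n)$ suffices.

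The main obstacle is ensuring that the truncation and the H\"older refinement are handled compatibly: the $L^{\infty}$ approximation guarantee from \cref{Lemma:DNNC0f} must survive the clip-to-$[-M_1,M_1]$ step, and the pseudo-dimension bound must be invoked on the truncated class rather than the raw one. A second subtlety is the $d=1$ edge case, which falls outside the original argument of \cite{wu2024deep} because their construction uses a reference variable that inflates the effective dimension; I would patch this by using a one-dimensional polygonal approximation built directly with ReLU networks of width $W$ and depth $L$ as above, so that the same bias bound $\omega_{H_0}^{\mathcal{X}}(2M_2 N_1^{-2}N_2^{-2})$ holds. After these adjustments the rest is bookkeeping of constants.
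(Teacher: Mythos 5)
The paper never actually proves this result: it is stated in the appendix as ``one variant of Theorem 3.1 of \cite{wu2024deep}'' and invoked directly, with only a remark that the $d=1$ edge case can be handled similarly. Your reconstruction is nevertheless the right shape: the Farrell--Liang--Misra--style oracle decomposition into an approximation term and an estimation term, with the approximation term controlled by the Shen--Yang--Zhang ReLU result (the paper's \cref{Lemma:DNNC0f}) and the estimation term controlled by the Bartlett--Harvey--Liaw--Mehrabian pseudo-dimension bound plus a Bernstein/localization concentration inequality, is exactly the template that \cite{wu2024deep} builds on, and plugging $N_1=\lceil n^{d/(2(\tau+d))}/\log n\rceil$, $N_2=\lceil\log n\rceil$ into both pieces does balance them at order $n^{-2/(\tau+d)}$.

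Two details deserve more care than the sketch gives them. First, the displayed estimation bound $\mathcal{E}_n\lesssim (M_1^2\,\text{Pdim}\cdot\log n+\gamma)/n$ requires a \emph{localized} (self-bounding/Bernstein) empirical process argument on the squared loss; a crude uniform-deviation bound over $\mathcal{F}_{\text{DNN}}$ only yields a $n^{-1/2}$ rate, which would not match the approximation term. Second, the probability exponent $n^{(d+p-1)/(\tau+d+p)}$ involves a parameter $p$ carried over from \cite{wu2024deep} (the dimension of their reference variable); when you drop the reference variable to cover $d\ge 1$ you must state which $p$ you are using (effectively $p=1$), otherwise the two probability statements are not well defined. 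These are bookkeeping rather than conceptual gaps, but they matter when translating ``restriction of the modulus of continuity'' into the quantitative H\"older-type exponent $\tau$.
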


\begin{Lemma}[Theorem 4.3 of \cite{shen2021deep}]\label{Lemma:DNNC0f}
Let $g$ be a uniformly continuous function defined on $E \subseteq[-M_2, M_2]^d$. For arbitrary $N_1 \in \mathbb{N}^{+}$ and $N_2 \in \mathbb{N}^{+}$, there exists a function $\phi$ implemented by a DNN with width $3^{d+3} \max \left\{d\left[N_1^{1 / d}\right\rfloor, N_1+1\right\}$ and depth $12 N_2+14+2 d$ such that
$$
\|g-\phi\|_{L^{\infty}(E)} \leq 19 \sqrt{d} \omega_g^E\left(2 M_2 N_1^{-2 / d} N_2^{-2 / d}\right);
$$
where $\omega_g^E(r)$ is the so-called modulus of continuity of $g$ on a subset $E$ belongs to the input space $\mathcal{X}$:
$$
\omega_g^E(r):=\sup \left\{\left|g\left(x_1\right)-g\left(x_2\right)\right|; d_\mathcal{X}\left(x_1, x_2\right) \leq r, x_1, x_2 \in E\right\},~\text{for any }~ r \geq 0.
$$
$d_\mathcal{X}(\cdot,\cdot)$ is a metric define in $\mathcal{X}$. 
\end{Lemma}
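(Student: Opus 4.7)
The plan is to prove \cref{Lemma:DNNC0f} constructively by exhibiting a ReLU DNN $\phi$ of the prescribed width and depth whose output agrees, up to a controlled error, with a piecewise-constant surrogate of $g$ on a fine grid; the $L^{\infty}$ bound then follows from the modulus of continuity of $g$ evaluated at the grid scale. First I would fix the resolution $\delta := 2 M_2 N_1^{-2/d} N_2^{-2/d}$ and partition $[-M_2, M_2]^d$ into axis-aligned sub-cubes of side length $\delta$, producing on the order of $(M_2/\delta)^d = O(N_1^{2} N_2^{2})$ cells. Writing $x_{\text{cell}}$ for the center of the cube containing $x$ and setting $\widetilde{g}(x) := g(x_{\text{cell}})$, the modulus of continuity immediately yields $\|g - \widetilde{g}\|_{L^{\infty}(E)} \le \omega_g^E(\sqrt{d}\,\delta)$, so the remaining problem is to realise $\widetilde{g}$ (or a close approximant) by a ReLU DNN respecting the architectural budget.

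Next I would assemble $\phi$ from two coupled stages, each consuming a controllable fraction of the width/depth budget. Stage A is a coordinatewise quantiser: differences of shifted ReLU units generate sawtooth and step functions, and iterating their compositions lets a depth-$O(N_2)$, width-$O(N_1)$ sub-network map each coordinate $x_j$ to the integer index of its cell along axis $j$; performing this in parallel across the $d$ coordinates explains the $d \lfloor N_1^{1/d}\rfloor$ alternative in the width formula and the $+2d$ term in the depth. Stage B is a bit-extraction look-up sub-network which, conditioned on the $d$ per-coordinate indices produced by Stage A, emits the prescribed real value $g(x_{\text{cell}})$. The reason such a look-up fits into depth $12 N_2 + 14$ is that the sawtooth-composition trick essentially doubles the number of distinguishable branches per additional layer, so $O(N_2)$ layers suffice to resolve all $O(N_1^{2} N_2^{2})$ cell indices while a width of $O(N_1)$ stores the associated function values.

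To close out the argument, I would combine the two error budgets. Stage A is exact in the interior of each cell and at worst induces a one-cell drift near cell boundaries, which changes the targeted output by at most $\omega_g^E(\sqrt{d}\,\delta)$; Stage B is exact once the correct index is handed to it, up to an arbitrarily small clipping error determined by the parameter magnitudes. Summing a bounded number of such contributions, absorbing the $\sqrt{d}$ cube-diameter factor into a universal constant, and tracking the explicit numerics of the sawtooth construction delivers the stated bound $19 \sqrt{d}\,\omega_g^E(2 M_2 N_1^{-2/d} N_2^{-2/d})$.

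The hard part will be the quantitative verification of Stage B. Showing that ReLU compositions of depth only $O(N_2)$ and width only $O(N_1)$ can discriminate among $\Theta(N_1^{2} N_2^{2})$ distinct cells \emph{and} emit an independently chosen real value per cell with negligible accrued error requires the delicate bit-extraction construction of Shen--Yang--Zhang; it is precisely the tracing of that construction that pins down the explicit constant $19 \sqrt{d}$, since every boundary-crossing, clipping step, and sawtooth composition contributes a small additive term that must be tallied cell by cell.
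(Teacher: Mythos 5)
The paper does not prove this lemma at all: it is imported verbatim, with its attribution in the theorem header, as Theorem 4.3 of \cite{shen2021deep}. There is therefore no paper proof to compare your sketch against, and a complete proof is genuinely out of scope here.

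That said, your sketch as written has a real gap precisely at the point you flag as hard. The claim that ``Stage A is exact in the interior of each cell and at worst induces a one-cell drift near cell boundaries, which changes the targeted output by at most $\omega_g^E(\sqrt{d}\,\delta)$'' does not hold for the construction you describe. Any ReLU sub-network is continuous and piecewise-linear, so the coordinatewise quantiser cannot jump discretely between integer cell indices: near each cell boundary there is a transition band on which it outputs fractional indices. A bit-extraction look-up network fed a fractional index does not interpolate between the two adjacent stored values --- bit-extraction is highly oscillatory off the integers --- so the output in those bands is not controlled by $\omega_g^E$ but only by $\|g\|_{L^\infty}$. This is the standard ``trifling region'' obstruction: it is harmless for $L^p$ ($p<\infty$) bounds because the bands have small measure, but it is fatal for the $L^\infty$ bound claimed here. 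The resolution in the Shen--Yang--Zhang line of work is a separate idea you never invoke: run roughly $d+1$ shifted copies of the grid in parallel and take a mid-value/median of the resulting networks, arranged so that every $x\in E$ lies in the interior of at least one shifted cell; this costs a constant factor in width and depth (absorbed into $3^{d+3}$ and the $+14$), and it is also part of what produces the explicit $19\sqrt{d}$ prefactor. As you acknowledge, the numerics of that accounting are the crux, and without the shift-and-median mechanism the error estimate in your Stage A/Stage B split does not close.
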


\section{\textsc{Complementary analysis }}\label{Appendix:B}

\subsection{Correction for monotonicity}\label{Subsec:CorrectionforMono}
An important prior step before performing cPI is correcting estimations $\{\widehat{H}_j(X_f)\}_{j=1}^g$ for monotonicity so that the subsequent procedure of building cPI works appropriately. 

In estimating the conditional distribution function, people usually take non-parametric local constant/linear techniques. Moreover, some adjustments have been proposed to fine-tune the kernel estimators, such as \cite{li2007nonparametric} applied a double-smoothing idea to make kernel estimators smooth w.r.t. the $Y$ variable and \cite{hansen2004nonparametric} took a straightforward way to monotonize the kernel estimation. Recently, \cite{das2020nonparametric} proposed several correction ideas for local linear methods and kept its advantage in estimating the CDF at boundary points. 

Motivated by the work of \cite{das2020nonparametric}, we propose three methods to correct (conditional or unconditional) CDF estimations at fixed grid points. Suppose we have CDF estimations $\{\widehat{F}_Z(z_j)\}_{j=1}^g$  for an arbitrary CDF function $F_Z$ evaluated at grid points $\{z_j\}_{j=1}^g$. We can correct values $\{\widehat{F}_Z(z_j)\}_{j=1}^g$ to guarantee the monotonicity in three simple ways as follows: 
\begin{itemize}
    \item[C1] Take $\widetilde{F}_{Z}(z_j) = \min(1, \max(0,\max_{1\leq k\leq j}\{\widehat{F}_Z(z_k)\}))$ for $j = 1, \ldots, g$. Then, we interpolate $\{\widetilde{F}_{Z}(z_j)\}_{j=1}^{g}$ by linear segments as the final estimation and denote it by $\widetilde{F}_{\text{LtoR}}$;

    \item[C2] Take $\widetilde{F}_{Z}(z_j) = \max(0, \min(1,\min_{j\leq k\leq g}\{\widehat{F}_Z(Z_k)\}))$ for $j = g, \ldots, 1$. Then, we interpolate $\{\widetilde{F}_{Z}(z_j)\}_{j=1}^{g}$  by linear segments as the final estimation and denote it by $\widetilde{F}_{\text{RtoL}}$;

    \item[C3] Take the average of the above two correction results, i.e., $\widetilde{F}_{\text{Avg}}(z_j) := (\widetilde{F}_{\text{RtoL}}(z_j) + \widetilde{F}_{\text{LtoR}}(z_j))/2$, for $j = 1,\dots, g$. Then, we make the interpolation to get the final estimation, namely $\widetilde{F}_{\text{Avg}}$.
\end{itemize}



Taking a closer look at these three correction methods, we claim that $\widetilde{F}_{\text{LtoR}}$ will always be larger than or equal to $\widetilde{F}_{\text{RtoL}}$ for all grid points $\{z_j\}_{j=1}^g$. Thus, $\widetilde{F}_{\text{Avg}}$ lies between $\widetilde{F}_{\text{RtoL}}$ and $\widetilde{F}_{\text{LtoR}}$. We summarize this observation in the proposition below.
\begin{Proposition}\label{Prop:Correcmethods}
    $\widetilde{F}_{\text{Avg}}(z_k)$ always lies between $\widetilde{F}_{\text{LtoR}}(z_k)$ and $\widetilde{F}_{\text{RtoL}}(z_k)$, i.e.,   $\widetilde{F}_{\text{RtoL}}(z_k)\leq \widetilde{F}_{\text{Avg}}(z_k) \leq \widetilde{F}_{\text{LtoR}}(z_k)$ for $k = 1,\ldots, g$. 
\end{Proposition}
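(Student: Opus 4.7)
The plan is to reduce the three-way inequality to a single pointwise comparison between $\widetilde{F}_{\text{RtoL}}$ and $\widetilde{F}_{\text{LtoR}}$, since once we have $\widetilde{F}_{\text{RtoL}}(z_k) \leq \widetilde{F}_{\text{LtoR}}(z_k)$, the fact that the arithmetic mean of two comparable numbers lies between them delivers the full chain $\widetilde{F}_{\text{RtoL}}(z_k) \leq \widetilde{F}_{\text{Avg}}(z_k) \leq \widetilde{F}_{\text{LtoR}}(z_k)$ for free. So the whole proof is driven by one inequality.

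To establish $\widetilde{F}_{\text{RtoL}}(z_k) \leq \widetilde{F}_{\text{LtoR}}(z_k)$, I would first strip away the clipping to $[0,1]$ and work with the inner quantities $M_k := \max_{1\leq j\leq k}\widehat{F}_Z(z_j)$ and $m_k := \min_{k\leq j\leq g}\widehat{F}_Z(z_j)$. The key observation is that both extrema include $j=k$ in their range, so $m_k \leq \widehat{F}_Z(z_k) \leq M_k$, giving $m_k \leq M_k$ directly. The outer operations $\max(0,\min(1,\cdot))$ and $\min(1,\max(0,\cdot))$ are both nondecreasing functions on $\mathbb{R}$ with image $[0,1]$, and they agree as set-valued clip-to-$[0,1]$ maps (one can check case by case that $\max(0,\min(1,t)) = \min(1,\max(0,t))$ for every $t \in \mathbb{R}$). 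Applying this common monotone clip to the inequality $m_k \leq M_k$ yields $\widetilde{F}_{\text{RtoL}}(z_k) \leq \widetilde{F}_{\text{LtoR}}(z_k)$.

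With that comparison in hand, the rest is a one-line computation: writing $a = \widetilde{F}_{\text{RtoL}}(z_k)$ and $b = \widetilde{F}_{\text{LtoR}}(z_k)$ with $a \leq b$, we have $\widetilde{F}_{\text{Avg}}(z_k) = (a+b)/2$ and both $(a+b)/2 \geq (a+a)/2 = a$ and $(a+b)/2 \leq (b+b)/2 = b$ hold, completing the bound for every $k \in \{1,\ldots,g\}$.

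I do not anticipate a genuine obstacle here; the only mildly delicate point is verifying that the two forms of clipping $\max(0,\min(1,\cdot))$ and $\min(1,\max(0,\cdot))$ indeed coincide on all of $\mathbb{R}$, which requires a short case split on whether the argument lies below $0$, inside $[0,1]$, or above $1$. If one preferred to avoid that identity, the same conclusion can be reached by bounding $\widetilde{F}_{\text{RtoL}}(z_k)$ above by $\min(1,\max(0,\widehat{F}_Z(z_k)))$ and $\widetilde{F}_{\text{LtoR}}(z_k)$ below by the same quantity, since $m_k \leq \widehat{F}_Z(z_k) \leq M_k$ and each outer clip is monotone in its argument.
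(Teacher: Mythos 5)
Your proposal is correct and follows essentially the same route as the paper's own proof: both arguments hinge on observing that the index $k$ appears in the range of both the running maximum and the running minimum, yielding $m_k \leq \widehat{F}_Z(z_k) \leq M_k$, and then pushing that inequality through the (monotone) clipping to $[0,1]$ before averaging. Your write-up is in fact somewhat more careful than the paper's, which illustrates the inequality only at $z_1$ and does not explicitly verify that the two clipping expressions coincide or that clipping is monotone.
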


\begin{proof}[\textsc{\textbf{Proof of \cref{Prop:Correcmethods}}}]
We consider two situations. If $\{\widehat{F}_Z(z_j)\}_{j=1}^g$ are monotonic w.r.t. $j$, the correction for monotonicity does not play a role. Thus, $\widetilde{F}_{\text{Avg}}(z_k) \equiv \widetilde{F}_{\text{LtoR}}(z_k) \equiv  \widetilde{F}_{\text{RtoL}}(z_k)$ for $k = 1,\ldots, g$. 
    
Otherwise, $\widetilde{F}_{\text{LtoR}} \geq \widetilde{F}_{\text{RtoL}}$ at all points $z_1,\ldots,z_g$. Take the values of $\widetilde{F}_{\text{LtoR}}$ and $\widetilde{F}_{\text{RtoL}}$ at the point $z_1$ as an example, by their definition, $\widetilde{F}_{\text{LtoR}}(z_1) = \min(1, \max(0,\max_{1\leq k\leq 1}\{\widehat{F}_Z(z_k)\}))$ and $\widetilde{F}_{\text{RtoL}}(z_1) = \max(0, \min(1,\min_{1\leq k\leq g}\{\widehat{F}_Z(z_k)\}))$. Obviously, $\min_{1\leq k\leq g}\{\widehat{F}_Z(z_k)\} \leq \max_{1\leq k\leq 1}$ $\{\widehat{F}_Z(z_k)\}$. Then $\widetilde{F}_{\text{RtoL}}(z_1)\leq \widetilde{F}_{\text{LtoR}}(z_1)$. Similarly, we can show such a relationship between $\widetilde{F}_{\text{RtoL}}$ and $\widetilde{F}_{\text{RtoL}}$ for other grid points. 
\end{proof}

In practice, we will routinely apply these three correction methods to create the monotonicity for our conditional CDF estimations. More specifically, according to the C1 approach, we can take $\widetilde{F}_{Y|X_f}(q_j) = \min(1, \max(0,\max_{1\leq k\leq j}$ $\{\widehat{H}_k(X_f)\}))$ for $j = 1, \ldots, g$. Then, we interpolate $\{\widetilde{F}_{Y|X_f}(q_j)\}_{j=1}^{g}$ by linear segments as the final estimation.

\begin{Remark}[Comparison of our correction methods and existing ones]\label{Remark:Comparion}
C1 and C2 approaches are based on the idea proposed by \cite{das2020nonparametric}. In their work, they also advocated a method to correct for monotonicity of CDF estimation by positivizing the density function derived from a double-smoothing local linear estimator and taking the integral and normalizing to recover the estimation of CDF. In our case, due to the discreteness of our estimations, we do not need to repeat such an approach upon taking a derivative and then solving an integral. It turns out that these three correction methods work well in simulation and empirical studies. Moreover, we provide a toy example to illustrate the effects of these three methods on the monotonicity correction in \cref{B1:correctioneffects}. 
\end{Remark}

\subsection{The effects of three correction methods in \cref{Subsec:CorrectionforMono}}\label{B1:correctioneffects}
Denote the simple interpolation of $\{\widehat{H}_j(X_f)\}_{j=1}^g$ on one specific simulated dataset based on Model-1 presented in \cref{Sec:Simulation} by $\widehat{F}$. We plot the oracle CDF function $F$ obtained by Monte Carlo simulation, three corrected CDF estimations ($\widetilde{F}_{\text{LtoR}}$, $\widetilde{F}_{\text{RtoL}}$ and $\widetilde{F}_{\text{Avg}}$) and the naive one $\widehat{F}$ in \cref{fig:effectsofcorrection}. As we can find, the naive CDF estimation $\widehat{F}$ fluctuates severely and the cross-quantile phenomenon appears frequently, i.e., $\widehat{F}(q_j) < \widehat{F}(q_i)$ for $q_j > q_i$. On the other hand, three corrected estimations work well. Moreover, $\widetilde{F}_{\text{Avg}}$ is always located between the curve of $\widetilde{F}_{\text{LtoR}}$ and $\widetilde{F}_{\text{RtoL}}$ and it mimics the oracle CDF curve well, especially for two tails parts. 

\begin{figure}[htbp]
    \centering
    \includegraphics[width=1\linewidth]{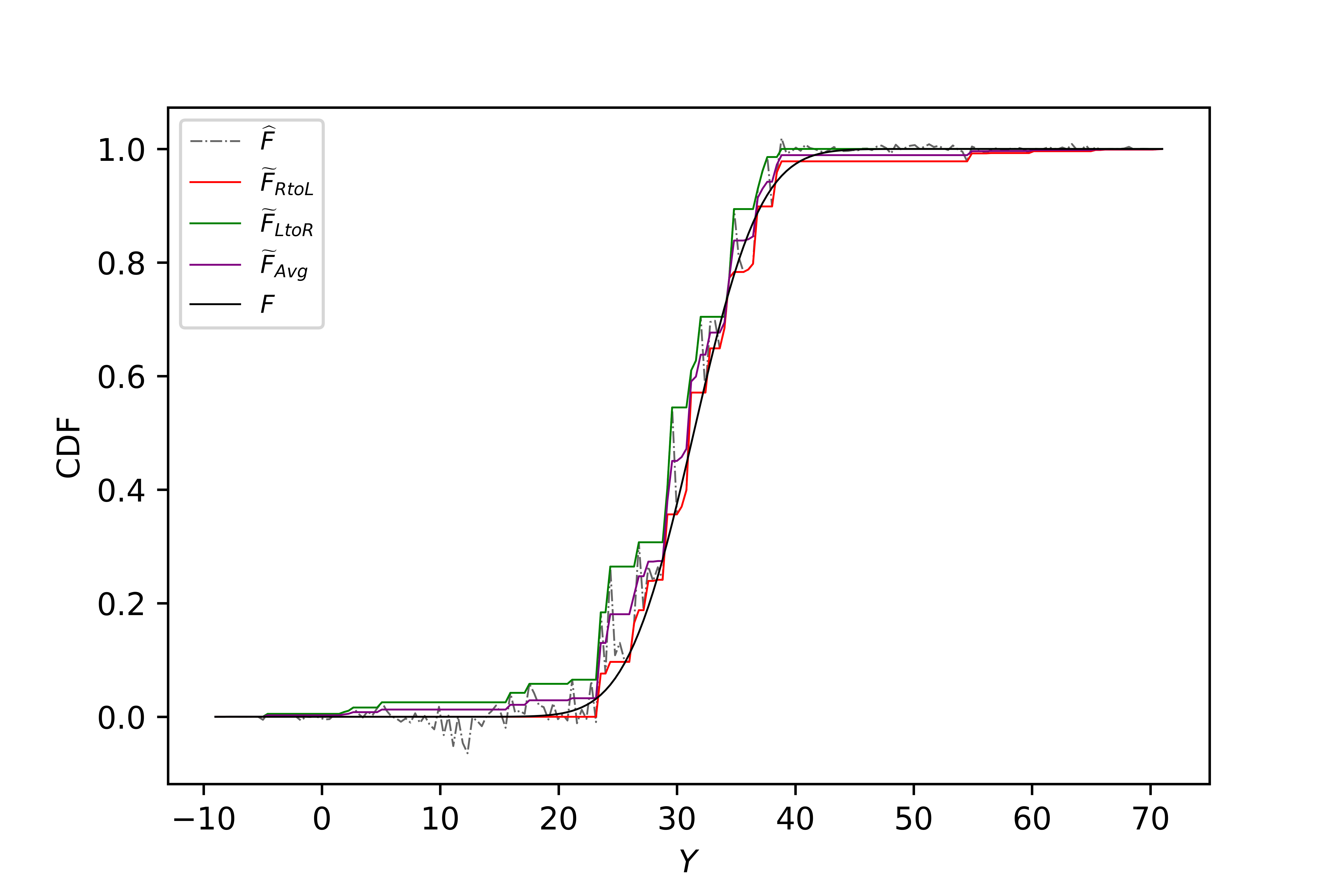}
    \caption{The effects of three monotonicity corrections methods.}
    \label{fig:effectsofcorrection}
\end{figure}
\FloatBarrier
\subsection{The density estimation with the maximum likelihood and least square cross-validation methods}\label{B2:cross-validationmethod}
 The coding of the conditional density estimation relies on two Python packages, \textit{Conditional Density Estimation} and \textit{statsmodels}. The corresponding methodology comes from Section 5.2 of \cite{li2007nonparametric}. In detail, the least square cross-validation method attempts to minimize the integrated square error (ISE):
 $$
\begin{aligned}
I S E & =\int\{\hat{f}(y | x)-f(y | x)\}^2 f(x)  d x d y \\
& =I_{1 n}-2 I_{2 n}+I_{3 n};
\end{aligned}
$$

where,

$$
I_{1 n}=\int \hat{f}(y | x)^2 f(x)  d x d y; I_{2 n}=\int \hat{f}(y | x) f(x, y)  d x d y; I_{3 n}=\int f^2(y | x) f(x) d x d y.
$$
Actually, we just need to minimize $I_{1n} - 2I_{2n}$ since $I_{3n}$ dose not dependent the smoothing bandwidth. Assuming the dependent variable is also continuous and applying the product kernel and univariate Gaussian kernel functions $k(\cdot)$ and $k_0(\cdot)$ as we did in the main text, we can rewrite the $I_{1 n}$ as:
\begin{equation}\label{Eq:cvlsI1n}
    \begin{split}
        I_{1 n} &= \int \hat{f}(y | x)^2 f(x)  d x d y \\
        & = \int \frac{\frac{1}{n^2|\Lambda_x|^2 h^2} \sum_{i_1=1}^n \sum_{i_2=1}^n K\left(\Lambda_x^{-1}(x-X_{i_1})\right) K\left(\Lambda_x^{-1}(x-X_{i_2})\right)  k_0\left(\frac{y-Y_{i_1}}{h}\right)  k_0\left(\frac{y-Y_{i_2}}{h}\right)}{\hat{f}(x)^2} f(x)  d x d y \\
        & = \int \hat{G}(x) \frac{f(x)}{\hat{f}(x)^2}  d x\\
        & =\mathbb{E}_X\left[\frac{\hat{G}(X)}{\hat{f}(x)^2} \right];\\
    \end{split}
\end{equation}
where $\hat{f}(x)$ is the marginal density estimator; $\hat{G}(x)$ is defined as below:
$$\hat{G}(x) : = \frac{1}{n^2|\Lambda_x|^2 h^2} \sum_{i_1=1}^n \sum_{i_2=1}^n K\left(\Lambda_x^{-1}(x-X_{i_1})\right) K\left(\Lambda_x^{-1}(x-X_{i_2})\right) \int k_0\left(\frac{y-Y_{i_1}}{h}\right)  k_0\left(\frac{y-Y_{i_2}}{h}\right) dy.
$$
For the last expectation term w.r.t. $X$ in \cref{Eq:cvlsI1n}, we can estimate by the leave-one-out estimators, i.e., approximating $I_{1n}$ by $\hat{I}_{1 n}$ which is defined below:
$$\hat{I}_{1 n}=\frac{1}{n} \sum_{i=1}^n \frac{\hat{G}_{-i}\left(X_i\right)}{\hat{f}_{-i}\left(X_i\right)^2}.$$
For example, 
$$
\hat{f}_{-i}\left(X_i\right) = \frac{1}{(n-1) |\Lambda_x|} \sum_{j=1, j \neq i}^n K\left(\Lambda_x^{-1}(X_i-X_j)\right).
$$
In practice, the denominator of $\hat{I}_{1 n}$ is a squared term of $\hat{f}_{-i}\left(X_i\right)$ that could be 0 numerically when the sample size is small. Then, the optimization procedure to find the optimal bandwidth with \textit{optimize.fmin} function in Python is unstable. On the other hand, for the maximum likelihood cross-validation, we do not have such a squared term in the denominator, so it is more numerically stable than the least square cross-validation method. For a complete comparison, the objective function of maximum likelihood cross-validation is presented below:
$$
\mathcal{L}=\sum_{i=1}^n \ln \hat{f}_{-i}\left(Y_i \mid X_i\right) = 
 \sum_{i=1}^n \ln  \left(\hat{f}_{-i}\left(X_i, Y_i\right) / \hat{f}_{-i}\left(X_i\right) \right);
$$
where $\hat{f}_{-i}\left(X_i, Y_i\right)$ and $\hat{f}_{-i}\left(X_i\right)$ are leave-one-out kernel estimators of $f\left(X_i, Y_i\right)$ and $f\left(X_i\right)$, respectively.

\vskip 0.2in
\bibliography{sample}

\end{document}